\preto\tabular{\setcounter{magicrownumbers}{0}}
\newcounter{magicrownumbers}
\newcommand\rownumber{\stepcounter{magicrownumbers}\arabic{magicrownumbers}}
\def\rownumber{}
\newtheorem{observation}{Observation}
\newtheorem{lemma}{Lemma}
\newtheorem{theorem}{Theorem}
\newcommand{\edgeScale}{0.7}
\newcommand{\nodeScale}{0.8}
\newcommand{\myEdge}[2]{ \tikz[baseline=-1pt]{
\draw[#2,line width=0.3pt] (0,0) -- ++(0.6,0) node[anchor=base, yshift=3pt, pos=0.5] {\scalebox{\edgeScale}{$#1$}};
}}
\newcommand{\escale}[1]{\ensuremath{\textbf{\scalebox{0.7}{#1}}}}
\newcommand{\nscale}[1]{\ensuremath{\textbf{\scalebox{0.6}{#1}}}}
\newcommand{\edge}[1]{\myEdge{#1}{->}}
\newcommand{\fbedge}[1]{\myEdge{#1}{<->}}
\newcommand{\nop}{\ensuremath{\textsf{nop}}}
\newcommand{\inp}{\ensuremath{\textsf{inp}}}
\newcommand{\out}{\ensuremath{\textsf{out}}}
\newcommand{\set}{\ensuremath{\textsf{set}}}
\newcommand{\res}{\ensuremath{\textsf{res}}}
\newcommand{\swap}{\ensuremath{\textsf{swap}}}
\newcommand{\free}{\ensuremath{\textsf{free}}}
\newcommand{\used}{\ensuremath{\textsf{used}}}
\title{Tracking Down the Bad Guys: \emph{Reset} and \emph{Set} Make Feasibility for Flip-Flop Net Derivatives NP-complete}
\author{Ronny Tredup
\institute{Universit\"at Rostock, Institut f\"ur Informatik, Theoretische Informatik,\\ Albert-Einstein-Stra\ss e 22, 18059, Rostock}\email{ronny.tredup@uni-rostock.de}}
\begin{document}
\maketitle

\begin{abstract}
Boolean Petri nets are differentiated by types of nets $\tau$ based on which of the interactions \nop, \inp, \out, \set, \res, \swap, \used, and \free\ they apply or spare.
The synthesis problem relative to a specific type of nets $\tau$ is to find a boolean $\tau$-net $N$ whose reachability graph is isomorphic to a given transition system $A$.
The corresponding decision version of this search problem is called \emph{feasibility}.
Feasibility is known to be polynomial for all types of \emph{flip flop} derivates defined by $\{\nop, \swap\}\cup \omega$, $\omega \subseteq \{\inp, \out, \used, \free\}$.
In this paper, we replace $\inp, \out$ by $\res,\set$ and show that feasibility becomes NP-complete for $\{\nop, \swap\}\cup\omega $ if $\omega \subseteq \{\res, \set, \used, \free\}$ such that $\omega \cap \{\set, \res\}\not=\emptyset$ and $\omega \cap \{\used, \free\}\not=\emptyset$.
The reduction guarantees a low degree for $A$'s states and, thus, preserves hardness of feasibility even for considerable input restrictions.
\end{abstract}


\section{Introduction}\label{sec:introduction}
Boolean Petri nets have been widely regarded as a fundamental model for concurrent systems.
These Petri nets allow at most one token per place in every reachable marking.
Accordingly, a place $p$ can be regarded as a boolean condition which is \emph{true} if $p$ contains a token and is \emph{false} if $p$ is empty, respectively.
A place $p$ and a transition $t$ of a boolean Petri net are set in relation by one of the following (boolean) \emph{interactions}: \emph{no operation} (\nop), \emph{input} (\inp), \emph{output} (\out), \set, \emph{reset} (\res), \emph{inverting} (\swap), \emph{test  if true} (\used), and \emph{test if false} (\free).
An interaction defines which pre-condition $p$ has to satisfy to activate $t$ and it determines $p$'s post-condition after $t$ has fired:
\inp\ (\out) mean that $p$ has to be \emph{true} (\emph{false}) to allow $t$'s firing and if $t$ fires then $p$ become \emph{false} (\emph{true}).
The interaction \free\ (\used) says that if $t$ is activated then $p$ is \emph{false} (\emph{true}) and $t$'s firing has no impact on $p$.
The other interactions \nop, \set, \res , \swap\ are pre-condition free, that is, neither \emph{true} nor \emph{false} prevent $t$'s firing.
Moreover, \nop\ means that the firing of $t$ has no impact and leaves $p$'s boolean value unchanged.
By \res\ (\set), $t$'s firing determine $p$ to be \emph{false} (\emph{true}).
Finally, \swap\ says that if $t$ fires then it inverts $p$'s boolean value.

Boolean Petri nets are differentiated by types of nets $\tau$ accordingly to the boolean interactions they allow.
Since we have eight interactions to choose from, this results in a total of 256 different types.
So far research has explicitly defined seven of them:
\emph{Elementary net systems} $\{\nop, \inp, \out\}$~\cite{DBLP:conf/ac/RozenbergE96},  \emph{Contextual nets} $\{\nop, \inp, \out, \used, \free\}$~\cite{DBLP:journals/acta/MontanariR95}, \emph{event/condition nets} $\{\nop, \inp, \out, \used\}$~\cite{DBLP:series/txtcs/BadouelBD15}, \emph{inhibitor nets} $\{\nop, \inp, \out, \free\}$~\cite{DBLP:conf/apn/Pietkiewicz-Koutny97}, \emph{set nets} $\{\nop, \inp, \set, \used\}$~\cite{DBLP:journals/acta/KleijnKPR13}, \emph{trace nets} $\{\nop, \inp, \out, \set, \res, \used, \free\}$~\cite{DBLP:journals/acta/BadouelD95}, and \emph{flip flop nets} $\{\nop, \inp, \out, \swap\}$~\cite{DBLP:conf/stacs/Schmitt96}.

Boolean net synthesis relative to a specific type of nets $\tau$ is the challenge to find for a given transition system $A$ (TSs, for short) a boolean $\tau$-net $N$ whose reachability graph is isomorphic to $A$.
The corresponding decision version, called $\tau$-\emph{feasibility}, asks if for a given TS $A$ a searched $\tau$-net exists.
This paper investigates the computational complexity of feasibility depending on the target type of nets $\tau$.
The complexity of $\tau$-feasibility has originally been investigated for elementary net systems~\cite{DBLP:journals/tcs/BadouelBD97}, where it is NP-complete to decide if general TSs can be synthesized.
In~\cite{DBLP:conf/concur/TredupR18,DBLP:conf/apn/TredupRW18} this has been confirmed even for considerably restricted input TSs.
Further boolean net types have been investigated in~\cite{DBLP:conf/tamc/TredupR19,DBLP:conf/apn/TredupR19}, cf. Table~\ref{tab:summary}. 

\begin{figure}[t!]\centering
\begin{tabular}{@{\makebox[1.5em][r]{\rownumber\space}}| l | c |c}
Type of net $\tau$ & Complexity status & \#
\gdef\rownumber{\stepcounter{magicrownumbers}\arabic{magicrownumbers}} \\ \hline
$\tau = \{\nop, \res\} \cup \omega$, $\omega \subseteq \{\inp, \used, \free\}$ & polynomial time & 8 \\
 $\tau = \{\nop, \set\} \cup \omega$, $\omega \subseteq \{\out, \used, \free\}$ & polynomial time  & 8 \\
$\tau = \{\nop, \swap\} \cup \omega$, $\omega \subseteq \{\inp, \out, \used, \free\}$ & polynomial time  & 16 \\
$\tau = \{\nop\} \cup \omega$, $\omega \subseteq \{\used, \free\}$ & polynomial time  & 4 \\ 
$\tau = \{\nop, \inp, \free\}$ or $\tau = \{\nop, \inp, \used, \free\}$ & NP-complete & 2 \\ 
 $\tau = \{\nop, \out, \used\}$ or  $\tau = \{\nop, \out, \used, \free\}$ & NP-complete  & 2 \\ 
$\tau = \{\nop, \set, \res \} \cup \omega$,  $\emptyset \not=\omega \subseteq \{\used, \free\}$& NP-complete  & 3 \\ \hline
$\tau = \{\nop, \inp, \out \} \cup \omega$,  $\omega \subseteq \{\used, \free\}$& NP-complete  & 4  \\ 
 $\tau = \{\nop, \inp, \res,\swap \} \cup \omega$,  $\omega \subseteq \{\used, \free\}$& NP-complete  &  4 \\ 
 $\tau = \{\nop, \out, \set,\swap \} \cup \omega$,  $\omega \subseteq \{\used, \free\}$& NP-complete  &  4 \\ 
$\tau = \{\nop, \inp, \set \} \cup \omega$,  $\omega \subseteq \{\out, \res,\swap,\used, \free\}$& NP-complete  & 24+8 \\ 
$\tau = \{\nop, \out, \res \} \cup \omega$,  $\omega \subseteq \{\inp, \set,\swap,\used, \free\}$& NP-complete  & 24+8 \\ \hline
$\tau = \{\nop,\set,\swap,\free\}$, $\tau = \{\nop,\res,\swap,\used\}$ & NP-complete  & 2  \\ 
$\tau = \{\nop,\set,\swap,\used\}$, $\tau = \{\nop,\res,\swap,\free\}$ & NP-complete  & 2  \\ 
$\tau = \{\nop,\set,\swap,\free, \used\}$, $\tau = \{\nop,\res,\swap,\free, \used\}$ & NP-complete  & 2  \\ 
$\tau = \{\nop,\set,\res, \swap\}\cup \omega$, $\emptyset\not=\omega\subseteq \{\free,\used\}$ & NP-complete  & 3  \\ 
\end{tabular}
\caption{Summary of the complexity results for boolean net synthesis.
(1-7): Results of \cite{DBLP:conf/tamc/TredupR19} reestablishing the result for flip flop nets \cite{DBLP:conf/stacs/Schmitt96}. 
(8-12): Results of \cite{DBLP:conf/apn/TredupR19} reestablishing the result for elementary net systems~\cite{DBLP:journals/tcs/BadouelBD97}.
The rows 11 and 12 intersect in eight supersets of $\{\nop,\inp,\out,\set, \res\}$.
(13-16): Results of this paper.
Notice that isomorphic types occur in the same row.
Altogether, this paper discovers 9 \emph{new} types with an NP-hard synthesis problem.
}\label{tab:summary}
\end{figure}
In particular, feasibility is NP-complete for the seven types of nets $\{\nop, \inp, \free\}$ with optional \used\ and $\{\nop, \out, \used\}$ with optional \free\ and $\{\nop, \set, \res\}$ extended with at least one of \used\ and \free\ \cite{DBLP:conf/tamc/TredupR19}, cf. Table~\ref{tab:summary}~(5-7).

On the contrary, \cite{DBLP:conf/stacs/Schmitt96} shows that feasibility for flip flop nets, which simply extend elementary net systems by the \swap\ interaction, is decidable in polynomial time.
Moreover, $\swap$ preserves tractability of feasibility for all boolean types which includes \nop\ but excludes $\res$ and $\set$, cf. Table~\ref{tab:summary}~(3).
In view of these results, we are interested in the interactions that cause the difference between tractability and intractability of feasibility for boolean Petri nets.
In this paper, we investigate the situation where $\{\nop, \swap\}\cup \omega$ is extended by a subset $\omega\subseteq\{\res, \set, \used, \free \}$ such that: 
\begin{enumerate}
\item
The resulting type has not been investigated in \cite{DBLP:conf/tamc/TredupR19}, cf. Table~\ref{tab:summary}~(3), that is, $\omega\cap \{\res, \set\}\not=\emptyset$.

\item
The resulting type contains at least one interaction which does not allow unconditioned firing, that is, $\omega\cap \{\used, \free\}\not=\emptyset$.
\end{enumerate}
Note that $\{\res, \set, \used, \free \}$ and $\{\inp, \out, \used, \free \}$ only differ in the replacement of $\inp, \out$ by $\res, \set$.
However, in this paper, we show that feasibility becomes difficult for the 9 extensions of $\{\nop, \swap\}$ by at least one interaction from both, $\{\set, \res\}$ and $\{\used, \free\}$.
For one thing, as feasibility for $\{\nop, \swap\}\cup \omega$, $\omega\subseteq\{\inp, \out, \used, \free \}$, is doable in polynomial time \cite{DBLP:conf/stacs/Schmitt96,DBLP:conf/tamc/TredupR19}, our result exhibits that $\res$ and $\set$ are the \emph{bad guys} which make the decision problem computationally complex.
For another thing, it shows that in this context the interactions $\res$ and $\set$ are strictly more powerful than $\inp$ and $\out$ as their takeover allow to encode NP-complete problems.

Our result is robust with respect to considerable input restrictions.
In particular, we introduce \emph{grade} as a parameter of TSs as it has been done in~\cite{DBLP:conf/concur/TredupR18,DBLP:conf/apn/TredupR19,DBLP:conf/apn/TredupRW18}.
In a $g$-grade TS the number of outgoing and incoming transitions, respectively, is limited by $g$ for every state.
This is a very natural parameter as most parts of real world TSs are often heavily restricted with respect to their grade.
According to~\cite{jordiCortadella2017}, benchmarks in digital hardware design, for instance, often show TSs with few choices, that is, with a low grade.
Nevertheless, our method demonstrates that restricting the input to TSs with small grade has little influence on the complexity of feasibility for all 9 covered net types.
In particular, we show that feasibility for net type extensions of $\{\nop, \swap\}$ by at least one from both $\{\set, \res\}$ and $\{\used, \free\}$ remains NP-complete if only $g$-grade TSs are considered for any $g \geq 2$.

To simplify our argumentation we detach our notions from Petri nets and focus on TSs.
For this purpose, we use the well known equality between feasibility and the conjunction of the so called state separation property (SSP) and the event state separation property (ESSP)~\cite{DBLP:series/txtcs/BadouelBD15}, which are solely defined on the input TSs.
The presented polynomial time reduction translates the NP-complete cubic monotone one-in-three $3$-SAT problem~\cite{DBLP:journals/dcg/MooreR01} into the ESSP of the considered 9 boolean net types.
As we also make sure that given boolean expressions $\varphi$ are transformed to TSs $A(\varphi)$ where the ESSP relative to the considered type implies the SSP, we always show the NP-completeness of the ESSP and feasibility at the same time.
Instead of 9 individual proofs, our approach covers all cases by just two reductions following a common pattern.

\section{Preliminary Notions}\label{sec:preliminaries}%

This section provides short formal definitions of all preliminary notions used in the paper.
A \emph{transition system} (TS, for short) $A=(S,E, \delta)$ is a directed labeled graph with nodes $S$, events $E$ and partial transition function $\delta: S\times E \longrightarrow S$, where $\delta(s,e)=s'$ is interpreted as $s\edge{e}s'$.
By $s\fbedge{e}s'$ we denote the fact that $s\edge{e}s'$ and $s'\edge{e}s$ are present.
An event $e$ \emph{occurs} at a state $s$, denoted by $s\edge{e}$, if $\delta(s,e)$ is defined.
An \emph{initialized} TS $A=(S,E,\delta, s_0)$ is a TS with a distinct state $s_0\in S$.
TSs in this paper are \emph{deterministic} by design as their state transition behavior is given by a (partial) function.
Initialized TSs are also required to make every state \emph{reachable} from $s_0$ by a directed path.

\begin{figure}[H]\centering
\begin{tabular}{c|c|c|c|c|c|c|c|c}
$x$ & $\nop(x)$ & $\inp(x)$ & $\out(x)$ & $\set(x)$ & $\res(x)$ & $\swap(x)$ & $\used(x)$ & $\free(x)$\\ \hline
$0$ & $0$ & & $1$ & $1$ & $0$ & $1$ & & $0$\\
$1$ & $1$ & $0$ & & $1$ & $0$ & $0$ & $1$ & \\
\end{tabular}
\caption{
All interactions in $I$.
An empty cell means that the column's function is undefined on the respective $x$.
The entirely undefined function is missing in $I$.
}\label{fig:interactions}
\end{figure}

A (boolean) \emph{type of nets} $\tau=(\{0,1\},E_\tau,\delta_\tau)$ is a TS such that $E_\tau$ is a subset of the boolean interactions:
$E_\tau \subseteq I = \{\nop, \inp, \out, \set, \res, \swap, \used, \free\}$. 
The interactions $i \in I$ are binary partial functions $i: \{0,1\} \rightarrow \{0,1\}$ as defined in the listing of Figure~\ref{fig:interactions}.
For all $x\in \{0,1\}$ and all $i\in E_\tau$ the transition function of $\tau$ is defined by $\delta_\tau(x,i)=i(x)$.
Notice that $I$ contains all possible binary partial functions $\{0,1\} \rightarrow \{0,1\}$ except for the entirely undefined function $\bot$. 
Even if a type $\tau$ includes $\bot$, this event can never occur, so it would be useless.
Thus, $I$ is complete for deterministic boolean types of nets, and that means there are a total of 256 of them.
By definition, a (boolean) type $\tau$ is completely determined by its event set $E_\tau$.
Hence, in the following we will identify $\tau$ with $E_\tau$, cf. Figure~\ref{fig:types}.

\begin{figure}[h!]

\centering
\begin{minipage}{1\textwidth}
\centering
\begin{tikzpicture}[scale = 1.2]
\begin{scope}
\node (0) at (0,0) {\scalebox{\nodeScale}{$0$}};
\node (1) at (2,0) {\scalebox{\nodeScale}{$1$}};

\path (0) edge [->, out=-120,in=120,looseness=5] node[left, align =left] {\scalebox{\edgeScale}{$\nop$} \\ \scalebox{\edgeScale}{\free}} (0);
\path (1) edge [<-, out=60,in=-60,looseness=5] node[right, align=left] {\scalebox{\edgeScale}{$\nop$} \\ \scalebox{\edgeScale}{$\set$}  } (1);

\path (0) edge [<-, bend right= 30] node[below] {\scalebox{\edgeScale}{$\swap$}} (1);
\path (0) edge [->, bend left= 30] node[above] {\scalebox{\edgeScale}{\set, \swap}} (1);
\end{scope}
\begin{scope}[xshift=4.5cm]
\node (0) at (0,0) {\scalebox{\nodeScale}{$0$}};
\node (1) at (2,0) {\scalebox{\nodeScale}{$1$}};

\path (0) edge [->, out=-120,in=120,looseness=5] node[left, align =left] {\scalebox{\edgeScale}{\nop} \\ \scalebox{\edgeScale}{\res} } (0);
\path (1) edge [<-, out=60,in=-60,looseness=5] node[right, align=left] {\scalebox{\edgeScale}{$\nop$} \\ \scalebox{\edgeScale}{\used}} (1);

\path (0) edge [<-, bend right= 30] node[below] {\scalebox{\edgeScale}{\res, \swap}} (1);
\path (0) edge [->, bend left= 30] node[above] {\scalebox{\edgeScale}{\swap}} (1);
\end{scope}
\end{tikzpicture}
\caption{
Left: $\tau=\{\nop, \set, \swap, \free\}$.
Right: $\tilde{\tau}=\{\nop, \res, \swap, \used\}$.
$\tau$ and $\tilde{\tau}$ are isomorphic.
The isomorphism $\phi: \tau\rightarrow \tilde{\tau}$ is given by $\phi(s)=1-s$ for $s\in \{0,1\}$, $\phi(i)=i$ for $i\in \{\nop,\swap\}$, $\phi(\res)=\set$ and $\phi(\free)=\used$.}
\label{fig:types}
\end{minipage}

\vspace{0.5cm}
\begin{minipage}{1\textwidth}%
%
\centering
\begin{minipage}{0.2\textwidth}
\begin{tikzpicture}[new set = import nodes]
\begin{scope}[nodes={set=import nodes}]

\node (0) at (0,0) {\scalebox{\nodeScale}{$s_0$}};
\node (1) at (1.5,0) {\scalebox{\nodeScale}{$s_1$}};
\node (2) at (3,0) {\scalebox{\nodeScale}{$s_2$}};
\graph {
 (0)->[ "\escale{$a$}"] (1)<->["\escale{$a$}"] (2);%
};
\end{scope}
\end{tikzpicture}
\caption*{TS $A_1$.}
\end{minipage}
\hspace{0.3cm}
\begin{minipage}{0.2\textwidth}
\begin{tikzpicture}[new set = import nodes]
\begin{scope}[nodes={set=import nodes}]

\node (0) at (0,0) {\scalebox{\nodeScale}{$s_0$}};
\node (1) at (1.5,0) {\scalebox{\nodeScale}{$s_1$}};
\node (2) at (3,0) {\scalebox{\nodeScale}{$s_2$}};
\graph {
 (0)->[ "\escale{$a$}"] (1)->["\escale{$a$}"] (2);%
};
\end{scope}
\end{tikzpicture}
\caption*{TS $A_2$.}
\end{minipage}
\hspace{0.3cm}
\begin{minipage}{0.2\textwidth}
\begin{tikzpicture}[new set = import nodes]
\begin{scope}[nodes={set=import nodes}]

\node (0) at (0,0) {\scalebox{\nodeScale}{$s_0$}};
\node (1) at (1.5,0) {\scalebox{\nodeScale}{$s_1$}};
\node (2) at (3,0) {\scalebox{\nodeScale}{$s_2$}};
\graph {
 (0)<->[ "\escale{$a$}"] (1)<->["\escale{$a$}"] (2);%
};
\end{scope}
\end{tikzpicture}
\caption*{TS $A_3$.}
\end{minipage}
\hspace{0.3cm}
\begin{minipage}{0.2\textwidth}
\begin{tikzpicture}[new set = import nodes]
\begin{scope}[nodes={set=import nodes}]

\node (0) at (0,0) {\scalebox{\nodeScale}{$s_0$}};
\node (1) at (1,0) {\scalebox{\nodeScale}{$s_1$}};
\node (2) at (2,0) {\scalebox{\nodeScale}{$s_2$}};
\node (3) at (3,0) {\scalebox{\nodeScale}{$s_3$}};
\graph {
 (0)->[ "\escale{$a$}"] (1)->["\escale{$a$}"] (2)->["\escale{$a$}"] (3);%
};
\end{scope}
\end{tikzpicture}
\caption*{TS $A_4$.}
\end{minipage}
\caption{Let $\tau=\{\nop,\set, \swap, \free\}$.
The TSs $A_1,\dots, A_4$ give examples for the presence and absence of the $\tau$-(E)SSP, respectively:
TS $A_1$ has the $\tau$-ESSP as $a$ occurs at every state.
It has also the $\tau$-SSP: 
The region $R=(sup, sig)$ where $sup(s_0)=sup(s_2)=1$, $sup(s_1)=0$ and $sig(a)=\swap$ separates the pairs $s_0,s_1$ and $s_2, s_1$.
Moreover, the region $R'=(sup', sig')$ where $sup'(s_0)=0$ and $sup'(s_1)=sup'(s_2)=1$ and $sig'(a)=\set$ separates $s_0$ and $s_1$. 
Notice that $R$ and $R'$ can be translated into $\tilde{\tau}$-regions, where $\tilde{\tau}=\{\nop, \res, \swap, \used\}$, via the isomorphism of Figure~\ref{fig:types}.
For example, if $s\in S(A_1)$ and $e\in E(A_1)$ and $sup''(s)=\phi(sup(s))$ and $sig''(e)=\phi(sig(e))$ then the resulting $\tilde{\tau}$-region $R''=(sup'', sig'')$ separates $s_0, s_1$ and $s_2,s_1$.
Thus, $A_1$ is also $\tilde{\tau}$-feasible. 
The other TSs are not $\tau$-feasible:
TS $A_2$ has the $\tau$-SSP but not the $\tau$-ESSP as event $a$ is not inhibitable at the state $s_2$.
TS $A_3$ has the $\tau$-ESSP ($a$ occurs at every state) but not the $\tau$-SSP as $s_1$ and $s_2$ are not separable.
TS $A_4$ has neither the $\tau$-ESSP nor the $\tau$-SSP.
}
\label{fig:regions}
\end{minipage}
\end{figure}

A boolean Petri net $N = (P, T, M_0, f)$ of type $\tau$ ($\tau$-net, for short) is given by finite and disjoint sets $P$ of places and $T$ of transitions, an initial marking $M_0: P\longrightarrow  \{0,1\}$ and a (total) flow function $f: P \times T \rightarrow \tau$. 
The meaning of a boolean net is to realize a certain behavior by firing sequences of transitions. 
In particular, a transition $t \in T$ can fire in a marking $M: P\longrightarrow  \{0,1\}$ if $\delta_\tau(M(p), f(p,t))$ is defined for all $p\in P$.
By firing, $t$ produces the next marking $M' : P\longrightarrow  \{0,1\}$ where $M'(p)=\delta_\tau(M(p), f(p,t))$ for all $p\in P$. 
This is denoted by $M \edge{t} M'$.
Given a $\tau$-net $N=(P, T, M_0, f)$, its behavior is captured by a transition system $A(N)$, called the \emph{reachability graph} of $N$.
The state set of $A(N)$ consists of all markings that, starting from initial state $M_0$, can be reached by firing a sequence of transitions.
For every reachable marking $M$ and transition $t \in T$ with $M \edge{t} M'$ the state transition function $\delta$ of $A$ is defined as $\delta(M,t) = M'$.

Boolean net synthesis for a type $\tau$ is going backwards from input TS $A=(S, E, \delta, s_0)$ to the computation of a $\tau$-net $N$ with $A(N)$ isomorphic to $A$, if such a net exists.
In contrast to $A(N)$, the abstract states $S$ of $A$ miss any information about markings they stand for.
Accordingly, the events $E$ are an abstraction of $N$'s transitions $T$ as they relate to state changes only globally without giving the information about the local changes to places.
After all, the transition function $\delta: S\times E\rightarrow S$ still tells us how states are affected by events.

In this paper, we investigate the computational complexity of the corresponding decision version: 
$\tau$-\emph{feasibility} is the problem to decide the existence of a $\tau$-net $N$ with $A(N)$ isomorphic to the given TS $A$.
To describe $\tau$-feasibility without referencing the searched $\tau$-net $N$, in the sequel, we introduce the $\tau$-\emph{state separation property} ($\tau$-\emph{SSP}, for short) and the $\tau$-\emph{event state separation property} ($\tau$-\emph{ESSP}, for short) for TSs.
In conjunction, they are equivalent to $\tau$-feasibility.
The following notion of $\tau$-regions allows us to define the announced properties, cf. Figure~\ref{fig:regions}.

A (boolean) $\tau$-region of a TS $A=(S, E, \delta, s_0)$ is a pair $(sup, sig)$ of \emph{support} $sup: S \rightarrow  \{0,1\}$ and \emph{signature} $sig: E \rightarrow \tau$ where every transition $s \edge{e} s'$ of $A$ leads to a transition $sup(s) \edge{sig(e)} sup(s')$ of $\tau$.
While a region divides $S$ into the two sets $sup^{-1}(b) = \{s \in S \mid sup(s) = b\}$ for $b \in \{0,1\}$, the events are cumulated by $sig^{-1}(i) = \{e \in E \mid sig(e) = i\}$ for all available interactions $i \in \tau$.
We also use $sig^{-1}(\tau') = \{e \in E \mid sig(e) \in \tau'\}$ for $\tau' \subseteq \tau$.

For a TS $A=(S, E, \delta, s_0)$ and a type of nets $\tau$, a pair of states $s \not= s' \in S$ is $\tau$-\emph{separable} if there is a $\tau$-region $(sup, sig)$ such that $sup(s) \not= \sup(s')$.
Accordingly, $A$ has the $\tau$-SSP if all pairs of distinct states of $A$ are $\tau$-separable. 
Secondly, an event $e \in E$ is called $\tau$-\emph{inhibitable} at a state $s \in S$ if there is a $\tau$-region $(sup, sig)$ where $sup(s) \edge{sig(e)}$ does not hold, that is, the interaction $sig(e) \in \tau$ is not defined on input $sup(s) \in \{0,1\}$.
$A$ has the $\tau$-ESSP if for all states $s \in S$ it is true that all events $e \in E$ that do not occur at $s$, meaning $\neg s \edge{e}$, are $\tau$-inhibitable at $s$.
It is well known from~\cite{DBLP:series/txtcs/BadouelBD15} that a TS $A$ is $\tau$-feasible, that is, there exists a $\tau$-net $N$ with $A(N)$ isomorphic to $A$, if and only if $A$ has $\tau$-SSP \emph{and} the $\tau$-ESSP.
Types of nets $\tau$ and $\tilde{\tau}$ have an isomorphism $\phi$ if $s \edge{i} s'$ is a transition in $\tau$ if and only if $\phi(s) \edge{\phi(i)} \phi(s')$ is one in $\tilde{\tau}$.
By the following lemma, we benefit from the eight isomorphisms that map \nop\ to \nop, \swap\ to \swap, \inp\ to \out, \set\ to \res, \used\ to \free, and vice versa, cf. Figure~\ref{fig:regions}:
\begin{lemma}[Without proof]
\label{lem:isomorphic_types}
If $\tau$ and $\tilde{\tau}$ are isomorphic types of nets then a TS $A$ has the $\tau$-(E)SSP if and only if it has the $\tilde{\tau}$-(E)SSP.
\end{lemma}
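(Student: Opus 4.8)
The plan is to exhibit a bijection between the $\tau$-regions and the $\tilde\tau$-regions of $A$ that respects separation, so that every region witnessing a separation on one side yields a region witnessing the corresponding separation on the other. Since the $\tau$-SSP and the $\tau$-ESSP are each defined purely through the \emph{existence} of separating regions, such a correspondence transports both properties across the isomorphism at once, giving both directions of the claimed equivalence.

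Concretely, let $\phi$ be the isomorphism between $\tau$ and $\tilde\tau$. Because both types share the state set $\{0,1\}$, the restriction of $\phi$ to states is a bijection $\{0,1\}\to\{0,1\}$, and its restriction to events is a bijection $E_\tau\to E_{\tilde\tau}$. Given a $\tau$-region $(sup,sig)$ of $A$, I would define a pair $(\widetilde{sup},\widetilde{sig})$ by $\widetilde{sup}(s)=\phi(sup(s))$ for every $s\in S$ and $\widetilde{sig}(e)=\phi(sig(e))$ for every $e\in E$; note $\widetilde{sup}$ lands in $\{0,1\}$ and $\widetilde{sig}$ lands in $\tilde\tau$. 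First I would check this is a $\tilde\tau$-region: for an arbitrary transition $s\edge{e}s'$ of $A$, the region property of $(sup,sig)$ gives the transition $sup(s)\edge{sig(e)}sup(s')$ in $\tau$, and applying the defining equivalence of $\phi$ turns it into $\phi(sup(s))\edge{\phi(sig(e))}\phi(sup(s'))$, i.e. $\widetilde{sup}(s)\edge{\widetilde{sig}(e)}\widetilde{sup}(s')$, in $\tilde\tau$. Using $\phi^{-1}$ in place of $\phi$ yields the inverse construction, so $(sup,sig)\mapsto(\widetilde{sup},\widetilde{sig})$ is a bijection between the two region sets.

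It then remains to see that this bijection preserves the two kinds of witnesses. For the SSP, a region separates a pair $s\not=s'$ exactly when $sup(s)\not=sup(s')$; since $\phi$ is injective on $\{0,1\}$ this holds iff $\widetilde{sup}(s)=\phi(sup(s))\not=\phi(sup(s'))=\widetilde{sup}(s')$, so $(sup,sig)$ separates $s,s'$ in $\tau$ iff its image separates them in $\tilde\tau$. For the ESSP I would use that the defining equivalence of $\phi$ also preserves definedness: $sig(e)$ is defined on $sup(s)$ (that is, $sup(s)\edge{sig(e)}$ holds) iff $\phi(sig(e))$ is defined on $\phi(sup(s))$. Hence $(sup,sig)$ inhibits $e$ at $s$ iff $(\widetilde{sup},\widetilde{sig})$ inhibits $e$ at $s$. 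Quantifying over all pairs of distinct states and over all non-occurring event--state pairs then yields $\tau$-SSP $\Leftrightarrow$ $\tilde\tau$-SSP and $\tau$-ESSP $\Leftrightarrow$ $\tilde\tau$-ESSP, which is the statement.

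The argument is essentially bookkeeping, and the only point that needs genuine care---the step I would treat as the main obstacle---is justifying the preservation of definedness from the isomorphism: the equivalence ``$s\edge{i}s'$ in $\tau$ iff $\phi(s)\edge{\phi(i)}\phi(s')$ in $\tilde\tau$'' is stated for full transitions, so I would spell out that $i$ is defined on $s$ precisely when some target $s'$ exists, and that the bijectivity of $\phi$ on states lets this existential statement pass to $\phi(s)$ and $\phi(i)$ in both directions. Everything else reduces to the injectivity of $\phi$ on $\{0,1\}$ and to the fact that composing supports and signatures with $\phi$ sends regions to regions.
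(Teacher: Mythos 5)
Your proof is correct and follows exactly the approach the paper intends: the lemma is stated without proof, but the caption of Figure~\ref{fig:regions} sketches precisely your construction, transporting a region $(sup,sig)$ to $(\phi\circ sup,\ \phi\circ sig)$ via the type isomorphism and noting that separation and inhibition are preserved. Nothing further is needed.
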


\section{Our Contribution}\label{sec:main_result}%

The following theorem states our main result and covers 9 new boolean types of nets, cf. Table~\ref{tab:summary}:
\begin{theorem}\label{the:main_result} 
Let $\tau_1 = \{\nop, \set, \swap\}$ and $\tilde{\tau}_1 = \{\nop, \res, \swap\}$.
Deciding $\tau$-feasibility for $g$-grade transition systems is NP-complete if $\tau = \tau' \cup \omega$ for $\tau' \in \{\tau_1, \tilde{\tau}_1, \tau_1\cup  \tilde{\tau}_1\}$ with non empty $\omega \subseteq \{\used, \free\}$ and $g \geq 2$. 
\end{theorem}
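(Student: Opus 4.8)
The plan is to prove both directions of the NP-completeness claim through the characterization of $\tau$-feasibility as the conjunction of the $\tau$-SSP and the $\tau$-ESSP. For membership in NP (which holds for arbitrary, hence a fortiori for $g$-grade, inputs), I would argue as usual: by the cited equivalence, $A$ is $\tau$-feasible iff it has the $\tau$-SSP and the $\tau$-ESSP, and both are certified by $\tau$-regions. A region $(sup,sig)$ is described by $sup\colon S\to\{0,1\}$ and $sig\colon E\to\tau$, so it has size linear in $|S|+|E|$. There are only $O(|S|^2)$ pairs of states to separate and $O(|S|\cdot|E|)$ event--state pairs to inhibit, so a nondeterministic machine guesses one witnessing region per requirement---polynomially many regions of polynomial size---and verifies in polynomial time that each guess is a genuine region (every edge $s\edge{e}s'$ induces a valid $\tau$-edge $sup(s)\edge{sig(e)}sup(s')$) meeting its separation or inhibition demand. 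Hence $\tau$-feasibility lies in NP for each of the nine fixed types.

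For NP-hardness I would reduce from cubic monotone one-in-three $3$-SAT. By Lemma~\ref{lem:isomorphic_types}, the $\res$-based types are isomorphic to $\set$-based ones with $\used$ and $\free$ interchanged, so it suffices to establish hardness for the five isomorphism classes represented by $\set$-based types. Four of these representatives contain $\used$, and exactly one, namely $\{\nop,\set,\swap,\free\}$, contains $\free$ but not $\used$. Accordingly I would give two constructions following a common pattern: a \emph{$\used$-reduction} producing, from a formula $\varphi$, a TS $A(\varphi)$ in which a distinguished event $k$ fails to occur at a distinguished state $s$ and is inhibitable there only via a region with $sup(s)=0$ and $sig(k)=\used$; and a dual \emph{$\free$-reduction} where inhibition forces $sup(s)=1$ and $sig(k)=\free$. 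In both, $A(\varphi)$ is assembled from small variable- and clause-gadgets wired so that a consistent signature assigning $\set$, $\swap$, or $\nop$ to the variable events, together with the support forced by $sig(k)$, exists if and only if $\varphi$ admits a one-in-three model: each clause gadget contributes edges whose $\tau$-consistency pins down exactly one of its three variable events to carry the ``true'' signature. I would keep every state of in- and out-degree at most two, so that $A(\varphi)$ is $2$-grade and therefore $g$-grade for every $g\geq 2$, and I would route the gadgets so that every pair of distinct states is already separated by the very regions built for event inhibition; that is, the $\tau$-ESSP of $A(\varphi)$ entails its $\tau$-SSP, making inhibitability of $k$ at $s$ equivalent to full $\tau$-feasibility.

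I would then verify correctness by two implications. If $\varphi$ has a one-in-three model, I read off the variable signatures, extend them to a region that inhibits $k$ at $s$, and check directly that all remaining events are inhibitable and all state pairs separable, yielding $\tau$-feasibility. Conversely, if $A(\varphi)$ is $\tau$-feasible, then in particular $k$ is inhibited at $s$ by some region, and I would show that the gadget constraints force that region's signature to encode a one-in-three assignment, so $\varphi$ is satisfiable. Since the same $A(\varphi)$ serves each isomorphism partner through Lemma~\ref{lem:isomorphic_types}, the two reductions cover all nine types.

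The main obstacle---the heart of both constructions---is precisely this converse rigidity. I must design the clause gadgets so that $\tau$-consistency enforces the \emph{exactly-one} semantics (ruling out both the ``none'' and the ``two or three'' cases), and, crucially, so that the extra interactions present in the larger types (the $\res$ of $\tau_1\cup\tilde{\tau}_1$, and the unused one of $\used$ and $\free$) grant the region-finder no additional freedom to inhibit $k$ at $s$ without a genuine satisfying assignment. Securing that robustness uniformly across all nine types, while simultaneously maintaining grade $2$ and the implication from the $\tau$-ESSP to the $\tau$-SSP, is where the real work lies; the forward direction and the NP-membership argument are comparatively routine.
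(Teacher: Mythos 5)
Your outline follows the paper's strategy almost exactly: NP-membership by guessing and verifying polynomially many polynomial-size regions; hardness by reduction from cubic monotone one-in-three $3$-SAT; Lemma~\ref{lem:isomorphic_types} used to cut the nine types down to the five $\set$-based representatives, split into the single $\free$-only type $\{\nop,\set,\swap,\free\}$ and the four types containing $\used$ (the paper's $\sigma_1$ and $\sigma_2$); clause gadgets whose $\tau$-consistency pins down exactly one variable per clause; a $2$-grade construction; and the arrangement that the $\tau$-ESSP of the constructed TS entails its $\tau$-SSP. However, what you have written is a plan, not a proof: the entire content of the argument lives in the gadgets ($H_j$, $F_0,F_1,F_2$, $G_j$, $D_\ell$, $T_{i,0},\dots,T_{i,3}$ in the paper), in the rigidity argument showing that any region inhibiting the key event forces $V\subseteq sig^{-1}(\swap)$ and $W,Acc$ disjoint from $sig^{-1}(\swap)$ and hence a one-in-three model, and in the explicit catalogue of regions establishing the converse direction (Condition~\ref{con:manual}.\ref{con:six}). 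You acknowledge that this is ``where the real work lies,'' but none of it is supplied, so the proposal cannot be judged correct as a proof.

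Two concrete points where your plan as stated would need repair. First, for the types containing both $\used$ and $\free$ (e.g.\ $\{\nop,\set,\swap,\used,\free\}$) you cannot build a gadget that forces the inhibiting region of $k$ to satisfy $sig(k)=\used$ specifically: any interaction of $\tau$ undefined at $sup(s)$ qualifies, and both $\used$ and $\free$ are available. The paper does not attempt this; its Condition~\ref{con:manual}.\ref{con:three} allows \emph{either} $sig(k)=\free$ with $sup(h_{0,2})=1$ \emph{or} $sig(k)=\used$ with $sup(h_{0,2})=0$, and Condition~\ref{con:manual}.\ref{con:five} then reads the one-in-three model off the $\set$-signatures in the one case and off the $\res$-signatures in the other. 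Your reduction must handle both branches. Second, ``wiring'' the gadgets into a single reachable initialized TS is not free: the paper introduces a union of disjoint gadgets together with a separate joining construction and proves (Lemma~\ref{lem:union_validity}) that the joining preserves the $\tau$-(E)SSP in both directions, which requires exhibiting additional regions that inhibit the fresh connector events and separate the connector states without disturbing the grade bound. This step needs its own argument in any complete write-up.
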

On the one hand, it is straight forward that $\tau$-feasibility is a member of NP for all considered type of nets $\tau$:
In a non-deterministic computation, one can simply guess and check in polynomial time for all pairs $s, s'$ of states, respectively for all required pairs $s,e$ of state and event, the region that separates $s$ and $s'$, respectively inhibits $e$ at $s$, or refuse the input if such a region does not exist.

On the other hand, it is probably impossible to show hardness in NP for all considered types $\tau$ with the same reduction.
Here, we manage to boil it down to two reductions using the NP-complete cubic monotone one-in-three-$3$-SAT problem~\cite{DBLP:journals/dcg/MooreR01}.
While we start for both reductions from a common principle, the first peculiarity (reduction) of this principle is dedicated to the type of $\sigma_1$ and the second to the types of $\sigma_2$:
\[ \sigma_1= \{ \{\nop,\set,\swap,\free\}  \}, \ \  \sigma_2 =\{\{\nop,\set,\swap,\used\} \cup \omega \mid \omega \subseteq \{\res, \free\} \} \] 
Every remaining type $\tau'$ that is also covered by Theorem~\ref{the:main_result} is a member of the following set $\overline{\sigma}$:
\[\overline{\sigma} = \{ \{\nop,\res,\swap\}\cup\omega\mid  \emptyset\not=\omega\subseteq \{\free,\used\}\}\cup \{ \{\nop,\set, \res,\swap,\free\}  \}\]
For every $\tau'\in\overline{\sigma}$ there is a type $\tau\in \sigma_1\cup \sigma_2$ such that $\tau\cong \tau'$, cf. Table~\ref{tab:summary}.
Thus, by Lemma~\ref{lem:isomorphic_types}, the NP-completeness of $\tau$-feasibility for all $\tau\in \sigma_1\cup\sigma_2$ implies the NP-completeness of $\tau'$-feasibility for $\tau'\in \overline{\sigma}$ and, consequently, proves our main result.

In accordance to the just introduced approach, in the remainder of this paper we deal only with the types $\tau$ covered by $\sigma_1$ and $\sigma_2$.
This allows us to present $\tau$-regions in a compressed way, which is subject of the following section.

\subsection{Presenting Regions}\label{sec:compressed}%

In this section, we introduce for $\tau\in\sigma_1\cup\sigma_2$ a concept to present a $\tau$-region $(sup, sig)$ of a TS $A$ simply by its support $sup$.
This concept is tailored to the types and TSs that concern our reductions.
More exactly, by a little abuse of notation, we identify the support $sup$ with the set of states $sup\subseteq S(A)$ which it maps to one: $sup=\{s\in S(A)\mid sup(s)=1\}$.
Moreover, every presented support $sup$ allows a signature $sig$ that gets along with the interactions $\nop,\set,\swap,\free$ if $\tau\in \sigma_1$ and $\nop,\set,\swap,\used$ if $\tau\in \sigma_2$ (the interactions which are shared by all types of $\sigma_2$).
In particular, the support allows the signature $sig$ which is for all $e\in E(A)$ defined as follows:
\[sig(e)=
\begin{cases}
\free	& \text{if } \tau\in\sigma_1 \text{ and }  sup(s)=sup(s')=0 \text{ for all } s\edge{e}s'\in A \\
\used	& \text{if } \tau\in\sigma_2 \text{ and }  sup(s)=sup(s')=1 \text{ for all } s\edge{e}s'\in A \\
\set, 	& \text{if }  s\edge{e}s'\fbedge{e}s''\in A \text{ such that } sup(s)=0 \text{ and } sup(s')=sup(s'')=1 \ (*)\\
\swap, & \text{if } s\edge{e}s'\in A \text{ and } sup(s)\not=sup(s') \text{ and $(*)$ is not true} \\
\nop, 	& \text{otherwise} 
\end{cases}
\]
We emphasize again that in accordance to this concept the signature only depends on the given support $sup$ and the set $\sigma\in \{ \sigma_1,\sigma_2\}$.
Therefore, for the sake of simplicity, in the sequel we often refer to a given support $sup$ as to the region $(sup,sig)$ which it allows and, e.g., say $sup$ inhibits $e$ at $s$ instead of $(sup,sig)$ inhibits $e$ at $s$.
\newtheorem{example}{Example}
\begin{example}
The region $R$ of TS $A_1$ that is given in Figure~\ref{fig:regions} is defined by $sup=\{s_0,s_2\}$.
Similarly, the region $R'$ is defined by $sup'=\{s_1,s_2\}$.
Finally, the support $sup'''=\emptyset$ defines a region of $R'''$ of $A_1$ where $sig'''(a)=\free$.
\end{example}
Before we can set out the details of our reductions, the following subsection introduces our way of easily generating and combining gadget TSs for our NP-completeness proofs.

\subsection{Unions of Transition Systems}\label{sec:unions}%

If $A_0=(S_0,E_0,\delta_0,s_0^0), \dots ,A_n=(S_n,E_n,\delta_n,s_0^n)$ are TSs with pairwise disjoint states (but not necessarily disjoint events) we say that $U(A_0, \dots, A_n)$ is their \emph{union}.
If $U$ contains only $g$-grade TSs for some $g\in\mathbb{N}$ then we say $U$ is $g$-grade.
By $S(U)$, we denote the entirety of all states in $A_0, \dots, A_n$ and $E(U)$ is the aggregation of all events.
For a flexible formalism, we allow to build unions recursively:
Firstly, we allow empty unions and identify every TS $A$ with the union containing only $A$, that is, $A = U(A)$.   
Next, if $U_1= U(A^1_0,\dots,A^1_{n_1}), \dots, U_m=(A^m_0,\dots,A^n_{n_m})$ are unions (possibly with $U_i =U()$ or $U_i=A_i$) then $U(U_1, \dots, U_m)$ is the union $U(A^1_0, \dots, A^1_{n_1},\dots, A^m_0, \dots, A^n_{n_m})$.

We lift the concepts of regions, SSP and ESSP to unions $U = U(A_0, \dots, A_n)$ as follows:
A (boolean) $\tau$-region $(sup, sig)$ of $U$ consists of $sup: S(U) \rightarrow  \{0,1\}$ and $sig: E(U) \rightarrow \tau$ such that, for all $i \in \{0, \dots, n\}$, the projections $sup_i(s) = sup(s), s \in S_i$ and $sig_i(e) = sig(e), e \in E_i$ provide a region $(sup_i, sig_i)$ of $A_i$.
Then, $U$ has the $\tau$-SSP if for all different states $s, s' \in S(U)$ of the \emph{same} TS $A_i$ there is a $\tau$-region $(sup,sig)$ of $U$ with $sup(s) \not= sup(s')$.
Moreover, $U$ has the $\tau$-ESSP if for all events $e \in E(U)$ and \emph{all} states $s \in S(U)$ where $s\edge{e}$ does not hold there is a $\tau$-region $(sup,sig)$ of $U$ where $sup(s) \edge{sig(e)}$ does not hold.
Naturally, $U$ is called $\tau$-feasible if it has both the $\tau$-SSP and the $\tau$-ESSP.

To merge a union $U = U(A_0, \dots, A_n)$ into a single TS, we define the the so called \emph{joining} $A(U)$:
If $s^0_0, \dots, s^n_0$ are the initial states of $U$'s TSs then $A(U) = (S(U) \cup \bot \cup \top, E(U) \cup \odot \cup \otimes \cup \ominus \cup \oplus, \delta, \bot_0)$ is a TS with fresh events $\odot=\{\odot_0, \dots, \odot_n\}$,  $\otimes=\{\otimes_0, \dots, \otimes_n\}$, $\ominus=\{\ominus_0, \dots, \ominus_n\}$, $\oplus=\{\oplus_0, \dots, \oplus_n\}$ and additional connector states $\top=\{\top_{0,1}, \top_{0,2},\top_{0,3},\dots, \top_{n,1}, \top_{n,2},\top_{n,3}\}$ and $\bot=\{\bot_0, \dots, \bot_{4(n+1)}\}$.
The TS $A(U)$ joins the individual TSs of $U$ by the partial function $\delta$, which is defined as follows:
If $s\in S(A_i)$ and $e\in E(A_i)$ then $\delta(s,e)=\delta_i(s,e)$.
Moreover, we define $\delta(\top_{i,3}, \oplus_i)=s^i_0$ and $\delta(s^i_0, \oplus_i)= \top_{i,3}$ for all $i\in \{0,\dots, n\}$.
Finally, for all $i\in \{0,\dots, n\}$ the definition of $\delta$ on the states $\bot_{4i}, \dots, \bot_{4(i+1)}, \top_{i,1},\dots, \top_{i,3}$ and events $\otimes_i,\odot_i,\ominus_i,\oplus_i$ is shown in Figure~\ref{fig:joining}.
The function $\delta$ remains undefined on all other pairs $(s,e)\in S(A(U)) \times E(A(U))$.
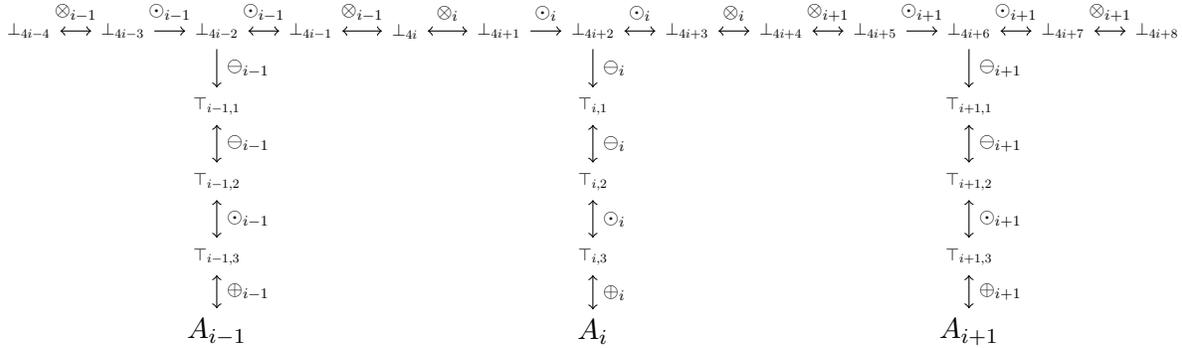
\begin{figure}[t!]
\centering
\begin{tikzpicture}[new set = import nodes]
\begin{scope}[nodes={set=import nodes}]
\foreach \i in {0,...,12} {\coordinate (\i) at (\i*1.25cm,0);}
\foreach \i in {4,...,1} { \pgfmathparse{ int(4-\i)  }\node (t\pgfmathresult) at (\pgfmathresult) { \nscale{$\bot_{4i-\i}$}}; }
\foreach \i in {4} {  \node (t\i) at (\i) { \nscale{$\bot_{4i}$}}; }
\foreach \i in {1,...,8} { \pgfmathparse{ int(4+\i)  }\node (t\pgfmathresult) at (\pgfmathresult) {    \nscale{$\bot_{4i+\i}$}};    }
\foreach \i in {1,...,4} {   \coordinate (b0\i) at (2.5cm,-\i*1cm);}
\foreach \i in {1,...,3} {  \node (b0\i) at (b0\i) { \nscale{$\top_{i-1,\i}$}}; }
\foreach \i in {4} {  \node (b0\i) at (b0\i) { $A_{i-1}$}; }
\foreach \i in {1,...,4} {   \coordinate (b1\i) at (7.5cm,-\i*1cm);}
\foreach \i in {1,...,3} {  \node (b1\i) at (b1\i) { \nscale{$\top_{i,\i}$}}; }
\foreach \i in {4} {  \node (b1\i) at (b1\i) { $A_{i}$}; }
\foreach \i in {1,...,4} {   \coordinate (b2\i) at (12.5cm,-\i*1cm);}
\foreach \i in {1,...,3} {  \node (b2\i) at (b2\i) { \nscale{$\top_{i+1,\i}$}}; }
\foreach \i in {4} {  \node (b2\i) at (b2\i) { $A_{i+1}$}; }

\graph {
(t0) <->["\escale{$\otimes_{i-1}$}"] (t1) ->["\escale{$\odot_{i-1}$}"] (t2)<->["\escale{$\odot_{i-1}$}"] (t3)<->[ "\escale{$\otimes_{i-1}$}"] (t4)<->["\escale{$\otimes_i$}"] (t5)->["\escale{$\odot_i$}"] (t6)<->["\escale{$\odot_i$}"] (t7)<->["\escale{$\otimes_i$}"] (t8)<->[ "\escale{$\otimes_{i+1}$}"] (t9)->["\escale{$\odot_{i+1}$}"] (t10)<->["\escale{$\odot_{i+1}$}"] (t11)<->[ "\escale{$\otimes_{i+1}$}"] (t12);
(t2) ->["\escale{$\ominus_{i-1}$}"] (b01)<->["\escale{$\ominus_{i-1}$}"] (b02)<->[ "\escale{$\odot_{i-1}$}"] (b03)<->[ "\escale{$\oplus_{i-1}$}"] (b04);
(t6) ->["\escale{$\ominus_{i}$}"] (b11)<->["\escale{$\ominus_{i}$}"] (b12)<->[ "\escale{$\odot_{i}$}"] (b13)<->[ "\escale{$\oplus_{i}$}"] (b14);
(t10) ->["\escale{$\ominus_{i+1}$}"] (b21)<->["\escale{$\ominus_{i+1}$}"] (b22)<->[ "\escale{$\odot_{i+1}$}"] (b23)<->[ "\escale{$\oplus_{i+1}$}"] (b24);
};
\end{scope}
\end{tikzpicture}
\caption{A snippet of the joining $A(U)$ that shows the definition of the transition function $\delta$ on the states $\bot_{4j}, \dots, \bot_{4(j+1)}, \top_{j,1},\dots, \top_{j,3}$ and events $\otimes_j,\odot_j,\ominus_j,\oplus_j$, where $j\in \{ i-1,i,i+1\}$.}
\label{fig:joining}
\end{figure}

Notice that $A(U)$ is obviously $2$-grade on $\top\cup\bot$, that is, all states of $\top\cup \bot$ have at most two incoming and two outgoing edges.
In particular, the state $\top_{i,3}$,  $i\in \{0,\dots, m-1\}$, has exactly two incoming and two outgoing edges labeled by $\odot_i$ and $\oplus_i$, respectively.
As a result, if $U$ is $g$-grade for some $g \geq 2$ and if the initial states of all TSs in $U$ have at most $g-1$ incoming and at most $g-1$ outgoing edges then $A(U)$ is $g$-grade, too. 
The following lemma certifies the validity of the joining operation for the unions and the types of nets that occur in our reduction.
By construction, these unions satisfy the requirements of the lemma:

\begin{lemma}\label{lem:union_validity}
Let $\tau\in \sigma_1\cup\sigma_2$ and let $U = U(A_0, \dots, A_n)$ be a union of TSs $A_0, \dots, A_n$ satisfying the following conditions:
\begin{enumerate}
\item
 For every event $e$ in $E(U)$ there is at least one state $s$ in $S(U)$ with $\neg (s \edge{e})$.
 \item
For all $i \in \{0, \dots, n\}$ there is exactly one outgoing and one incoming arc at the initial state $s_0^i$ of the TS $A_i$, both are labeled with the same event $u_i$ that occurs nowhere else in $U$.
 \end{enumerate}
 
$A(U)$ has the $\tau$-(E)SSP if and only if $A(U)$ has the $\tau$-(E)SSP.
\end{lemma}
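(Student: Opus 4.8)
The lemma asserts the \emph{faithfulness} of the joining operation: $U$ has the $\tau$-(E)SSP if and only if the single TS $A(U)$ does. The plan is to treat the two implications separately, since they differ sharply in difficulty. Throughout I keep in mind that, by the convention of Section~\ref{sec:compressed}, a region is presented by its support alone, the signature of each event being determined solely from the supports of that event's transitions.

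For the direction ``$A(U)$ has the property $\Rightarrow$ $U$ has it'' I would argue by \emph{restriction}. Given any $\tau$-region $(sup, sig)$ of $A(U)$, its projection to $S(U)$ and $E(U)$ is a $\tau$-region of $U$, because every transition of $U$ is a transition of $A(U)$ and the region condition is inherited along subsets of states and events. It then suffices to note that every demand of $U$ is already a demand of $A(U)$: distinct states $s \neq s'$ of the same $A_i$ are distinct in $A(U)$, and an event $e \in E(U)$ with $\neg(s\edge{e})$ at some $s \in S(U)$ also fails to occur at $s$ in $A(U)$, since $E(U)$-events never fire at connector states and occurrence inside $S(U)$ is unchanged. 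Hence a witnessing region of $A(U)$ restricts to a witnessing region of $U$. This direction is routine.

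The substantial direction is ``$U$ has the property $\Rightarrow$ $A(U)$ has it''. I would split the requirements of $A(U)$ into \emph{internal} ones (separating two states of the same $A_i$, or inhibiting an $E(U)$-event at a state of $S(U)$) and \emph{connector} ones (everything touching $\top \cup \bot$, the cross-TS state pairs, and the fresh events $\odot_j,\otimes_j,\ominus_j,\oplus_j$). The internal ones are discharged by a \textbf{lifting lemma}: extend any support of $U$ by the constant $0$ on $\top \cup \bot$. Since each $E(U)$-event moves only within $S(U)$, the convention assigns it the same signature as before, so the extended region agrees with the original on $S(U)$ and $E(U)$; and the connector edges become $0\to 0$ while each $\oplus_j$-edge $\top_{j,3}\fbedge{\oplus_j}s^j_0$ is consistently signed ($\swap$ when $sup(s^j_0)=1$, otherwise $\nop$/$\free$), so the extension is a valid $\tau$-region for every $\tau \in \sigma_1\cup\sigma_2$. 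Applying the hypothesis ``$U$ has the $\tau$-(E)SSP'' and then lifting therefore settles all internal demands of $A(U)$.

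The heart of the argument, and its main obstacle, is building the \emph{connector regions}. I would exploit two structural facts. First, in every $\tau \in \sigma_1\cup\sigma_2$ the only non-total interactions are $\free$ (defined only on $0$) and $\used$ (defined only on $1$), so every inhibition must come from signing the event $\free$ or $\used$, which pins down the supports along its transitions. Second, the backbone/attachment pattern of Figure~\ref{fig:joining}, a directed edge immediately followed by a reversible one such as $\bot_{4j+1}\edge{\odot_j}\bot_{4j+2}\fbedge{\odot_j}\bot_{4j+3}$, is exactly the configuration $(*)$ that forces the signature $\set$ and thereby lets one raise chosen connector states to support $1$. Concretely I would assemble a small toolbox: (i) for $\tau\in\sigma_1$, the region that is $0$ on all of $S(U)$ (so every $E(U)$-event is signed $\free$) with one connector state bumped to $1$, which at once separates that state from $S(U)$ and inhibits all $E(U)$-events there, and dually for $\tau\in\sigma_2$ the region that is $1$ on $S(U)$ (signing every event $\used$) with a connector state pushed to $0$; (ii) chain-cutting and block-indicator regions that switch value at the directed backbone edges, separating connector states among themselves, separating whole blocks (hence all cross-TS pairs and the initial states $s^i_0$ from one another); and (iii) regions signing each fresh event $\free$ or $\used$ on its few edges so as to inhibit it wherever it is absent. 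The delicate points are that $\odot_j,\otimes_j,\ominus_j$ recur on several edges of block $j$, constraining their signatures globally, and that inhibiting, e.g., $\oplus_j$ at a distant state needs $sup(\top_{j,3})=sup(s^j_0)$ fixed to $0$ (resp.\ $1$) with the target on the opposite value; this is precisely where Condition~2 (the unique event $u_i$ isolating each $s^i_0$) and Condition~1 (each $E(U)$-event is absent somewhere, so its inhibition demand is already met inside $U$) are used to guarantee the required support patterns. Verifying that every region of the toolbox is simultaneously valid for all $\tau\in\sigma_1$ and all $\tau\in\sigma_2$, by checking the finitely many local edge configurations against Figure~\ref{fig:interactions}, is the bulk of the remaining work: tedious, but mechanical given the deliberate directed/reversible design of the joining.
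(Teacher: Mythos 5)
Your overall architecture coincides with the paper's: the direction from $A(U)$ to $U$ is handled by projection/restriction exactly as in the paper, and for the converse you correctly split the obligations into internal ones (discharged by extending a region of $U$ constantly over $\top\cup\bot$) and connector ones (to be discharged by explicitly constructed regions). The restriction direction and the lifting of internal regions are fine; your constant-$0$ extension is a valid region and preserves separation and inhibition on $S(U)$. (The paper extends by the constant $sup_R(s)$ instead of $0$, which has the additional payoff, combined with Condition~1, that the extended region already inhibits every $E(U)$-event at every connector state; with your choice that burden is shifted entirely onto the toolbox.)

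The genuine gap is that the toolbox itself is never built, and this is where essentially all of the paper's proof lives. The paper exhibits, for each $i$, concrete supports $R^{\odot_i}$, $R^{\otimes_i}_1$, $R^{\otimes_i}_2$, $R^{\oplus_i}_1$, $R^{\oplus_i}_2$, $R^{\ominus_i}$ (and their complements for $\sigma_2$, plus the separating regions $R_i$ for the $\tau$-SSP on $\top\cup\bot$) and checks them against the joining's structure; you defer all of this to ``tedious but mechanical'' verification. Worse, your building block (i) is not realizable as literally stated: you cannot take $sup\equiv 0$ on $S(U)$ and bump ``one connector state'' to $1$, because the fresh events recur across several edges of a block. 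For instance, $sup(\top_{j,3})=1$ with $sup(\top_{j,2})=0$ forces $sig(\odot_j)=\swap$, which then forces $sup(\bot_{4j+1})\not=sup(\bot_{4j+2})\not=sup(\bot_{4j+3})$, so several connector states must leave $0$ together, and whether the resulting constraints are simultaneously satisfiable (for every target connector state, and uniformly for all $\tau\in\sigma_1\cup\sigma_2$) is precisely the content you would need to prove. Indeed the paper's actual inhibiting regions have supports containing almost all of $S(A(U))$ rather than a single bumped state, reflecting that for $\sigma_1$ one must put all transitions of the inhibited event at $0$ and the \emph{targets} at $1$. You flag the recurrence of $\odot_j,\otimes_j,\ominus_j$ as ``delicate'' but do not resolve it, so the proposal does not yet constitute a proof of the only-if direction.
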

\begin{proof}
\emph{If}:
Projecting a $\tau$-region separating $s$ and $s'$, respectively inhibiting $e$ at $s$, in $A(U)$ to the component TSs yields a $\tau$-region separating $s$ and $s'$, respectively inhibiting $e$ at $s$ in $U$.
Hence, the $\tau$-(E)SSP of $A(U)$ trivially implies the $\tau$-(E)SSP of $U$.

\emph{Only if}:
A $\tau$-region $R$ of $U$ separating $s$ and $s'$, respectively inhibiting $e$ at $s$, can be completed to become an equivalent $\tau$-region $R'$ of $A(U)$ by setting
\begin{align*}
sup_{R'}(s'') &= \begin{cases}
sup_R(s''), & \text{if } s'' \in S(U),\\
sup_R(s), & \text{otherwise, that is, } s'' \in \bot \cup \top \text{ and}
\end{cases}\\
sig_{R'}(e') &= \begin{cases}
sig_R(e'), & \text{if } e' \in E(U),\\
\nop, & \text{if } e'\in \odot\cup \otimes\cup \ominus\\ 
\nop, & \text{if } e' = \oplus_i \text{ and } sup_R(s^i_0) = sup_R(s), i \in \{0, \dots, n\}\\
\swap, & \text{if } e' = \odot_i \text{ and } sup_R(s^i_0) \not= sup_R(s) = 1, i \in \{0, \dots, n\},\\
\end{cases}
\end{align*}
Notice that a $\tau$-region $R'$ like this, which inherits the property of inhibiting $e$ at $s$ from $R$, do also inhibit $e$ at all connector states, since $sup_{R'}(s) = sup_{R'}(s'')$ for all $s''\in  \bot \cup \top$.
This has the following consequence: As every event $e\in E(U)$ has at least one state $s\in S(U)$ with $\neg s\edge{e}$, the ESSP of $U$ implies that $U$ has at least one inhibiting region $R$ for every event $e$.
Hence, for every event $e$ we can use the respective region to create $R'$ as defined above, inhibiting $e$ at every connector state of the TS $A(U)$.

For the (E)SSP of $A(U)$ it is subsequently sufficient to analyze (event) state separation concerning just the connector states and events.
By the assumption, we have to consider the cases $\tau\in \sigma_1$ and $\tau\in\sigma_2$, respectively.
Firstly, we let $\tau\in \sigma_1$ and present $\tau$-regions, compressed in accordance to Section~\ref{sec:compressed}, that prove the $\tau$-ESSP of $A(U)$.
Secondly, we do the same for $\tau\in \sigma_2$.
Finally, we show the $\tau$-SSP of $A(U)$ by regions whose signatures get along with the interactions $\nop,\set,\swap$, which are included by all $\tau\in \sigma_1\cup\sigma_2$.

Let $i\in \{0,\dots, n\}$.
In the following we present regions that together inhibit the events $\odot_i, \otimes_i,\ominus_i, \oplus_i$:

Region $R^{\odot_i}=(s^{\odot_i}, sig^{\odot_i})$ with $sup^{\odot_i}=S(A(U))\setminus \{\bot_{4i+1}, \bot_{4i+2}, \bot_{4i+3}, \top_{i,2}, \top_{i,3}\}$ inhibits $\odot_i$.

The following two regions $R^{\otimes_i}_1$ and $R^{\otimes_i}_2$ prove the inhibition of $\otimes_i$, where the former is dedicated to the states $\bot_{4i-3}$ (if present), $\bot_{4i+2}$ and $\bot_{4i+5}$ and the latter to the remaining states:

Region $R^{\otimes_i}_1=(sup^{\otimes_i}_1, sig^{\otimes_i}_1)$ is defined by $sup^{\otimes_i}_1=\{\bot_{4j+1}, \bot_{4j+2}, \bot_{4j+3}, \top_{j,1}, \top_{j,2},\top_{j,3 } \mid j\in \{0,\dots, i-1,i,\dots, n\}\} \cup \bigcup^n_{i\not=j=0}S(A_j)$.

Region $R^{\otimes_i}_2=(sup^{\otimes_i}_2, sig^{\otimes_i}_2)$ is defined by $sup^{\otimes_i}_2= S(A(U))\setminus \{\bot_{4i-3}, \bot_{4i},\dots, \bot_{4i+5}\}$.

The following region $R^{\oplus_i}_1$ inhibits $\oplus_i$ at $\bot_{4i+2}$ and the states $S(A_i)\setminus \{s^i_0\}$ and the region $R^{\oplus_i}_2$ inhibits $\oplus_i$ at the remaining states:

Region $R^{\oplus_i}_1=(sup^{\oplus_i}_1, sig^{\oplus_i}_1)$ is defined by $sup^{\oplus_i}_1=S(A(U))\setminus \{s^i_0, \top_{i,3}, \bot_{4i+2}\}$.
Notice that, by assumption of item two, the event $u_i$ occurs unambiguously like $s^i_0\fbedge{u_i}$ in $A_i$.

Region $R^{\oplus_i}_2=(sup^{\oplus_i}_2, sig^{\oplus_i}_2)$ is defined by $sup^{\oplus_i}_2=\{\top_{i,1} \top_{i,2}, \bot_{4i+2}\}$.

Finally, the event $\ominus_i$ is inhibited by the region $R^{\ominus_i}=(sup^{\ominus_i}, sig^{\ominus_i})$ which is defined by $sup^{\oplus_i}= S(A(U)) \setminus \{ \bot_{4i+2}, \top_{i,1},\top_{i,2} \}$.

By the arbitrariness of $i$, so far we have proven the $\tau$-ESSP of $A(U)$ for $\tau\in\sigma_1$.
We proceed by doing the same for $\tau\in\sigma_2$, where $\{\nop,\set,\swap,\used\}\subseteq \tau$.
To reuse a just introduced region $R=(sup, sig)$ we define its complement $\overline{R}=(\overline{sup}, \overline{sig}) $ as follows: $\overline{sup}=S(A(U))\setminus sup$.
Notice that we never build a complement of a region $R$ where $sig(e)=\set$ for some event $e$.

Let $i\in \{0,\dots, m-1\}$.
The event $\odot_i$ is inhibited by the complement of $R^{\odot_i}$.

The event $\otimes_i$ is inhibited at the states $\{ \bot_{4j+2}, \bot_{4j+3}, \top_{j,1},\top_{j,2},\top_{j,3}\} \cup S(A_j)$, where $0\leq j\in \{i-1,i+1\}$, and at $\top_{i,1}, \top_{i,2}$ by the complement of $R^{\otimes_i}_1$.

Moreover, $\otimes_i$ is inhibited at the remaining states by $R=(sup, sig	)$ which is defined by $sup=\{\bot_{4j+2}, \bot_{4j+3}, \top_{j,1},\top_{j,2},\top_{j,3}\} \cup S(A_j)\cup \{\bot_{4i}, \bot_{4i+1}, \bot_{4i+3},\bot_{4i+4}, \top_{i,1}, \top_{i,2}\}$, where $0\leq j\in \{i-1,i+1\}$.

The event $\oplus_i$ is inhibited by the complements of $R^{\oplus_i}_1$ and $R^{\oplus_i}_2$ and the event $\ominus_i$ by the complement of $R^{\ominus_i}$.

So far the $\tau$-ESSP of $A(U)$ for $\tau\in \sigma_1\cup\sigma_2$ is proven. 
Recall for the $\tau$-SSP that the states $S(A_i), i\in \{0,\dots,n\}$ are separated within $A_i$ and $sup=S(A_i)$ separates them in $A(U)$.
It remains to consider the states of $\top \cup\bot$.
We present separating regions that go along with the interactions $\{\nop, \set,\swap\}$ and, thus, are valid for $\tau\in \sigma_1\cup\sigma_2$.
To do so, we modify the concept of Section~\ref{sec:compressed}:
If $(sup, sig)$ is a region obtained by this concept then we get a corresponding region $(sup, sig')$ where we define $sig'(e)=sig(e)$ if $sig(e)\not\in \{\free,\used\}$ and, otherwise, $sig(e)=\nop$ for all $e\in E(A(U))$.

Let $i\in \{0,\dots, n\}$  and region $R_i=(sup_i, sig_i)$ be defined by $sup_i=S(A(U))\setminus (\{\bot_0,\dots, \bot_{4i+1}\}\cup\{\top_{j,1}, \top_{j,2},\top_{j,3} \mid j\in \{0,\dots, i-1\}\} \cup \bigcup^{n}_{j=0}S(A_j))$.
$R_i$ separates all states $s\in sup_i$ from all of $\{\bot_0,\dots, \bot_{4i+1}\}\cup\{\top_{j,1}, \top_{j,2},\top_{j,3} \mid j\in \{0,\dots, i-1\}\}$.
Thus, it only remains to show that the states $\bot_{4i+2}, \dots, \bot_{4i+5},\top_{i,1}, \top_{i,2}, \top_{i,3}$, where $i\in \{0,\dots, n\}$, are pairwise separable.
The already introduced regions (replacing \free\ by \nop) are sufficient, to be seen by the following listing:

The state $\top_{i,1}$ is separated by $R^{\odot_i}$ from $\bot_{4i+2}, \bot_{4i+3}, \top_{i,2}$ and $\top_{i,3}$ and by $R^{\otimes_i}_2$ from $\bot_{4i+4}$ and $\bot_{4i+5}$; 
state $\top_{i,2}$ is separated by $R^{\otimes_i}_2$ from $\bot_{4i+2}, \dots, \bot_{4i+5}$ and by $R^{\oplus_i}_2$ from $\top_{i,2}$; 
state $\top_{i,3}$ is separated by $R^{\otimes_i}_2$ from $\bot_{4i+2}, \dots, \bot_{4i+5}$; 
state $\bot_{4i+2}$ is separated by $R^{\odot_i}$ from $\bot_{4i+4}$ and $\bot_{4i+5}$ and by $R^{\otimes_i}_1$ from $\bot_{4i+3}$; 
state $\bot_{4i+3}$ is separated by $R^{\odot_i}$ from $\bot_{4i+4}$ and $\bot_{4i+5}$; 
state $\bot_{4i+4}$ is separated by $R^{\otimes_i}_1$ from $\bot_{4i+5}$.
\end{proof}

\subsection{Manual for the Proof of Theorem~\ref{the:main_result}}\label{sec:proof_main_result}%

The input to our approach is a set $\sigma \in \{\sigma_1,\sigma_2\}$ and a cubic monotone boolean $3$-CNF $\varphi = \{\zeta_0, \dots, \zeta_{m-1}\}$, a set of negation-free $3$-clauses over the variables $V(\varphi)$ such that every variable is a member of exactly three clauses.
According to~\cite{DBLP:journals/dcg/MooreR01}, it is NP-complete to decide if $\varphi$ has a one-in-three model, that is, a subset $M \subseteq V(\varphi)$ of variables that hit every clause exactly once: $\vert M \cap \zeta_i\vert  = 1$ for all $i \in \{0, \dots, m-1\}$.
The result of the reduction is a $2$-grade union $U^\sigma_\varphi$ of gadget TSs which satisfies the following condition:

\newtheorem{condition}{Condition}
\begin{mdframed}[backgroundcolor=green!10] 
\begin{condition}\label{con:manual}
\begin{enumerate}
\item\label{con:one}
There is event $k \in E(U^\sigma_\varphi)$ and state $h_{0,2} \in S(U^\sigma_\varphi)$ such that $\neg h_{0,2}\edge{k}$, that is, $k$ has to be inhibited at $h_{0,2}$.
\item\label{con:two}
There are events $V=\{v_0,\dots,v_{4m-1}\}\subseteq  E(U^\sigma_\varphi)$ and $W=\{w_0,\dots,w_{m-1}\}\subseteq E(U^\sigma_\varphi)$ and $Acc=\{a_0,\dots, a_{3m-1}\} \subseteq E(U^\sigma_\varphi) $.
\item\label{con:three}
If $(sup, sig)$ is a $\tau$-region, where $\tau \in \sigma$, that inhibits $k$ at $h_{0,2}$ then the following is true:
	\begin{enumerate}
	\item
	$V\subseteq sig^{-1}(\swap)$ and $W\cap sig^{-1}(\swap)=\emptyset$ and $Acc\cap sig^{-1}(\swap)=\emptyset$,
	\item
	$sig(k)=\free$ and $sup(h_{0,2})=1$ or $sig(k)=\used$ and $sup(h_{0,2})=0$.
	\end{enumerate}
\item\label{con:four}
The variables $V(\varphi)$ are a subset of $E(U^\sigma_\varphi)$, the union events.
\item\label{con:five}
If $(sup, sig )$ is a region of $U^\sigma_\varphi$ satisfying Condition~\ref{con:manual}.\ref{con:three} then $M=\{X\in V(\varphi) \mid sig(X) =\set\}$ or $M'=\{X\in V(\varphi) \mid sig(X)=\res\}$ is a one-in-three model of $\varphi$.
\item\label{con:six}
If $\varphi$ has a one-in-three model $M$ and $\tau\in \sigma$ then $U^\sigma_\varphi$ has the $\tau$-ESSP and the $\tau$-SSP.
\end{enumerate}
\end{condition}
\end{mdframed}
The precise meanings of the different items of Condition~\ref{con:manual} are developed in the following sections.
However, we can already justify that a polynomial time reduction that yields a union which satisfies Condition~\ref{con:manual} proves Theorem~\ref{the:main_result} as the following implications are true:
\noindent
\begin{mdframed}[backgroundcolor=green!10] 
$\varphi$ is one-in-three satisfiable $\stackrel{\text{6.}}{\implies}$ 
$U^\sigma_\varphi$ has the $\tau$-ESSP \& $\tau$-SSP $\stackrel{\text{def.}}{\implies}$   
$k$ is $\tau$-inhibitable at $h_{0,6}$ in $U^\sigma_\varphi$ $\stackrel{\text{3./5.}}{\implies}$ $\varphi$ is one-in-three satisfiable.
\end{mdframed}
Especially, $\varphi$ is one-in-three satisfiable if and only if $U^\sigma_\varphi$ is $\tau$-feasible, that is, it has the $\tau$-ESSP and the $\tau$-SSP. 
By Lemma~\ref{lem:union_validity}, this proves NP-hardness of $\tau$-feasibility for all $\tau\in \sigma_1\cup \sigma_2$.
Secondly, every remaining type $\tilde{\tau}$ of Theorem~\ref{the:main_result} is isomorphic to one of the already covered cases $\tau$.
Hence, by Lemma~\ref{lem:isomorphic_types}, this also proves NP-hardness of $\tilde{\tau}$-feasibility which, by feasibility being in NP, justifies Theorem~\ref{the:main_result}. 

The following sections are dedicated to the introduction of $U^\sigma_\varphi$ and to the proof of Condition~\ref{con:manual}.
In these sections, we often refer to the statements of the following simple observation:
\begin{observation}[Without proof.]\label{obs:basics}
Let $A$ be a TS, $\tau$ a boolean type of net and $(sup, sig)$ a $\tau$-region of $A$.
If $s\fbedge{e}s'$ are transitions of $A$ then $sup(s)\not=sup(s')$ if and only if $sig(e)=\swap$.
If $s\edge{e}s'\fbedge{e}s''$ are transitions of $A$, where $s,s',s''$ are pairwise distinct, and $sig(e)=\swap$ then $sup(s)=sup(s'')$.
Moreover, if $P=s\edge{e_1}\dots\edge{e_n}s'$ is a path of $A$ then the image $sup(s)\edge{sig(e_1)}\dots \edge{sig(e_n)}sup(s')$ of $P$ under $R$ is a path in $\tau$.
\end{observation}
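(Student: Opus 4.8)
The plan is to derive all three assertions directly from the single defining property of a $\tau$-region---that every transition $s\edge{e}s'$ of $A$ is mapped to a transition $sup(s)\edge{sig(e)}sup(s')$ of $\tau$---combined with a quick inspection of the interaction table in Figure~\ref{fig:interactions}. No deeper machinery is needed; the whole statement is a bookkeeping consequence of the definitions, so I would organise it as three short arguments, one per claim.

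For the first claim I would start from $s\fbedge{e}s'$, which by definition unfolds into the two transitions $s\edge{e}s'$ and $s'\edge{e}s$ of $A$. Applying the region property to both directions produces the $\tau$-transitions $sup(s)\edge{sig(e)}sup(s')$ and $sup(s')\edge{sig(e)}sup(s)$. For the implication ``$sup(s)\neq sup(s')\Rightarrow sig(e)=\swap$'' I would assume without loss of generality $sup(s)=0$ and $sup(s')=1$, so that $sig(e)$ must be an interaction mapping $0$ to $1$ and $1$ to $0$. Reading off Figure~\ref{fig:interactions}, the interactions $\inp,\out,\used,\free$ are each undefined on $0$ or on $1$, while among the interactions total on $\{0,1\}$, namely $\nop,\set,\res,\swap$, only $\swap$ fixes neither value; hence $\swap$ is the unique interaction realising $0\mapsto 1$ and $1\mapsto 0$, forcing $sig(e)=\swap$. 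The converse is immediate, since $sig(e)=\swap$ gives $sup(s')=\swap(sup(s))=1-sup(s)\neq sup(s)$.

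For the second claim I would just chain two uses of $\swap$. With $sig(e)=\swap$, the transition $s\edge{e}s'$ yields $sup(s')=1-sup(s)$, and applying the first claim to the pair $s'\fbedge{e}s''$ (again with signature $\swap$) yields $sup(s'')\neq sup(s')$, that is $sup(s'')=1-sup(s')=sup(s)$. The pairwise-distinctness hypothesis on $s,s',s''$ is used only to guarantee that the configuration $s\edge{e}s'\fbedge{e}s''$ genuinely occurs; the equality itself is nothing more than the fact that $\swap$ is an involution on $\{0,1\}$.

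For the third claim I would run a routine induction on the path length $n$ of $P=s\edge{e_1}\dots\edge{e_n}s'$: writing the intermediate states as $s=s^{(0)},\dots,s^{(n)}=s'$, the region property turns each transition $s^{(j-1)}\edge{e_j}s^{(j)}$ of $A$ into a transition $sup(s^{(j-1)})\edge{sig(e_j)}sup(s^{(j)})$ of $\tau$, and concatenating these $n$ transitions is exactly the claimed $\tau$-path $sup(s)\edge{sig(e_1)}\dots\edge{sig(e_n)}sup(s')$. I expect no genuine obstacle anywhere; the only step that warrants care is the finite case distinction over Figure~\ref{fig:interactions} in the first claim, where one must verify that $\swap$ is the \emph{unique} interaction defined on both $0$ and $1$ that exchanges them, so that the characterisation of $sig(e)=\swap$ is both necessary and sufficient.
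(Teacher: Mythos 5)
The paper states this observation explicitly ``without proof,'' and your argument is precisely the routine verification the authors are relying on: unfold the definition of a $\tau$-region on each transition and check the interaction table. All three parts are correct --- in particular, your key case distinction that $\swap$ is the unique interaction in $I$ realising both $0\mapsto 1$ and $1\mapsto 0$ (the partial ones being undefined on one side, the total ones $\nop,\set,\res$ each fixing some value) is exactly the point that makes the first equivalence work --- so there is nothing to add.
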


\subsection{Details for Condition~\ref{con:manual}.\ref{con:one}-Condition~\ref{con:manual}.\ref{con:three}}%

In this section, we let $\sigma\in \{\sigma_1,\sigma_2\}$  and introduce the gadgets $U^\sigma_\varphi$ that are relevant for Condition~\ref{con:manual}.\ref{con:one}-Condition~\ref{con:manual}.\ref{con:three}.

For a start, the union $U^\sigma_\varphi$ has for every $j\in \{0,\dots, 4m-1\}$ the following TS $H_{j}$:
\begin{center}
\begin{tikzpicture}[new set = import nodes]
\begin{scope}[nodes={set=import nodes}]
		\node (B) at (-0.9,0) {$H_j=$};
		\foreach \i in {0,...,5} { \coordinate (\i) at (\i*1.5cm,0) ;}
		\foreach \i in {2,5} {, rounded corners] (\i) +(-0.5,-0.25) rectangle +(0.5,0.35);}
		\foreach \i in {0,...,5} { \node (\i) at (\i*1.5cm,0) {\scalebox{\nodeScale}{$h_{j,\i}$}};}
\graph {
	(import nodes);
			0 <->["\scalebox{\edgeScale}{$k$}"]1<->["\scalebox{\edgeScale}{$m$}"]2 <->["\scalebox{\edgeScale}{$v_j$}"]3 <->["\scalebox{\edgeScale}{$k$}"]4  <->["\scalebox{\edgeScale}{$\_$}"]5;

			};
\end{scope}
\end{tikzpicture}
\end{center}
The TS $H_j$ has the initial state $h_{j,5}$ and the events are $k,v_j,$ and $m$.
While every $H_j$ provides a new $v_j$, the events $k$ and $m$ are applied by every TS $H_0,\dots, H_{4m-1}$.
Thus, the TSs $H_0,\dots, H_{4m-1}$ together provide the events of $V=\{v_0,\dots, v_{4m-1}\}$.
Moreover, $H_j$ has a unique event, say $u_j$, that is sketched by the underscore and occurs nothing elsewhere in $U^\sigma_\varphi$.
The only purpose of this event is to satisfy the requirements of Lemma~\ref{lem:union_validity} and it has no impact on the proof of Condition~\ref{con:manual}. 
The TS $H_0$ provides the state $h_{0,2}$ and by $\neg h_{0,2}\edge{k}$ the event $k$ is to inhibit at this state.
We argue that a region $(sup, sig)$ that inhibits $k$ at $h_{0,2}$ satisfies $V\subseteq sig^{-1}(\swap)$ and either $sig(k)=\free$ and $sup(h_{0,2})=1$ or $sig(k)=\used$ and $sup(h_{0,2})=0$.
If $(sup, sig)$ is a $\tau$-region that inhibits $k$ at $h_{0,2}$ then, by definition, the interaction $sig(k)$ does not occur at $ sup(h_{0,2})$ in $\tau$.
Thus, by definition of $\sigma_1$ and $\sigma_2$, we have $sig(k)\in \{\free, \used\}$ for all considered types $\tau$.
Let's discuss the case $sig(k)=\free$, which immediately implies $sup(h_{0,2})=1$.
By $sig(k)=\free$ and $\edge{k}h_{j,1}$ and $\edge{k}h_{j,3}$  we get $sup(h_{j,1})=sup(h_{j,3})=0$ for all $j\in \{0,\dots, 4m-1\}$.
By $sup(h_{0,1})=0$ and $sup(h_{0,2})=1$ and Observation~\ref{obs:basics} we also get $sig(m)=\swap$.
Thus, by $sup(h_{j,1})=0$ and $h_{j,1}\fbedge{m}h_{j,2}$ we get $sup(h_{j,2})=1$ for all $j\in \{0,\dots, 4m-1\}$.
Finally, $sup(h_{j,2})=1$ and $sup(h_{j,3})=0$ and Observation~\ref{obs:basics} imply $V\subseteq sig^{-1}(\swap)$.
Symmetrically, one argues that $sig(k)=\used$ implies $sup(h_{0,2})=0$ and $V\subseteq sig^{-1}(\swap)$.

The union $U^\sigma_\varphi$ has also the following TS $F_0, F_1$ and $F_2$: 
\begin{center}
\begin{tikzpicture}[new set = import nodes]
\begin{scope}[nodes={set=import nodes}]
		\node (F_0) at (-0.75,0) {$F_0=$};
		\foreach \i in {0,...,8} { \coordinate(\i) at (\i*1.55cm,0);}
		\foreach \i in {2,5,8} {, rounded corners] (\i) +(-0.5,-0.25) rectangle +(0.5,0.3);}
		\foreach \i in {0,...,8} { \node (\i) at (\i*1.55cm,0) {\scalebox{\nodeScale}{$f_{0,\i}$}};}
\graph {
	(import nodes);
			0 <->["\scalebox{\edgeScale}{$k$}"]1<->["\scalebox{\edgeScale}{$m$}"]2 <->["\scalebox{\edgeScale}{$q_0$}"]3 <->["\scalebox{\edgeScale}{$k$}"]4  <->["\scalebox{\edgeScale}{$m$}"]5<->["\scalebox{\edgeScale}{$q_1$}"]6<->["\scalebox{\edgeScale}{$k$}"]7<->["\scalebox{\edgeScale}{$\_$}"]8;
		};
\end{scope}	
\begin{scope}[nodes={set=import nodes}, yshift=-1cm]
		\node (F_1) at (-0.75,0) {$F_1=$};
		\foreach \i in {0,...,5} { \coordinate (\i) at (\i*1.6cm,0) ;}
		\foreach \i in {2,5} {, rounded corners] (\i) +(-0.5,-0.25) rectangle +(0.5,0.3);}
		\foreach \i in {0,...,5} { \node (\i) at (\i*1.6cm,0) {\scalebox{\nodeScale}{$f_{1,\i}$}};}

\graph {
	(import nodes);
			0 <->["\scalebox{\edgeScale}{$k$}"]1<->["\scalebox{\edgeScale}{$q_2$}"]2 <->["\scalebox{\edgeScale}{$q_3$}"]3 <->["\scalebox{\edgeScale}{$k$}"]4  <->["\scalebox{\edgeScale}{$\_$}"]5;
			};
\end{scope}
\begin{scope}[nodes={set=import nodes}, yshift=-2cm]
		\node (F_2) at (-0.75,0) {$F_2=$};
		\foreach \i in {0,...,9} { \coordinate (\i) at (\i*1.4cm,0);}
		\foreach \i in {2} {, rounded corners] (\i) +(-0.5,-0.25) rectangle +(0.5,0.3);}
		\foreach \i in {6} {, rounded corners] (\i) +(-1.9,-0.25) rectangle +(0.5,0.3);}
		\foreach \i in {9} {, rounded corners] (\i) +(-0.5,-0.25) rectangle +(0.3,0.3);}
		\foreach \i in {0,...,9} { \node (\i) at (\i*1.4cm,0) {\scalebox{\nodeScale}{$f_{2,\i}$}};}
\graph {
	(import nodes);
			0 <->["\scalebox{\edgeScale}{$k$}"]1<->["\scalebox{\edgeScale}{$q_2$}"]2 <->["\scalebox{\edgeScale}{$q_0$}"]3 <->["\scalebox{\edgeScale}{$z$}"]4  <->["\scalebox{\edgeScale}{$q_1$}"]5 <->["\scalebox{\edgeScale}{$z$}"]6 <->["\scalebox{\edgeScale}{$q_3$}"]7 <->["\scalebox{\edgeScale}{$k$}"]8<->["\scalebox{\edgeScale}{$\_$}"]9;
			};
\end{scope}
\end{tikzpicture}
\end{center}
The TSs $F_0,F_1$ and $F_2$ have the initial states $f_{0,8}, f_{1,5}$ and $f_{2,9}$ and apply the events $k,m,q_0,\dots, q_3$ and $z$, respectively.
Again, the underscores represent unique events that do not occur elsewhere in $U^\sigma_\varphi$ and have the only purpose to satisfy the requirements of Lemma~\ref{lem:union_validity}.
The TSs $F_0,F_1$ and $F_2$ have exactly the following task that concerns the event $z$: 
If $(sup, sig)$ is a region of $U^\sigma_\varphi$ that inhibits $k$ at $h_{0,2}$ then $F_0,F_1$ and $F_2$ use the signatures of $k$ and $m$ to ensure $sig(z)\in \{ \nop,\swap\}$.
This property of $z$ is used by $U^\sigma_\varphi$ to ensure a certain signature for other events.
More exactly, the union $U^\sigma_\varphi$ has for every $j\in \{0,\dots, m-1\}$ and for every $\ell\in \{0,\dots, 3m-1\}$ the following TSs $G_j$ and $D_\ell$, which together provide the sets $W=\{w_0,\dots, w_{m-1}\}$ and $Acc=\{a_0,\dots, a_{3m-1}\}$ and apply $k$ and $z$ to ensure $W\cap sig^{-1}(\swap)=\emptyset$ and $Acc\cap sig^{-1}(\swap)=\emptyset$ for every region $(sup, sig)$ that inhibits $k$ at $h_{0,2}$:
\begin{center}
\begin{tikzpicture}[new set = import nodes]
\begin{scope}[nodes={set=import nodes}]
		\node (B) at (-0.75,0) {$G_j=$};
		\foreach \i in {0,...,8} { \coordinate (\i) at (\i*1.5cm,0) ;}
		\foreach \i in {4} {, rounded corners] (\i) +(-3.5,-0.25) rectangle +(0.5,0.35);}
		\foreach \i in {8} {, rounded corners] (\i) +(-0.5,-0.25) rectangle +(0.5,0.35);}
		\foreach \i in {0,...,8} { \node (g\i) at (\i) {\scalebox{\nodeScale}{$g_{j,\i}$}};}
\graph {
	(import nodes);
			g0 <->["\scalebox{\edgeScale}{$k$}"]g1<->["\scalebox{\edgeScale}{$y_j$}"]g2- >["\scalebox{\edgeScale}{$z$}"]g3 <->["\scalebox{\edgeScale}{$z$}"]g4 <->["\scalebox{\edgeScale}{$y_j$}"]g5<->["\scalebox{\edgeScale}{$w_j$}"]g6<->["\scalebox{\edgeScale}{$k$}"]g7<->["\scalebox{\edgeScale}{$\_$}"]g8;
			};	
\end{scope}
\begin{scope}[nodes={set=import nodes}, yshift =-1cm]
		\node (B) at (-0.75,0) {$D_\ell=$};
		\foreach \i in {0,...,8} { \coordinate (\i) at (\i*1.5cm,0) ;}
		\foreach \i in {4} {, rounded corners] (\i) +(-3.5,-0.25) rectangle +(0.5,0.35);}
		\foreach \i in {8} {, rounded corners] (\i) +(-0.5,-0.25) rectangle +(0.5,0.35);}
		\foreach \i in {0,...,8} { \node (d\i) at (\i) {\scalebox{\nodeScale}{$d_{\ell,\i}$}};}
\graph {
	(import nodes);
			d0 <->["\scalebox{\edgeScale}{$k$}"]d1<->["\scalebox{\edgeScale}{$p_\ell$}"]d2- >["\scalebox{\edgeScale}{$z$}"]d3 <->["\scalebox{\edgeScale}{$z$}"]d4 <->["\scalebox{\edgeScale}{$p_\ell$}"]d5<->["\scalebox{\edgeScale}{$a_\ell$}"]d6<->["\scalebox{\edgeScale}{$k$}"]d7<->["\scalebox{\edgeScale}{$\_$}"]d8;
			};
\end{scope}
\end{tikzpicture}
\end{center}
In the remainder of this section, we first prove the announced functionality of $F_0,F_1$ and $F_2$ and then do so for $G_j$ and $D_\ell$.
If $(sup, sig)$ is a region that inhibits $k$ at $h_{0,2}$ then $sig(k)\in \{\free, \used\}$ and, by the former discussion, $sig(m)=\swap$.
By $sig(k)\in \{\free, \used\}$ and $sig(m)=\swap$ we get $sup(f_{0,1})=sup(f_{0,3})=sup(f_{0,4})=sup(f_{0,6})\not=sup(f_{0,2})=sup(f_{0,5})$.
This implies, by Observation~\ref{obs:basics}, that $sig(q_0)=sig(q_1)=\swap$.
Moreover, again by $sig(k)\in \{\free, \used\}$, we have $sup(f_{1,1})=sup(f_{1,3})$, which implies $sig(q_2)=\swap$ if and only if $sig(q_3)=\swap$ if and only if $sup(f_{1,1})\not=sup(f_{1,2})$.
This influences directly the states of $F_2$:
By $sig(k)\in \{\free, \used\}$ we have $sup(f_{2,1})=sup(f_{2,7})$, which, by Observation~\ref{obs:basics} and the behavior of $q_2,q_3$, implies that $sup(f_{2,2})=sup(f_{2,6})$.
This implies that if $sig(z)\not=\swap$ then $sig(z)=\nop$:
If $sig(z)\not=\swap$ then, by Observation~\ref{obs:basics}, we have that $sup(f_{2,3})=sup(f_{2,4})$ and $sup(f_{2,5})=sup(f_{2,6})$.
By $sig(q_1)=\swap$, we get $sup(f_{2,4})\not=sup(f_{2,5})$.
Thus, we have $0\edge{sig(z)}0$ and $1\edge{sig(z)}1$ in $\tau$ which implies $sig(z)=\nop$.
Altogether, this implies $sig(z)\in \{\nop,\swap\}$ for every region $(sup, sig)$ that inhibits $k$ at $h_{0,2}$.

It remains to argue that such a region implies $W\cap sig^{-1}(\swap)=\emptyset$ and $Acc\cap sig^{-1}(\swap)=\emptyset$:
Let $j\in \{0,\dots, m-1\}$.
By Observation~\ref{obs:basics} and $sig(z)\in \{\nop,\swap\}$, we have that $sup(g_{j,2})=sup(g_{j,4})$ and, by $sig(k)\in \{\free, \used\}$, we get $sup(g_{j,1})=sup(g_{j,6})$.
Again by Observation~\ref{obs:basics}, the following is true:
If $sig(y_j)=\swap$ then $sup(g_{j,1}) \not= sup(g_{j,2})=sup(g_{j,4})\not=sup(g_{j,5})$, that is, $sup(g_{j,1}) = sup(g_{j,5}) =sup(g_{j,6})$.
This implies $sig(w_j)\not=\swap$.
If, otherwise, $sig(y_j)\not=\swap$ then $sup(g_{j,1}) = sup(g_{j,2})=sup(g_{j,4})=sup(g_{j,5})=sup(g_{j,6})$, which also implies $sig(w_j)\not=\swap$.
By the arbitrariness of $j$, this proves $W\cap sig^{-1}(\swap)=\emptyset$.
Moreover, as $G_j$ and $D_\ell$ are obviously isomorphic we have also proven that $Acc \cap sig^{-1}(\swap)=\emptyset$.

So far we have shown that $U^\sigma_\varphi$ satisfies the Conditions~\ref{con:manual}.\ref{con:one}-Condition~\ref{con:manual}.\ref{con:three}.
The following section is dedicated to the proof of Condition~\ref{con:manual}.\ref{con:four} and Condition~\ref{con:manual}.\ref{con:five}.

\subsection{Details for Condition~\ref{con:manual}.\ref{con:four} and Condition~\ref{con:manual}.\ref{con:five}}%

In this section, we introduce the remaining gadgets of $U^\sigma_\varphi$.
While the so far introduced gadgets are valid for both $\sigma_1$ and $\sigma_2$, this is no longer true for the gadgets which we introduce in this section.
The reason is that there are types in $\sigma_2$ which in a certain way are essentially different from the type of $\sigma_1$.
This requires a different construction to be able to satisfy Condition~\ref{con:manual}.\ref{con:six}.
In the following we first introduce the remaining gadgets for $\sigma_1$ and then the ones for $\sigma_2$.

The union $U^{\sigma_1}_\varphi$ has for clause $\zeta_i=\{X_{i,0}, X_{i,1}, X_{i,2}\}$, $i\in \{0,\dots, m-1\}$, the following TS $T_{i,0}$:
\begin{center}
\begin{tikzpicture}[new set = import nodes]
\begin{scope}[nodes={set=import nodes}]
\node (T_0) at (-0.9,0) {$T_{i,0}=$}; 
\foreach \i in {0,...,9} {\coordinate (\i) at (\i*1.5cm,0);}
\foreach \i in {10,...,17} { \pgfmathparse{int{\i-10}}   \coordinate (\i) at (13.5cm-\pgfmathresult*1.5cm,-1.25cm);}
\foreach \i in {0,...,17} {\node (t\i) at (\i) {\nscale{$t_{i,0,\i}$}}; }
\graph {
(t0) <->["\escale{$k$}"] (t1)<->["\escale{$v_{4i}$}"] (t2)<->["\escale{$a_{3i}$}"] (t3)<->[ "\escale{$X_{i,0}$}"] (t4)<-[ "\escale{$X_{i,0}$}"] (t5)<->["\escale{$a_{3i}$}"] (t6)<->["\escale{$a_{3i+1}$}"] (t7)<->[ "\escale{$X_{i,1}$}"] (t8)<-[ "\escale{$X_{i,1}$}"] (t9)<->["\escale{$a_{3i+1}$}"] (t10)<->["\escale{$a_{3i+2}$}"] (t11)<->[  "\escale{$X_{i,2}$}"] (t12)<-["\escale{$X_{i,2}$}"] (t13)<->["\escale{$a_{3i+2}$}"] (t14)<->["\escale{$w_i$}"] (t15)<->["\escale{$k$}"] (t16)<->["\escale{$\_$}"] (t17);
};
\end{scope}
\end{tikzpicture}
\end{center}
Additionally, the union $U^{\sigma_1}_\varphi$ has for clause $\zeta_i$ the following three TSs $T_{i,1}, T_{i,2}$ and $T_{i,3}$: 
\begin{center}
\begin{tikzpicture}[new set = import nodes]
\begin{scope}[nodes={set=import nodes}]
		\node (T_3) at (-0.9,0) {$T_{i,1}=$};
		\foreach \i in {0,...,4} { \coordinate (\i) at (\i*1.7cm,0) ;}
		\foreach \i in {0,...,4} { \node (\i) at (\i) {\nscale{$t_{i,1,\i}$}};}
\graph {
	(import nodes);
			0 <->["\escale{$X_{i,0}$}"]1<->["\escale{$v_{4i+1}$}"]2 <->["\escale{$X_{i,1}$}"]3 <->["\escale{$\_$}"]4;

			};
\end{scope}
\begin{scope}[nodes={set=import nodes}, yshift=-1cm]
		\node (B) at (-0.9,0) {$T_{i,2}=$};
		\foreach \i in {0,...,4} { \coordinate (\i) at (\i*1.7cm,0) ;}
		\foreach \i in {0,...,4} { \node (\i) at (\i) {\nscale{$t_{i,2,\i}$}};}
\graph {
	(import nodes);
			0 <->["\escale{$X_{i,0}$}"]1<->["\escale{$v_{4i+2}$}"]2 <->["\escale{$X_{i,2}$}"]3 <->["\escale{$\_$}"]4;

			};
\end{scope}
\begin{scope}[nodes={set=import nodes}, yshift=-2cm]
		\node (B) at (-0.9,0) {$T_{i,3}=$};
		\foreach \i in {0,...,4} { \coordinate (\i) at (\i*1.7cm,0) ;}
		\foreach \i in {0,...,4} { \node (\i) at (\i) {\nscale{$t_{i,3,\i}$}};}
\graph {
	(import nodes);
			0 <->["\escale{$X_{i,1}$}"]1<->["\escale{$v_{4i+3}$}"]2 <->["\escale{$X_{i,2}$}"]3 <->["\escale{$\_$}"]4;

			};
\end{scope}
\end{tikzpicture}
\end{center}
The TSs $T_{i,0}, \dots, T_{i,3}$ have the initial states $t_{i,0,17}, t_{i,1,4}, t_{i,2,4}$ and $t_{i,3,4}$, respectively, and use the variables $X_{i,0}, X_{i,1}$ and $X_{i,2}$ of $\zeta_i$ as events.
Recall that for Condition~\ref{con:manual}.\ref{con:five}, by $\res\not\in \tau$ for $\tau\in \sigma_1$, the set $M=\{ X\in V(\varphi) \mid sig(X)=\set\}$ has to be a one-in-three model for every region $(sup, sig)$ that inhibits $k$ at $h_{0,2}$.
To prove this condition, we let $i\in \{0,\dots, m-1\}$ be arbitrary and show that a corresponding region satisfies that there is an event $X\in \{X_{i,0}, X_{i,1}, X_{i,2}\}$ such that $sig(X)=\set$ and that $sig(Y)\not=\set$ for $Y\in \{X_{i,0}, X_{i,1}, X_{i,2}\}\setminus \{X\}$.
By the arbitrariness of $i$, this is simultaneously true for all clauses $\zeta_0,\dots, \zeta_{m-1}$.
Thus, $(sup, sig)$ selects exactly one variable of every clause via the \set-signature of the corresponding variable events, which actually makes $M=\{X\in V(\varphi)\mid sig(X)=\set\}$ a one-in-three model of $\varphi$.

If $\tau\in \sigma_1$ and if $(sup, sig)$ is a $\tau$-region of $U^{\sigma_1}_\varphi$ that inhibits $k$ at $h_{0,2}$ then, by definition of $\sigma_1$ and the discussions above, we have that $sig(k)=\free$.
By $sig(k)=\free$, we have $sup(t_{i,0,1})=sup(t_{i,0,15})=0$.
Furthermore, Condition~\ref{con:manual}.\ref{con:three} implies $V \subseteq sig^{-1}(\swap)$ and $W\cap sig^{-1}(\swap)=\emptyset$ and $Acc \cap sig^{-1}(\swap)=\emptyset$.
Thus, by $sup(t_{i,0,1})=0$ and $V \subseteq sig^{-1}(\swap)$ we get $sup(t_{i,0,2})=1$, which with $Acc \cap sig^{-1}(\swap)=\emptyset$ and Observation~\ref{obs:basics} implies $sup(t_{i,0,3})=1$.
Similarly, $sup(t_{i,0,15})=0$, $W\cap sig^{-1}(\swap)=\emptyset$, $Acc \cap sig^{-1}(\swap)=\emptyset$ and Observation~\ref{obs:basics} imply $sup(t_{i,0,13})=0$.
As a result, the image of the following sequence $P_i$ of $T_{i,0}$ contains a subsequence in $\tau$ which starts $0$ and terminate at $1$.
The sequence $P_i$ is given by $P_i=$
\begin{center}
\begin{tikzpicture}[new set = import nodes]
\begin{scope}[nodes={set=import nodes}]
\foreach \i in {3,...,13} { \pgfmathparse{int{\i-3}}   \coordinate (\i) at (\pgfmathresult*1.5cm,0);}
\foreach \i in {3,...,13} {\node (t\i) at (\i) {\nscale{$t_{i,0,\i}$}}; }
\graph {
 (t3)<->[ "\escale{$X_{i,0}$}"] (t4)<-["\escale{$X_{i,0}$}"] (t5)<->["\escale{$a_{3i}$}"] (t6)<->["\escale{$a_{3i+1}$}"] (t7)<->["\escale{$X_{i,1}$}"] (t8)<-[ "\escale{$X_{i,1}$}"] (t9)<->["\escale{$a_{3i+1}$}"] (t10)<->["\escale{$a_{3i+2}$}"] (t11)<->[ "\escale{$X_{i,2}$}"] (t12)<-["\escale{$X_{i,2}$}"] (t13);
};
\end{scope}
\end{tikzpicture}
\end{center}
We obtain immediately that there is at least one event (of $Pi$) whose signature realizes the state change from $0$ to $1$ in $\tau$.
By $Acc\cap sig^{-1}(\swap)=\emptyset$, this event can not be any of $a_{3i},a_{3i+1}, a_{3i+2}$.
Moreover, if $X\in \{X_{i,0}, X_{i,1}, X_{i,2}\}$, $sig(X)\in \{ \nop, \swap, \free\}$ then for $s\edge{X}s'\fbedge{X}s''$, where $s,s',s''$ are pairwise distinct, we get $sup(s)=sup(s'')$.
Thus, there has to be an event $X\in \{X_{i,0}, X_{i,1}, X_{i,2}\}$ such that $sig(X)=\set$. 
In the following, we argue that if $X\in \{X_{i,0}, X_{i,1}, X_{i,2}\}$ and $sig(X)=\set$ then $sig(Y)\not=\set$ for $Y\in \{X_{i,0}, X_{i,1}, X_{i,2}\}\setminus \{X\}$.

If $sig(X_{i,0})=\set$ then, by $\edge{X_{i,0}}t_{i,1,1}$ and $\edge{X_{i,0}}t_{i,2,1}$, we get that $sup(t_{i,1,1})=sup(t_{i,2,1})=1$.
Moreover, by $t_{i,1,1}\edge{v_{4i+1}}t_{i,1,2}$, $t_{i,2,1}\edge{v_{4i+2}}t_{i,2,2}$, $V\subseteq sig^{-1}(\swap)$ and Observation~\ref{obs:basics}, we conclude $sup(t_{i,1,2})=sup(t_{i,2,2})=0$.
Thus, by $\edge{X_{i,1}}t_{i,1,2}$ and $\edge{X_{i,2}}t_{i,2,2}$, we obtain that $sig(X_{i,1})\not=\set$ and $sig(X_{i,2})\not=\set$.
By the symmetry of $T_{i,1}, T_{i,2}$ and $T_{i,3}$ (and similar arguments) it is easy to see that $sig(X_{i,1})=\set$ implies $sig(X_{i,0})\not=\set$ and $sig(X_{i,2})\not=\set$ and that $sig(X_{i,2})=\set$ implies $sig(X_{i,0})\not=\set$ and $sig(X_{i,1})\not=\set$. 

Altogether, so far we have proven that $U^{\sigma_1}_\varphi$ satisfies Condition~\ref{con:manual}.\ref{con:four} and Condition~\ref{con:manual}.\ref{con:five}.
In the remainder of this section we argue that $U^{\sigma_2}_\varphi$ does too.

The union $U^{\sigma_2}_\varphi$ has for clause $\zeta_i=\{X_{i,0}, X_{i,1}, X_{i,2}\}$, $i\in \{0,\dots, m-1\}$, the following TS $T'_{i,0}$:
\begin{center}
\begin{tikzpicture}[new set = import nodes]
\begin{scope}[nodes={set=import nodes}]
\node (T_0) at (-0.9,0) {$T_{i,0}=$}; 
\foreach \i in {0,...,9} {\coordinate (\i) at (\i*1.5cm,0);}
\foreach \i in {10,...,17} { \pgfmathparse{int{\i-10}}   \coordinate (\i) at (13.5cm-\pgfmathresult*1.5cm,-1.25cm);}
\foreach \i in {0,...,17} {\node (t\i) at (\i) {\nscale{$t_{i,0,\i}$}}; }
\graph {
(t0) <->["\escale{$k$}"] (t1)<->["\escale{$w_i$}"] (t2)<->["\escale{$a_{3i}$}"] (t3)<->[ "\escale{$X_{i,0}$}"] (t4)<-[ "\escale{$X_{i,0}$}"] (t5)<->["\escale{$a_{3i}$}"] (t6)<->["\escale{$a_{3i+1}$}"] (t7)<->[ "\escale{$X_{i,1}$}"] (t8)<-[ "\escale{$X_{i,1}$}"] (t9)<->["\escale{$a_{3i+1}$}"] (t10)<->["\escale{$a_{3i+2}$}"] (t11)<->[  "\escale{$X_{i,2}$}"] (t12)<-["\escale{$X_{i,2}$}"] (t13)<->["\escale{$a_{3i+2}$}"] (t14)<->["\escale{$v_{4i}$}"] (t15)<->["\escale{$k$}"] (t16)<->["\escale{$\_$}"] (t17);
};
\end{scope}
\end{tikzpicture}
\end{center}
Notice that the (only) difference between $T'_{i,0}$ and $T_{i,0}$ is the switched position of the events $v_{4i}$ and $w_i$.
This switch is necessary to satisfy Condition~\ref{con:manual}.\ref{con:six}.
However, similar to the TS $T_{i,0}$, the $T'_{i,0}$ has the sequence $P_i$.
The initial state of $T'_{i,0}$ is $t_{i,0,17}$ and, again, it uses $\zeta_i$'s variables as events.
Additionally, the union $U^{\sigma_2}_\varphi$ installs for clause $\zeta_i$ also the TSs $T_{i,1}, T_{i,2}, T_{i,3}$, originally introduced for $U^{\sigma_1}_\varphi$.
If $(sup, sig)$ is a region of $U^{\sigma_2}_\varphi$ that inhibits $k$ at $h_{0,2}$ then, by the discussions of the former section, we have either $sig(k)=\used$ or $sig(k)=\free$.
In the following we argue that if $sig(k)=\used$ then $M=\{X\in V(\varphi) \mid sig(X)=\set\}$ is a one-in-three model of $\varphi$ and, otherwise, $M=\{X\in V(\varphi) \mid sig(X)=\res \}$ is a one-in-three model of $\varphi$.

If $sig(k)=\used$ then $sup(t_{i,0,1})=sup(t_{i,0,15})=1$.
Thus, again by $V\subseteq sig^{-1}(\swap)$, $W\cap sig^{-1}(\swap)=\emptyset$ and $Acc\cap sig^{-1}(\swap)=\emptyset$ we obtain that the image of $P_i $ of $T'_{i,0}$ has a subsequence of $\tau$ that starts at $0$ and terminates at $1$.
Moreover, by $Acc\cap sig^{-1}(\swap)=\emptyset$, the signature of an event of $X_{i,0}, X_{i,1}, X_{i,2}$ has to realize this state change from $0$ to $1$ in $\tau$.
If $sig(X)\in \{\nop, \swap, \used, \free\}$ then $s\edge{X}s'\edge{X}s''$, where $s,s',s''$ are pairwise different, implies $sup(s)=sup(s'')$ and if $sig(X)=\res$ then $sup(s'')=0$.
Thus, there has to be an event $X\in \{X_{i,0}, X_{i,1}, X_{i,2}\}$ such that $sig(X)=\set$.
By the already discussed functionality of $T_{i,1}, T_{i,2}$ and $T_{i,3}$ we obtain that $sig(Y)\not=\set$ for $Y\in \{X_{i,0}, X_{i,1}, X_{i,2}\}\setminus \{X\}$.
By the arbitrariness of $i$ this is simultaneously true for all clauses, thus, $M=\{X\in V(\varphi) \mid sig(X)=\set\}$ is a one-in-three model of $\varphi$.

If $sig(k)=\free$ then, by similar arguments, we conclude that the image of the subsequence $t_{i,0,13}\edge{X_{i,2}}\dots\edge{X_{i,0}}t_{i,0,3}$ is a sequence of $\tau$ that starts at $1$ and terminates at $0$.
Moreover, there has to be an event $X\in \{X_{i,0}, X_{i,1}, X_{i,2}\}$ such that $sig(X)=\res$ which realizes the state change from $1$ to $0$ in $\tau$.
Using again the TSs $T_{i,1}, T_{i,2}, T_{i,3}$ and $V\subseteq sig^{-1}(\swap)$ we obtain that $sig(X)=\res$ for $X\in \{X_{i,0}, X_{i,1}, X_{i,2}\}$ implies that $sig(Y)\not=\res$ for $Y\in \{X_{i,0}, X_{i,1}, X_{i,2}\}\setminus \{X\}$.
Consequently, $M=\{X\in V(\varphi) \mid sig(X)=\res\}$ selects exactly one variable of every clause $\zeta_i$ of $\varphi$, which makes it a sought model.

\subsection{Details for Condition~\ref{con:manual}.\ref{con:six}}%

In this section, we prove that $U^{\sigma_1}_{\varphi}$ satisfies Condition~\ref{con:manual}.\ref{con:six}.
Due to space limitation, we can not present the proof for $U^{\sigma_2}_{\varphi}$ which, however, can be found in the technical report~\cite{DBLP:journals/corr/abs-1806-03703}.

In the remainder of this section let $\tau\in\sigma_1$.
To show the $\tau$-ESSP of $U^{\sigma_1}_{\varphi}$ we restrict ourselves for every $e\in E(U^{\sigma_1}_{\varphi})$ to the presentation of $\tau$-regions that inhibits $e$ in the gadgets $A$ of $U^{\sigma_1}_\varphi$ that actually apply $e$: $e\in E(A)$.
It is easy to see, that $e$ is inhibitable at the remaining states:
In accordance to Section~\ref{sec:compressed}, we use simply the region $(sup, sig)$ which is defined by $sup=S(U^{\sigma_1}_\varphi)\setminus \{s \in S(A)\mid e\in E(A), A\in U^{\sigma_1}_\varphi\}$.

In the following, we firstly provide a sequence of lemmata that altogether prove the $\tau$-ESSP of $U^{\sigma_1}_\varphi$.
Each lemma covers a whole set of events and presents regions that inhibits these events.
The regions are presented in accordance to Section~\ref{sec:compressed} and by "\textbf{targets}" we refer to the states at which the current investigated event is inhibited.
Secondly, we argue that the already presented regions are sufficient for the $\tau$-SSP, too.

\begin{lemma}
The event $k$ is inhibitable in $U^{\sigma_1}_\varphi$.
\end{lemma}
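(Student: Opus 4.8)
The plan rests on the special shape of $\tau=\{\nop,\set,\swap,\free\}$: among its four interactions only \free\ is a proper partial function, and by Figure~\ref{fig:interactions} it is undefined exactly on input $1$. Hence a $\tau$-region $(sup,sig)$ inhibits $k$ at a state $s$ if and only if $sig(k)=\free$ and $sup(s)=1$, and by the compressed convention of Section~\ref{sec:compressed} the equality $sig(k)=\free$ holds precisely when $sup(s')=sup(s'')=0$ for every $k$-edge $s'\fbedge{k}s''$. So all regions used here keep the endpoints of $k$-edges at support $0$ and raise the current targets to $1$. For the only gadgets that do not apply $k$, namely $T_{i,1},T_{i,2},T_{i,3}$, the generic region $sup=S(U^{\sigma_1}_\varphi)\setminus\{s\in S(A)\mid k\in E(A)\}$ already yields $sig(k)=\free$ and assigns support $1$ to exactly their states, thereby inhibiting $k$ at all of them.

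It remains to inhibit $k$ at the non-endpoint states of the $k$-applying gadgets $H_j,F_0,F_1,F_2,G_j,D_\ell,T_{i,0}$. I would present these targets in batches, giving for each batch one support (in the form of Section~\ref{sec:compressed}) that lifts it to $1$ while holding the $k$-edges at $0$. The trailing (initial) states $h_{j,5},f_{0,8},f_{1,5},f_{2,9},g_{j,8},d_{\ell,8},t_{i,0,17}$ are cheap: raising only such a state flips its unique incident event, which therefore receives \swap, and since that event occurs nowhere else the region stays valid. The interior states are handled by regions that raise a whole layer simultaneously; for instance raising all $h_{j,2}$ together with $f_{0,2},f_{0,5}$ forces $sig(m)=\swap$ and, as in the discussion preceding Condition~\ref{con:manual}.\ref{con:three}, the pattern $V\subseteq sig^{-1}(\swap)$, while $F_1,F_2,G_j,D_\ell$ can be held at $0$ with $sig(z)=\nop$. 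The genuinely delicate region is the one that must additionally lift the matching layer inside every clause gadget $T_{i,0}$: here I would invoke the assumed one-in-three model $M$ and set $sig(X)=\set$ exactly for $X\in M$. Because $M$ meets each clause once, along the path $P_i$ of $T_{i,0}$ precisely one variable event realises the forced ascent from $0$ to $1$ by \set, which is exactly what lets the accessory events $a_{3i},a_{3i+1},a_{3i+2}$ and the clause event $w_i$ keep the signature \nop\ and thus closes the region. The remaining interior targets of $G_j,D_\ell$ (which involve no variable) are covered by separate regions with $sig(m)=\nop$ that lift a short run such as $g_{j,2},g_{j,3},g_{j,4}$, setting $sig(y_j)=\swap$ and $sig(z)=\nop$ locally without triggering the clause machinery.

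The main obstacle is this global consistency rather than any single local verification. The events $k,m,z,v_j,w_i,a_\ell$ and the variable events $X_{i,c}$ recur across several component TSs, so a support fixed in one gadget already determines the induced signature of a shared event everywhere, and by Observation~\ref{obs:basics} that signature must remain a legal $\tau$-transition in \emph{every} gadget at once. The critical case is $T_{i,0}$: the constraints $sig(k)=\free$ and $V\subseteq sig^{-1}(\swap)$ force a $0$-to-$1$ climb along $P_i$ that can be completed without pushing any accessory event into \swap\ only when exactly one variable per clause carries \set, i.e.\ only because $M$ is a one-in-three model; this is precisely why inhibiting $k$ at $h_{0,2}$ needs the hypothesis of Condition~\ref{con:manual}.\ref{con:six}. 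Once each batch of targets has been matched with such a globally valid support, every state at which $k$ is missing is covered, and the $\tau$-inhibitability of $k$ in $U^{\sigma_1}_\varphi$ follows.
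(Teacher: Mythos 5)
Your high-level plan is the same as the paper's: restrict attention to the gadgets that actually apply $k$, keep every endpoint of a $k$-labelled edge at support $0$ so that $sig(k)=\free$, and cover the remaining states batch by batch, with exactly one region depending on the one-in-three model $M$. The paper realizes this with three explicit regions $R^k_1,R^k_2,R^k_3$, of which only $R^k_3$ (the one raising $h_{0,2}$, hence forcing $sig(m)=\swap$, $V\subseteq sig^{-1}(\swap)$ and a \set-choice inside each $T_{i,0}$) uses $M$; you correctly single out that region and explain why it needs the model.

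As it stands, though, the proposal has a genuine gap on two counts. First, the substance of this lemma is the explicit exhibition of globally consistent supports for the interior targets, and most of them cannot be reached by the model-dependent region at all: in any region with $sup(h_{0,2})=1$ one has $sup(t_{i,0,15})=0$ and $w_i,a_{3i+2}\notin sig^{-1}(\swap)$, so $t_{i,0,13}$ is forced to $0$; the states $t_{i,0,3},\dots,t_{i,0,14}$ of $T_{i,0}$, as well as $g_{j,5}$, $d_{\ell,5}$, $f_{2,3}$, $f_{2,4}$, therefore need separate, model-independent regions (the paper's $R^k_1$ and $R^k_2$), whose supports must thread through $D_j$, $G_j$, $F_2$ and the gadgets $T_{j,\alpha}$ of the \emph{other} clauses containing the same variable. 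Constructing and verifying these is the hard part, and the proposal does not do it. Second, the one concrete claim you make about the delicate region is false: once $h_{0,2}$ is raised, $sig(m)=\swap$ forces $sup(f_{0,2})=sup(f_{0,5})=1$ (their $m$-partners $f_{0,1},f_{0,4}$ are $k$-endpoints), hence $sig(q_0)=sig(q_1)=\swap$, and since $q_0,q_1$ also label $f_{2,2}\fbedge{q_0}f_{2,3}$ and $f_{2,4}\fbedge{q_1}f_{2,5}$, the gadget $F_2$ cannot be held at support $0$ with $sig(z)=\nop$; the paper's $R^k_3$ accordingly puts $f_{2,2},f_{2,5},f_{2,6}$ (and $f_{1,2}$) into the support. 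Neither issue is fatal to the approach — the required regions do exist — but they are exactly the steps that cannot be waved through.
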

\begin{proof}
$R^k_1$ is defined by \textbf{sup}: $\{d_{j,3},d_{j,4}, d_{j,5} \mid j\in \{0,\dots, 3m-1\} \}$, $\{g_{j,2}, g_{j,3}, g_{j,4} \mid 0\leq j\leq m-1\}$, $\{t_{i,0,3}, t_{i,0,4}, t_{i,0,5}, t_{i,0,7}, t_{i,0,8}, t_{i,0,9},t_{i,0,11}, t_{i,0,12}, t_{i,0,13}\mid i\in \{0,\dots, m-1\}\}$, $\{f_{1,2}, f_{2,2},\dots, f_{2,6}\}$  

\noindent
\textbf{targets}: $\{t_{i,0,3}, t_{i,0,4}, t_{i,0,5}, t_{i,0,7}, t_{i,0,8}, t_{i,0,9},t_{i,0,11}, t_{i,0,12}, t_{i,0,13}\mid 0\leq i\leq m-1\}$, $\{d_{0,5},\dots, d_{3m-1,5}\}$, $\{f_{2,3}, f_{2,4}\}$, $\{f_{2,3}, f_{2,4}\}$

Let $i,j,\ell\in \{0,\dots, m-1\}$ be pairwise distinct and $\alpha,\beta\in \{0,1,2\}$ such that $X_{i,0}=X_{j,\alpha}=X_{\ell,\beta}$.
$R^k_2$ is defined by \textbf{sup}: $\{ t_{i,0,3}, t_{i,0,4}, t_{i,0,6}, \dots, t_{i,0,14}, t_{i,1,0},\dots, t_{i,1,4}, t_{i,2,0}, \dots, t_{i,2,4}\}$, $\{g_{i,3}, g_{i,4}, g_{i,5}\}$, $S(T_{\ell,1}),S(T_{\ell,2}), S(T_{\ell,3})$, $S(T_{j,1}),S(T_{j,2}), S(T_{j,3})$, $\{s, s', s''\mid s\edge{X_{j,\alpha}}s'\edge{X_{j,\alpha}}s'', s\edge{X_{\ell,\beta}}s'\edge{X_{\ell,\beta}}s''\}$, \\  $\{d_{x,2},\dots, d_{x,4}\mid x\in \{0,\dots, 3m-1\}\setminus \{3i, 3j+\alpha, 3\ell+\beta\}\}$, $\{g_{x,2},\dots, g_{x,4}\mid x\in \{0,\dots, m-1\}\setminus \{i, j, \ell\}\}$,  $\{d_{3j+\alpha,3},\dots,d_{3j+\alpha,5}, d_{3\ell+\beta,3},\dots,d_{3\ell+\beta,5}\}$

\noindent
\textbf{targets}: $t_{i,0,6}, t_{i,0,10}, t_{i,0,14}$ and $g_{i,5}$

Let $M$ be a one-in-three model of $\varphi$.
The event $k$ is inhibited at the other relevant states (of TSs that apply $k$) by the following region $R^k_3$ which is the only region that actually depends on the existence of $M$.
More exactly, $R^k_3$ is defined by the support \textbf{sup}: $S_H\cup S_F \cup S_G \cup S_D \cup S_0 \cup \dots \cup S_{m-1}$ where 
$S_H=\{h_{j,2} \mid j\in \{0,\dots, 4m-1\}\}$, $S_F=\{f_{0,2}, f_{0,5}, f_{0,8}, f_{1,2}, f_{1,5}, f_{2,2}, f_{2,5}, f_{2,6}, f_{2,9}\}$, $S_G=\{g_{j,2}, g_{j,3}, g_{j,4}, g_{j,8} \mid j\in \{0,\dots, m-1\}\}$, $S_D=\{ d_{j,2}, d_{j,3}, d_{j,4}, d_{j,8} \mid j\in \{0,\dots, 3m-1\} \}$ and for $i\in \{0,\dots,m-1\}$ the set $S_i$ is defined as follows:

\[S_i=
\begin{cases}
\{t_{i,0,2}, t_{i,0,3}, t_{i,0,4},t_{i,0,17}, t_{i,1,0}, t_{i,1,1}, t_{i,2,0}, t_{i,2,1}\}, & \text{ if } X_{i,0}\in M\\
\{t_{i,0,2}, \dots, t_{i,0,8}, t_{i,1,2}, t_{i,1,3}, t_{i,3,0}, t_{i,3,1}\}, & \text{ if } X_{i,1}\in M\\
\{t_{i,0,2}, \dots, t_{i,0,12}, t_{i,2,2}, t_{i,2,3}, t_{i,3,2}, t_{i,3,3}\}, & \text{ if } X_{i,2}\in M\\
\end{cases}
\]
\end{proof}

\begin{lemma}
The events $v_0,\dots, v_{4m-1}$ are inhibitable in $U^{\sigma_1}_\varphi$.
\end{lemma}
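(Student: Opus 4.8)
The key structural fact I would exploit first is that $\tau=\{\nop,\set,\swap,\free\}$ has exactly one \emph{partial} interaction, namely \free, which is undefined on input $1$. Hence a $\tau$-region $(sup,sig)$ inhibits an event $e$ at a state $s$ if and only if $sig(e)=\free$ and $sup(s)=1$, and by the convention of Section~\ref{sec:compressed} one has $sig(v_j)=\free$ precisely when both endpoints of every $v_j$-labelled transition carry support $0$. So every region I build in this lemma will keep the two endpoints of each $v_j$-edge at support $0$ and raise exactly the states I want to hit to support $1$.

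Next I would localize the problem. Writing $j=4i+r$ with $r\in\{0,1,2,3\}$, the event $v_j$ occurs only in the two gadgets $H_j$ and $T_{i,r}$, at $h_{j,2}\fbedge{v_j}h_{j,3}$ and at the corresponding edge of $T_{i,r}$, and nowhere else. For all states \emph{outside} $H_j\cup T_{i,r}$ I inhibit $v_j$ by the generic region of the section preamble, $sup=S(U^{\sigma_1}_\varphi)\setminus\{\,s\mid v_j\in E(A),\ s\in S(A)\,\}$: this forces $sig(v_j)=\free$, assigns \nop\ to every other event (each such event has all its edges inside a single gadget, hence between states of equal support), and thereby inhibits $v_j$ at every support-$1$ state. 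It then only remains to inhibit $v_j$ at the $v_j$-free states of $H_j$ and $T_{i,r}$.

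For $H_j$ the missing states are $h_{j,0},h_{j,1},h_{j,4},h_{j,5}$; keeping $sup(h_{j,2})=sup(h_{j,3})=0$, a region with $sig(m)=\swap$ and $sig(k)=\nop$ lifts $h_{j,0},h_{j,1}$ to support $1$ (and $h_{j,5}$ via its private event), while a second region with $sig(k)=\swap$ flips the $k$-edge $h_{j,3}\fbedge{k}h_{j,4}$ and so catches $h_{j,4}$. For $r\in\{1,2,3\}$ the gadget $T_{i,r}$ is short, and a single region with the two flanking variable events set to \swap\ hits all of its $v_j$-free states at once (Observation~\ref{obs:basics}).

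The genuinely laborious case, which I expect to be the main obstacle, is $v_{4i}$ inside $T_{i,0}$, where all of $t_{i,0,0},t_{i,0,3},\dots,t_{i,0,17}$ must be reached. Here the accept events $a_{3i},a_{3i+1},a_{3i+2}$ each occur \emph{twice} along the gadget (and recur in the $D$-gadgets), and the variable events $X_{i,0},X_{i,1},X_{i,2}$ reappear in $T_{i,1},T_{i,2},T_{i,3}$ and in the other clauses' $T$-gadgets. Consequently no single support can push every non-$v$ state of $T_{i,0}$ to $1$: forcing $sig(a_{3i})=\swap$ on $t_{i,0,2}\fbedge{a_{3i}}t_{i,0,3}$ already forbids $t_{i,0,5}$ and $t_{i,0,6}$ from both being $1$. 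I would therefore split the targets over a constant number of regions: one with all three accept events set to \swap\ and the variable events to \nop, covering the three variable blocks $t_{i,0,3}t_{i,0,4}t_{i,0,5}$, $t_{i,0,7}t_{i,0,8}t_{i,0,9}$, $t_{i,0,11}t_{i,0,12}t_{i,0,13}$ (and, simultaneously, the $H_{4i}$ states above); one using a shifted support — realising some $X_{i,\cdot}=\set$ through the $(*)$-pattern of Section~\ref{sec:compressed} — to cover the connector states $t_{i,0,6},t_{i,0,10},t_{i,0,14},t_{i,0,15},t_{i,0,16},t_{i,0,17}$; and one with $sig(k)=\swap$ to cover $t_{i,0,0}$ (and, again, $h_{4i,4}$). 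The hard part is discharging \emph{global} validity: each presented support must induce one consistent signature on every shared event over the whole union, so after fixing the wanted values on $H_j\cup T_{i,r}$ I must extend the support across all remaining gadgets — in particular the $D$-gadgets carrying the same accept events and the other $T$-gadgets carrying the same variables — so that $k$, $m$, the accept events and the variable events retain the $\nop/\swap/\set$ behaviour just chosen, checking each edge with Observation~\ref{obs:basics}. Once this bookkeeping is carried out, the listed regions together inhibit every $v_0,\dots,v_{4m-1}$ in $U^{\sigma_1}_\varphi$.
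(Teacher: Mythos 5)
Your proposal is correct and follows essentially the same route as the paper: inhibit $v_j$ outside $H_j\cup T_{i,r}$ with the generic complement region from the section preamble, cover the $H_j$-states with an $m$-\swap\ and a $k$-\swap\ region, cover the short gadgets $T_{i,1},T_{i,2},T_{i,3}$ with one region whose flanking variable events are \swap, and split the states of $T_{i,0}$ between an ``accept events \swap'' region for the three variable blocks and a shifted region realising $X_{i,\cdot}=\set$ for the connector states (the paper's $R^v_1$--$R^v_4$ do exactly this, partly by reusing $R^k_2$ and $R^k_3$ from the preceding lemma). The only difference is presentational: the paper writes out the global supports explicitly, whereas you describe them by their local effect and defer the extension across the $D$- and other $T$-gadgets to bookkeeping, which does go through as you claim.
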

\begin{proof}

Let $j\in \{0,\dots, 4m-1\}$.
The following regions $R^v_1$ and $R^v_2$ inhibit $v_j$ at the presented targets.

Region $R^v_1$ is defined by \textbf{sup}: $\{h_{i,0}, h_{i,1} \mid i\in \{0,\dots, 4m-1\}\}$, $\{f_{0,2}, f_{0,3}, f_{0,4}\}$, 
\textbf{targets}: $h_{j,0}, h_{j,1}$\\

Region $R^v_2$ is defined by \textbf{sup}: $\{h_{i,0}, h_{i,1} \mid i\in \{0,\dots, 4m-1\}\}$, $\{f_{0,2}, f_{0,3}, f_{0,4}\}$\\
\noindent
\textbf{targets}: $\{h_{j,4}, h_{j,5}\}$, $\{t_{i,0,0}, t_{i,0,16}, t_{i,0,17} \mid i\in \{0,\dots, m-1\} \}$\\

Let $i,j,\ell\in \{0,\dots, m-1\}$ be pairwise distinct and $\alpha,\beta\in \{0,1,2\}$ such that $X_{i,0}=X_{j,\alpha}=X_{\ell,\beta}$.
The inhibition of $v_{4i}$ in $T_{i,0}$:
For $t_{i,0,0}, t_{i,0,16}, t_{i,0,17}$ use $R^v_2$, for $\{t_{i,0,3}, \dots, t_{i,0,13}\}\setminus \{t_{i,0,6}, t_{i,0,10}\}$ use $R^k_2$.
For $t_{i,0,6}, t_{i,0,10}, t_{i,0,14}, t_{i,0,15}$ use the region $R^v_3$ which is defined by 
\textbf{sup}:\\ $\{t_{i,0,3}, t_{i,0,4}, t_{i,0,6}, t_{i,0,10}, t_{i,0,14}, \dots, t_{i,0,17}\}$, $\{d_{j,6}, d_{j,7}, d_{j,8}\mid j\in \{4i, 4i+1, 4i+2\}\}$, $S(T_{i,1}), S(T_{i,2})$, $S(T_{j,0}),\dots, S(T_{j,3})$, $S(T_{\ell,0}),\dots, S(T_{\ell,3})$

For $i\in \{0,\dots, m-1\}$ and $\alpha\in \{1,2,3\}$ the region $R^v_4$ defined by \textbf{sup}: $\{t_{j,\alpha,0}, t_{j,\alpha,3}, t_{j,\alpha,4} \mid j\in \{0,\dots, m-1\}, \alpha\in \{1,2,3\} \}$, $\{t_{j,0,4}, t_{j,0,8}, t_{j,0,12}  \mid j\in \{0,\dots, m-1\} \}$ inhibits $v_{4i+\alpha}$ in $T_{i,\alpha}$ at the 
\textbf{targets}:  $\{t_{i,\alpha,0}, t_{i,\alpha,3}, t_{i,\alpha,4} \}$\\
\end{proof}

\begin{lemma}
The events $y_0,\dots, y_{m-1}$ and $p_0,\dots, p_{3m-1}$ are inhibitable in $U^{\sigma_1}_\varphi$.
\end{lemma}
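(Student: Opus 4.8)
The plan is to use that $y_j$ occurs only in $G_j$ and $p_\ell$ only in $D_\ell$, so by the convention opening this section it suffices to inhibit $y_j$ at the five states $g_{j,0},g_{j,3},g_{j,6},g_{j,7},g_{j,8}$ where no $y_j$-arc starts, and symmetrically $p_\ell$ at $d_{\ell,0},d_{\ell,3},d_{\ell,6},d_{\ell,7},d_{\ell,8}$. Since $\tau\in\sigma_1$ and $\free$ is the only interaction of $\tau$ undefined on an input (namely on $1$), every inhibiting region must have $sig(y_j)=\free$ and support $1$ on the chosen target. By the compressed presentation, $sig(y_j)=\free$ forces $sup(g_{j,1})=sup(g_{j,2})=sup(g_{j,4})=sup(g_{j,5})=0$, and the analogous equalities hold for $p_\ell$ on $d_{\ell,1},d_{\ell,2},d_{\ell,4},d_{\ell,5}$. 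As $G_j$ and $D_\ell$ are isomorphic (with $w_j$ in the role of $a_\ell$), I would establish everything for $y_j$ and discharge the $3m$ events $p_0,\dots,p_{3m-1}$ by the mirror-image regions with no extra work.

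For $y_j$ I would present two regions, both fixing $sig(k)=sig(z)=\swap$. This choice is forced: inhibiting $y_j$ at $g_{j,0}$ needs $sup(g_{j,0})=1$ while $sig(y_j)=\free$ gives $sup(g_{j,1})=0$, so the $k$-edge $g_{j,0}\fbedge{k}g_{j,1}$ must flip, and $sup(g_{j,3})=1$ likewise forces the $z$-edges to swap. Propagating $sig(k)=sig(z)=\swap$ together with $sig(y_j)=\free$ along the path $G_j$ pins $sup(g_{j,0})=sup(g_{j,3})=1$ automatically, so two targets come for free; I then set $sup(g_{j,8})=1$ through the unique event joining $g_{j,7}$ and $g_{j,8}$ (its signature is unconstrained), covering a third target in both regions. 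The remaining value $sup(g_{j,6})$ is governed by the free choice of $sig(w_j)$, and $sup(g_{j,7})$ is then forced by the $k$-swap on $g_{j,6}\fbedge{k}g_{j,7}$: taking $sig(w_j)=\nop$ gives $sup(g_{j,6})=0$, $sup(g_{j,7})=1$ and inhibits $y_j$ at $g_{j,0},g_{j,3},g_{j,7},g_{j,8}$, whereas $sig(w_j)=\swap$ gives $sup(g_{j,6})=1$, $sup(g_{j,7})=0$ and inhibits at $g_{j,0},g_{j,3},g_{j,6},g_{j,8}$. The two regions jointly cover all five targets.

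The hard part will be certifying that these $G_j$-local supports extend to globally valid $\tau$-regions of $U^{\sigma_1}_\varphi$, because $k$ and $z$ appear in almost every gadget and $w_j$ (resp. $a_\ell$) is also used in $T_{j,0}$. I would argue structurally rather than by case checking: a region avoiding $\set$ is just a global assignment of a flip/preserve decision to each event (here $\swap$ for $k,z$ and any fixed choice for the rest), and such an assignment is realizable on a component exactly when it forces no parity conflict. Every gadget of $U^{\sigma_1}_\varphi$ is acyclic — a simple path, possibly with the directed $z$- and $X$-chords — so along each component any flip/preserve pattern is achievable by fixing the component's starting value. The only genuine constraints are the repeated occurrences of $k$ (and of $z$) within one gadget, and tracing the path, as already done above for $G_j$, shows the induced relation between the two endpoints is compatible with the global $\swap$ choice. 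The value $0$ demanded by $sig(y_j)=\free$ (resp. $sig(p_\ell)=\free$) merely fixes the otherwise free starting value of $G_j$ (resp. $D_\ell$), while every other component may be flipped independently so that each shared event keeps its single global signature; in particular $w_j$ and $a_\ell$ are made to swap in $T_{j,0}$ as well simply by flipping that component. Having checked one representative gadget of each kind, the arbitrariness of $j$ and $\ell$ then yields inhibitability of $y_0,\dots,y_{m-1}$ and $p_0,\dots,p_{3m-1}$.
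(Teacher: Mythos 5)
Your proof is correct and reaches the same conclusion, but by a genuinely different route than the paper. The paper argues region-first: it names three explicit small supports per gadget --- $R^v_2$ for $g_{j,0},g_{j,7},g_{j,8}$, $R^y_1$ for $g_{j,6}$ and $R^y_2$ for $g_{j,3}$ (resp.\ $R^v_2$, $R^y_2$, $R^p_1$ for $D_\ell$) --- in which nearly every event outside the listed states collapses to $\nop$ or $\free$, and it recycles these named regions later for the SSP. You argue signature-first: inhibiting $y_j$ forces $sig(y_j)=\free$ and support $1$ at the target, which in turn forces $sig(k)=sig(z)=\swap$ to lift $g_{j,0}$ and $g_{j,3}$ to support $1$; the remaining support values follow by propagation, and the free choice $sig(w_j)\in\{\nop,\swap\}$ yields two regions that jointly cover all five targets. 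The point the paper leaves implicit --- that a locally prescribed support extends to a valid region of the whole union --- you settle with a clean structural observation: every gadget is an undirected path, so any $\nop$/$\swap$ labelling of its edges is realizable componentwise by fixing one starting value, and the only absolute ($\free$) constraint lives entirely inside the single component $G_j$ (resp.\ $D_\ell$), where you check consistency directly. Your version buys uniformity (two regions per event, and one isomorphism argument disposing of all $3m$ events $p_0,\dots,p_{3m-1}$) and independence from the paper's shared regions --- note that $R^v_2$ as printed repeats the support of $R^v_1$ and does not contain its own stated targets, a defect your self-contained construction sidesteps --- while the paper's version buys compact, tabulable supports that do double duty as separating regions in the final SSP lemma.
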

\begin{proof}
Let $j\in \{0,\dots, m-1\}$.
The inhibition of $y_j$ in $G_j$ works as follows:
For $y_j$ at $g_{j,0}, g_{j,7}, g_{j,8}$ use the region $R^v_2$.
For $g_{j,6}$ use the region $R^y_1$ defined by \textbf{sup}: $\{t_{j,0,15}, t_{j,0,16}, t_{j,0,17}\}$, $\{g_{j,6}, g_{j,7}, g_{j,8}\}$.
For the remaining state $g_{j,3}$ use the region $R^y_2$ defined by \textbf{sup}: $\{f_{2,4}, f_{2,5}\}$, $\{d_{x,3}\mid x\in \{0,\dots, 3m-1\}\}$, $\{g_{x,3}\mid x\in \{0,\dots, m-1\}\}$.

Let $j\in \{0,\dots, 3m-1\}$.
The inhibition of $p_j$ in $D_j$ works as follows:
For the states $d_{j,0}, d_{j,7}, d_{j,8}$ use the region $R^v_2$ and for $d_{j,3}$ use $R^y_2$.
Finally, for $d_{j,6}$ use $R^p_1$ which is defined by \textbf{sup}$: \{d_{j,6}, d_{j,7}, d_{j,8}\}\cup S$, where $S=\{t_{i,0,3},t_{i,0,3}, T_{i,0,5}\}$ if $j=3i$ and $S=\{t_{i,0,7},t_{i,0,8}, t_{i,0,9}\}$ if $j=3i+1$ and $S=\{t_{i,0,11},t_{i,0,12}, t_{i,0,13}\}$ if $j=3i+2$. 
\end{proof}

\begin{lemma}
The event $z$ is inhibitable in $U^{\sigma_1}_\varphi$.
\end{lemma}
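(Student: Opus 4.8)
The plan is to exploit that, for $\tau\in\sigma_1$, i.e.\ $\tau=\{\nop,\set,\swap,\free\}$, the only interaction of $\tau$ undefined on some input is $\free$ (undefined on $1$), whereas $\nop,\set,\swap$ are total on $\{0,1\}$. Hence the sole way to inhibit $z$ at a state $s$ is to exhibit a region $(sup,sig)$ with $sig(z)=\free$ and $sup(s)=1$. By Section~\ref{sec:compressed}, $sig(z)=\free$ is forced exactly when every $z$-labelled transition has both endpoints in $sup^{-1}(0)$; these endpoints are $f_{2,3},f_{2,4},f_{2,5},f_{2,6}$ in $F_2$, the states $g_{j,2},g_{j,3},g_{j,4}$ in each $G_j$, and $d_{\ell,2},d_{\ell,3},d_{\ell,4}$ in each $D_\ell$. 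Since the generic region described at the beginning of this section already inhibits $z$ at every state of a gadget not applying $z$, it remains to inhibit $z$ at the ``outer'' states $f_{2,0},f_{2,1},f_{2,2},f_{2,7},f_{2,8},f_{2,9}$ of $F_2$ and at $g_{j,0},g_{j,1},g_{j,5},g_{j,6},g_{j,7},g_{j,8}$ and $d_{\ell,0},d_{\ell,1},d_{\ell,5},d_{\ell,6},d_{\ell,7},d_{\ell,8}$ of the $G_j$ and $D_\ell$.

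First I would present a region $R^z$ whose support raises precisely these outer states to $1$ and keeps all $z$-endpoints, all of the $H_j$, all $T$-gadgets and the inner states at $0$, together with $f_{0,0},f_{0,1},f_{0,2}$ in $F_0$ and $f_{1,0},f_{1,1},f_{1,2}$ in $F_1$. From Section~\ref{sec:compressed} this support induces $sig(z)=\free$, so $z$ is inhibited simultaneously at every state of $sup^{-1}(1)$, which is exactly the target list above. The remaining work is to confirm that the induced signature is a genuine $\tau$-region, i.e.\ that every \emph{shared} event receives a globally consistent interaction: one reads off $sig(k)=\nop$ (every $k$-edge has equal endpoints), $sig(m)=sig(v_j)=\nop$, $sig(y_j)=sig(p_\ell)=\swap$ (local to $G_j$, $D_\ell$), and $sig(w_j)=sig(a_\ell)=\nop$. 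The last two are the delicate cases because $w_j$ is shared with $T_{j,0}$ and $a_\ell$ with $T_{i,0}$; both stay consistent precisely because their $T$-endpoints lie in $sup^{-1}(0)$. If a single region is judged less transparent, the same targets split cleanly into the $G_j/D_\ell$ outer states (one region) and the $F_2$ outer states (a second region), since these two state sets are disjoint and share only the events $k,m,v$ (all assigned $\nop$) and the $\free$-event $z$.

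The main obstacle I anticipate is the consistency bookkeeping for the events of $F_0,F_1,F_2$. Raising the outer states $f_{2,2}$ and $f_{2,7}$ to $1$ while their neighbours $f_{2,3},f_{2,6}$ must stay at $0$ forces $sig(q_0)=sig(q_3)=\swap$ by Observation~\ref{obs:basics}; since $q_0,q_1$ also occur in $F_0$ and $q_2,q_3$ in $F_1$, the region is obliged to choose supports there (this is the purpose of including $f_{0,0},f_{0,1},f_{0,2}$ and $f_{1,0},f_{1,1},f_{1,2}$) realizing $sig(q_0)=sig(q_3)=\swap$ and $sig(q_1)=sig(q_2)=\nop$ \emph{without} demanding a non-$\nop$ signature for $m$ --- otherwise that constraint would propagate through the $H_j$ (which also carry $m$) and, via $v_j$, into the $T$-gadgets, breaking the uniform support $0$ there. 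The crux is therefore to verify that the swaps forced in $F_2$ can be absorbed entirely inside $F_0$ and $F_1$ with $m$ held at $\nop$, so that the $H_j$ and all $T$-gadgets remain at support $0$; once this check is carried out, $R^z$ is a valid $\tau$-region inhibiting $z$ at every required state, and the lemma follows.
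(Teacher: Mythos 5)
Your proposal is correct and takes essentially the same route as the paper: the support you describe (outer states of $F_2$, $G_j$, $D_\ell$ together with $f_{0,0},f_{0,1},f_{0,2}$ and $f_{1,0},f_{1,1},f_{1,2}$) coincides exactly with the paper's single region $R^z_1$, which inhibits $z$ at all relevant states via $sig(z)=\free$. You merely spell out the consistency checks (e.g.\ $q_0,q_3\mapsto\swap$ absorbed inside $F_0,F_1$ with $m,k,w_j,a_\ell\mapsto\nop$) that the paper leaves implicit.
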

\begin{proof}
Region $R^z_1$ is defined by \textbf{sup}: $\{f_{j,0}, f_{j,1}, f_{j,2}\mid j\in \{0,1,2\}\}$, $\{d_{j,0}, d_{j,1}, d_{j,5},\dots, d_{j,8}\mid j\in \{0,\dots, 3m-1\}\}$, $\{g_{j,0}, g_{j,1}, g_{j,5},\dots, g_{j,8}\mid j\in \{0,\dots, m-1\}\}$, $\{f_{2,7}, f_{2,8}, f_{2,9}\}$ and inhibits $z$ at all relevant states.
\end{proof}

\begin{lemma}
The event $m$ is inhibitable in $U^{\sigma_1}_\varphi$.
\end{lemma}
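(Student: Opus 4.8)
The guiding observation is that for the single type $\tau=\{\nop,\set,\swap,\free\}$ in $\sigma_1$ the only interaction undefined on some value is \free, which is undefined on $1$. Hence inhibiting $m$ at a state $s$ is equivalent to producing a $\tau$-region $(sup,sig)$ with $sig(m)=\free$ and $sup(s)=1$. Under the compressed convention of Section~\ref{sec:compressed}, $sig(m)=\free$ holds exactly when $sup$ vanishes on every endpoint of an $m$-labelled edge, that is on $h_{j,1},h_{j,2}$ for all $j\in\{0,\dots,4m-1\}$ and on $f_{0,1},f_{0,2},f_{0,4},f_{0,5}$. Since $m$ is applied only by the $H_j$ and by $F_0$, I only have to inhibit $m$ at the non-$m$ states $h_{j,0},h_{j,3},h_{j,4},h_{j,5}$ and $f_{0,0},f_{0,3},f_{0,6},f_{0,7},f_{0,8}$; at every other state $m$ is already inhibited by the generic support $sup=S(U^{\sigma_1}_\varphi)\setminus\{s\in S(A)\mid m\in E(A)\}$ introduced at the opening of this section.

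The heart of the argument is that carrying a target such as $h_{j,0}$ to $1$ while $sup(h_{j,1})=0$ along the edge $h_{j,0}\fbedge{k}h_{j,1}$ forces $sig(k)=\swap$ by Observation~\ref{obs:basics}. I therefore work throughout with \swap-regions for $k$, which immediately delivers $sup(h_{j,0})=sup(f_{0,0})=sup(f_{0,3})=1$ for free. Because $h_{j,3}$ and $h_{j,4}$, like $f_{0,6}$ and $f_{0,7}$, straddle a \swap-edge for $k$, no single region can lift both to $1$, so I use two complementary regions. For $R^m_1$ I additionally take $sig(v_j)=sig(q_0)=sig(q_1)=\swap$, which yields the support pattern $1,0,0,1,0,1$ on every $H_j$ (the value $1$ at the initial $h_{j,5}$ arising from $sig(u_j)=\swap$) and $1,0,0,1,0,0,1,0,1$ on $F_0$, thereby inhibiting $m$ at $h_{j,0},h_{j,3},h_{j,5}$ and at $f_{0,0},f_{0,3},f_{0,6},f_{0,8}$. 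For $R^m_2$ I keep $sig(k)=sig(q_0)=\swap$ but set $sig(v_j)=sig(q_1)=\nop$; this moves the isolated $1$ past the second $k$-edge, giving the pattern $1,0,0,0,1$ on $h_{j,0},\dots,h_{j,4}$ and $1,0,0,1,0,0,0,1$ on $f_{0,0},\dots,f_{0,7}$, and so inhibits the still-missing $h_{j,4}$ and $f_{0,7}$ (while again catching $h_{j,0},f_{0,0},f_{0,3}$). Together $R^m_1$ and $R^m_2$ reach every required target.

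The step I expect to cost the most work is certifying that the forced choice $sig(k)=\swap$ really extends to a \emph{globally} valid support, since $k$ threads through almost every gadget and the three $F$-gadgets share the events $q_0,\dots,q_3,z$ while $T_{i,0}$ additionally carries the \set-detecting branchings on the $X_{i,\cdot}$. I would dispatch this by assigning \nop\ to every event other than $k$ and the handful named above, so that all of $a_\ell,X_{i,\cdot},w_j,y_j,p_\ell,q_2,q_3,z$ become monochromatic; each gadget then reduces to a line whose support is pinned by its two \swap-edges for $k$ and by the freely choosable $\_$-events, so a consistent assignment always exists. In particular the path $F_2$ propagates because each of $q_0,q_1,q_2,q_3,z$ is assigned either \swap\ or \nop, forcing the values coherently between its two $k$-edges, and one checks along the way that no \swap-edge ever sits inside a branching of the form $(*)$ of Section~\ref{sec:compressed}; hence the compressed signature reads \swap\ on $k,v_j,q_0,q_1$ and \nop\ everywhere else, exactly as intended, so that both $R^m_1$ and $R^m_2$ are genuine $\tau$-regions.
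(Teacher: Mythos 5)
Your proposal is correct and follows essentially the same method as the paper: reduce inhibition of $m$ to finding compressed $\tau$-regions with $sig(m)=\free$ whose support is $1$ at the target states $h_{j,0},h_{j,3},h_{j,4},h_{j,5}$ and $f_{0,0},f_{0,3},f_{0,6},f_{0,7},f_{0,8}$, handle all other gadgets by the generic complement region, and verify consistency gadget by gadget. The concrete supports differ slightly — the paper uses three regions, one of which ($R^m_3$, covering $h_{j,3}$) takes $sig(k)=\nop$ rather than \swap, so your side remark that no single region can put both $h_{j,3}$ and $h_{j,4}$ at $1$ holds only within your $k$-\swap\ framework — but this does not affect the validity of your construction, whose two regions do cover every required target.
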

\begin{proof}
Region $R^m_1$ is defined by \textbf{sup}: $\{f_{0,0} , f_{0,3}, f_{0,6}\}$, $\{f_{1,0}, f_{1,4}, f_{1,5}\}$, $\{f_{2,1}, f_{2,2}, f_{2,5}, f_{2,6}, f_{2,7}\}$, $\{h_{j,0}, h_{j,4}, h_{j,5} \mid j\in \{0,\dots, 4m-1\}\}$, $\{d_{j,0}, d_{j,7}, g_{\ell,0}, g_{\ell,7}\mid j\in \{0,\dots, 3m-1\}, \ell\in \{0,\dots, m-1\}\}$, $\{t_{i,0,0}, t_{i,0,16} \mid i\in \{0,\dots, m-1\}$, 
\textbf{targets}: $\{h_{j,0}, h_{j,4}, h_{j,5} \mid j\in \{0,\dots, 4m-1\}\}$, $\{f_{0,0} , f_{0,3}, f_{0,6}\}$, \\

Region $R^m_2$ is defined by \textbf{sup}: $sup(R^m_1)\setminus \{f_{0,6},f_{2,5}, f_{2,6}, f_{2,7}\}$, $\{f_{0,7}, f_{0,8}, f_{2,7}, f_{2,8}\}$, 
\textbf{targets}:  $\{f_{0,7} , f_{0,8}\}$

Region $R^m_3$ is defined by \textbf{sup}: $\{h_{j,3}, h_{j,4}, h_{j,5}\mid j\in \{0,\dots, 4m-1\}\}$, $\{t_{i,j,0}, t_{i,j,1}\mid j\in \{0,\dots,3\}, i\in \{0,\dots, m-1\}\}$, 
\textbf{targets}:  $\{h_{j,3}\mid j\in \{0,\dots, 4m-1\}\}$
\end{proof}

\begin{lemma}
The events $w_0,\dots, w_{m-1}$ are inhibitable in $U^{\sigma_1}_\varphi$.
\end{lemma}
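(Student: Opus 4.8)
Throughout let $\tau=\{\nop,\set,\swap,\free\}$, so $\free$ is the only interaction undefined at a point of $\{0,1\}$, namely at $1$; hence a region inhibits an event $e$ at a state $s$ exactly when $sig(e)=\free$ and $sup(s)=1$. Fix $j\in\{0,\dots,m-1\}$. The event $w_j$ occurs only at $g_{j,5},g_{j,6}$ in $G_j$ and at $t_{j,0,14},t_{j,0,15}$ in $T_{j,0}$, and nowhere else in $U^{\sigma_1}_\varphi$. By the convention of Section~\ref{sec:compressed}, a support yields $sig(w_j)=\free$ precisely when these four states are mapped to $0$, and such a region then inhibits $w_j$ at exactly its support-$1$ states. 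So the whole task reduces to covering, by support-$1$ positions of valid regions that keep the four states at $0$, every state of $G_j$ and $T_{j,0}$ at which $w_j$ is absent; all states outside $G_j\cup T_{j,0}$ are handled by the generic region described at the start of this section.

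The plan is to reuse two earlier regions and add one new one. First I would reuse $R^k_1$ and $R^m_1$: both keep $g_{j,5},g_{j,6},t_{j,0,14},t_{j,0,15}$ at $0$, hence sign $w_j$ with $\free$, so $R^k_1$ inhibits $w_j$ at its support-$1$ states $g_{j,2},t_{j,0,13}$ and $R^m_1$ at $g_{j,7},t_{j,0,16}$. The remaining targets $g_{j,0},g_{j,1},g_{j,3},g_{j,4},g_{j,8}$ and $t_{j,0,0},\dots,t_{j,0,12},t_{j,0,17}$ I would cover by a single new region $R^w_1$ whose support is
\[
sup=S(U^{\sigma_1}_\varphi)\setminus\{g_{j,2},g_{j,5},g_{j,6},g_{j,7},t_{j,0,13},t_{j,0,14},t_{j,0,15},t_{j,0,16}\}.
\]
By Section~\ref{sec:compressed} this support forces $sig(w_j)=\free$ (all four $w_j$-states lie in the removed set) and gives support $1$ to every remaining target.

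The hard part is to check that this support is a \emph{valid} region, i.e.\ that the induced signature sends every transition to one of $\tau$; here the events shared across gadgets are the obstacle, since each of $k,m,z,a_\bullet,v_\bullet$ and the variable events must receive one interaction validating all of its arcs at once. Using Observation~\ref{obs:basics} I expect: all $k$-, $m$-, $v_\bullet$- and $a_\bullet$-edges have equal-support endpoints, so these events are signed $\nop$; the pattern $g_{j,2}\edge{z}g_{j,3}\fbedge{z}g_{j,4}$ with $sup(g_{j,2})=0$ and $sup(g_{j,3})=sup(g_{j,4})=1$ forces $sig(z)=\set$, which stays globally valid because every other $z$-neighbourhood ($g_{\ell,3},g_{\ell,4}$, $d_{\ell,3},d_{\ell,4}$, $f_{2,3},\dots,f_{2,6}$) carries support $1$; the variable events are $\nop$ except $X_{j,2}$, signed $\set$ by the analogous pattern $t_{j,0,13}\edge{X_{j,2}}t_{j,0,12}\fbedge{X_{j,2}}t_{j,0,11}$ and valid since $\set(0)=1=sup(t_{j,0,12})$; the local $y_j$ is $\swap$ (its edges $g_{j,1}\fbedge{y_j}g_{j,2}$ and $g_{j,4}\fbedge{y_j}g_{j,5}$ cross the $0/1$-boundary); and the only other boundary crossings, at $g_{j,8}$ and $t_{j,0,17}$, sit on the unique underscore events, which may freely take $\swap$. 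Crucially, no event other than $w_j$ becomes $\free$, because $w_j$ is the only event all of whose edges lie inside the removed $0$-set.

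Combining $R^w_1$ with the reused $R^k_1$ and $R^m_1$ then inhibits $w_j$ at every state of $G_j$ and $T_{j,0}$ at which it does not occur, and the generic region handles the rest; by the arbitrariness of $j$ this establishes the inhibitability of $w_0,\dots,w_{m-1}$. The delicate step I would write out in full is exactly the global validity of $R^w_1$ across the shared variable events $X_{i,\bullet}$, each occurring in three clauses: one must confirm that no extra occurrence of a variable forces a conflicting support outside $T_{j,0}$. I expect this to go through precisely because every such extra occurrence lies on flat $1$--$1$ edges, on which both $\nop$ and $\set$ act as the identity, so that the signature chosen by the convention of Section~\ref{sec:compressed} validates all of them simultaneously.
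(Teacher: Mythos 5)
Your proposal is correct and follows essentially the same strategy as the paper: exhibit explicit compressed supports in the sense of Section~3.1 that give $w_j$ the signature $\free$, combining one new region for the bulk of the states with reused regions for the few remaining support-$0$ states adjacent to the occurrences of $w_j$. The only differences are bookkeeping choices — you reuse $R^k_1$ and $R^m_1$ where the paper reuses $R^k_3$, $R^k_2$ and $R^v_2$, and your new region $R^w_1$ is the complement of an eight-state set rather than the paper's more selectively built support — and your validity check of the induced signature (in particular $sig(z)=sig(X_{j,2})=\set$, $sig(y_j)=\swap$) goes through.
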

\begin{proof}
Let $i\in \{0,\dots, m-1\}$.
Region $R^k_3$ inhibits $w_i$ at $g_{i,2}, t_{i,0,17}$ and $R^k_2$ at $t_{i,0,13}$ and $R^v_2$ at $g_{i,7}, g_{i,8}, t_{i,16}$.

Let $i,j,\ell\in \{0,\dots, m-1\}$ be pairwise distinct.
The following support defines region $R^w_1$ that inhibits $w_i$ at the (remaining) states $t_{i,0,0}, \dots, t_{i,0,12}$: 
\textbf{sup}: $\{t_{i,0,0}, \dots, t_{i,0,12}\}$, $\bigcup^{m-1}_{j=0}(S(T_{j,1})\cup S(T_{j,2}) \cup S(T_{j,3}))$, $\bigcup^{m-1}_{i\not= j=0}S(T_{j,0})$, $S(F_2)$, $\bigcup^{3m-1}_{ j=0}S(D_j)$, $\bigcup^{m-1}_{ i\not=j=0}S(G_j)$, $\{g_{j,0}, g_{j,1}, g_{j,3}, g_{j,4}\}$
\end{proof}

\begin{lemma}
The events $X_0,\dots, X_{m-1}$ are inhibitable in $U^{\sigma_1}_\varphi$.
\end{lemma}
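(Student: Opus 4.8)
The plan is to exploit the very special shape of $\tau\in\sigma_1=\{\{\nop,\set,\swap,\free\}\}$: among its four interactions only \free\ is partial, being undefined precisely at $1$, whereas \nop, \set, and \swap\ are total. Consequently a variable event $X$ is $\tau$-inhibitable at a state $s$ with $\neg s\edge{X}$ \emph{if and only if} there is a region $(sup,sig)$ with $sig(X)=\free$ and $sup(s)=1$. Moreover, by the compressed-region convention of Section~\ref{sec:compressed}, a support induces $sig(X)=\free$ exactly when $sup(s')=sup(s'')=0$ for every transition $s'\edge{X}s''$ of $U^{\sigma_1}_\varphi$ (the \set- and \swap-cases of the convention are ruled out once all $X$-incident states lie at $0$). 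So the whole task reduces to producing, for each target state, a support that pins every $X$-incident state to $0$ and lifts the target to $1$, while still being a valid region.

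First I would localise the gadgets that apply a fixed variable $X$. As $\varphi$ is cubic monotone, $X$ occurs in exactly three clauses $\zeta_{i_1},\zeta_{i_2},\zeta_{i_3}$, so $X=X_{i_r,\alpha_r}$ for $r\in\{1,2,3\}$; it appears in the three clause gadgets $T_{i_r,0}$ (each time inside a junction $s\fbedge{X}s'\bedge{X}s''$) and, within each clause, in exactly two of the auxiliary gadgets $T_{i_r,1},T_{i_r,2},T_{i_r,3}$. By the introductory remark of this section, $X$ is already inhibited at every state outside these gadgets by the trivial region $sup=S(U^{\sigma_1}_\varphi)\setminus\{s\in S(A)\mid X\in E(A),\,A\in U^{\sigma_1}_\varphi\}$, so it remains only to inhibit $X$ at the states of the listed gadgets where it does not occur.

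Next I would assemble the required regions in the style of the preceding lemmata. The backbone of each is the support that is $0$ on all $X$-incident states; this forces $sig(X)=\free$ and, by Observation~\ref{obs:basics}, constrains the neighbouring $v$-events (which sit adjacent to the $X$-junctions) to \swap\ or \nop\ and the surrounding $a$-events accordingly. Raising a chosen target to $1$ is then carried out locally by extending the support along the chains, so that each shared event ($k$, $m$, and the $v$-, $w$-, $a$-events as well as the other variables) receives a consistent interaction. Since every target may be served by its own region, I would group targets by gadget and, where possible, reuse a region already defined in an earlier lemma whose support happens to keep the $X$-incident states at $0$ and the target at $1$.

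The main obstacle will be bookkeeping of global consistency rather than any conceptual difficulty: the variable $X$ is scattered across three long, alternately oriented $18$-state chains $T_{i_r,0}$ together with their auxiliary gadgets, and each constructed support must simultaneously hold all $X$-incident states at $0$ in all three clause gadgets, realise the chosen target at $1$, and still induce a legal interaction for every remaining event on those chains --- in particular avoiding a forbidden $sup(s)\neq sup(s')$ across a non-\swap\ edge, or an inadmissible \set/\free\ pattern at the $\fbedge$-junctions and inside $T_{i_r,1},T_{i_r,2},T_{i_r,3}$. Verifying that the propagated support never violates these constraints, through its interaction with the \swap-typed $v$-events, is the routine but lengthy part of the argument; the existence of an inhibiting region itself is guaranteed by the structural observation above.
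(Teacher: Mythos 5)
Your structural framing is correct and is in fact the same route the paper takes: for $\tau\in\sigma_1$ the interactions $\nop,\set,\swap$ are total, so the only way to inhibit a variable event $X$ at a state $s$ is a region with $sig(X)=\free$ and $sup(s)=1$, which forces $sup=0$ on every $X$-incident state; your localisation of the occurrences of $X$ (three clause gadgets $T_{i_r,0}$ plus exactly two of the auxiliary gadgets $T_{i_r,1},T_{i_r,2},T_{i_r,3}$ per clause) and the appeal to the section's opening remark for all states outside those gadgets are also both right.

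The gap is that you never actually produce a region. Everything from ``Raising a chosen target to $1$ is then carried out locally\dots'' onwards describes what a proof would have to do rather than doing it, and the closing claim that ``the existence of an inhibiting region itself is guaranteed by the structural observation above'' is not justified: the observation is a \emph{necessary} condition on the shape of an inhibiting region, not a sufficient condition for one to exist. In this construction that distinction is precisely the point --- the gadgets are engineered so that a region of the prescribed shape inhibiting $k$ at $h_{0,2}$ exists only when $\varphi$ has a one-in-three model, so one may not assume that a support which zeroes all $X$-incident states and raises a given target extends to a legal region; the shared events $v_{4i+1},v_{4i+2}$ (forced to $\swap$ by the flip inside $T_{i,1},T_{i,2}$ and therefore constrained inside $H_{4i+1},H_{4i+2}$), $a_{3i}$ (constrained inside $D_{3i}$), $m$ and $k$ all have to be checked for consistency across gadgets. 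The paper discharges this by writing down one explicit support per variable occurrence, namely for $X_{i,0}$ the set $\{t_{i,0,0},t_{i,0,1},t_{i,0,2},t_{i,0,6},\dots,t_{i,0,17}\}\cup\{t_{i,1,2},t_{i,1,3},t_{i,1,4},t_{i,2,2},t_{i,2,3},t_{i,2,4}\}$ together with the compensating states in $H_{4i+1},H_{4i+2}$ and $D_{3i}$, and handling $X_{i,1},X_{i,2}$ by symmetry. That single, checkable object --- and the verification that every shared event receives one coherent interaction under it --- is what your write-up is missing; the part you defer as ``routine but lengthy'' is the entire content of the lemma.
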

\begin{proof}
We let $i\in \{0,\dots, m-1\}$ and explicitly show the inhibition of $X_{i,0}$ in $T_{i,0}, \dots, T_{i,3}$.
By symmetry, the inhibition of $X_{i,1}$ and $X_{i,2}$ works perfectly similar. 
By the arbitrariness of $i$ this proves the lemma.
The corresponding support is defined by \textbf{sup}: $\{t_{i,0,0}, t_{i,0,1}, t_{i,0,2},  t_{i,0,6}, \dots, t_{i,0,17}\}$,   $\{t_{i,1,2}, t_{i,1,3},t_{i,1,4},  t_{i,2,2}, t_{i,2,3},t_{i,2,4}\}$, $\{h_{4i+1,0}, h_{4i+1,1}, h_{4i+2,0}, h_{4i+2,1}\}$, $\{d_{3i,6}, d_{3i,7}, d_{3i,8}\}$.
\end{proof}

\begin{lemma}
The events $q_0,\dots, q_3$ are inhibitable in $U^{\sigma_1}_\varphi$.
\end{lemma}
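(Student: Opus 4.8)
The plan is to treat $q_0,\dots,q_3$ by the same scheme used for the preceding events. Each $q_i$ occurs only inside $F_0,F_1,F_2$, so the default region described at the start of the section (the support $S(U^{\sigma_1}_\varphi)\setminus\{s\in S(A)\mid q_i\in E(A)\}$, which assigns $sig(q_i)=\free$) already inhibits $q_i$ at every state outside the gadgets that apply it. Hence only the internal states of $F_0,F_2$ (for $q_0,q_1$) and of $F_1,F_2$ (for $q_2,q_3$) remain, and for these I would exhibit explicit $\tau$-regions in the compressed notation of Section~\ref{sec:compressed}. The driving structural fact is that $\free$ is the only interaction of $\tau=\{\nop,\set,\swap,\free\}$ that is anywhere undefined (namely on input $1$); hence a region inhibits $q_i$ at a state $s$ if and only if $sig(q_i)=\free$ and $sup(s)=1$, and $sig(q_i)=\free$ forces both endpoints of every bidirectional $q_i$-edge to support $0$. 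So every region I build pins the $q_i$-edges to $0$, places the currently targeted states at $1$, and must then be completed to a globally consistent support.

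Along each gadget path a $q_i$-edge is flanked by $k$-, $m$- and $z$-edges, and by Observation~\ref{obs:basics} a neighbour of a $0$-pinned endpoint can be raised to $1$ precisely by giving the connecting event the signature $\swap$, whereas $\nop$ keeps it at $0$. A critical consequence is that a target like $f_{0,4}$, whose only neighbours are the pinned endpoint $f_{0,3}$ across a $k$-edge and $f_{0,5}$ across an $m$-edge, can be lifted to $1$ only when $sig(k)=\swap$: the transition $sup(f_{0,3})\edge{sig(k)}sup(f_{0,4})$ together with its reverse rules out $\nop,\set,\free$ and admits only $\swap$. Since a $0$-pinned state blocks all its $\nop$-neighbours, no single choice of the signatures of $k,m,z$ lifts every target at once. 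I would therefore supply, for the four events jointly, a small fixed family of regions: several with $sig(k)=\nop$ and suitably chosen $sig(m),sig(z)\in\{\nop,\swap\}$, which lift the targets separated from a pinned endpoint by an $m$- or $z$-edge; and one with $sig(k)=\swap$ and all four $q_i$ set to $\free$, which simultaneously lifts every target adjacent to a pinned endpoint across a $k$-edge. Because all inhibiting regions share the single device $sig(q_i)=\free$, each such support handles several of the $q_i$ at the same time, keeping the number of regions small.

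The main obstacle, exactly as in the earlier lemmas, is not the inhibition itself but the global consistency of the shared events $k,m,z$ and the $v_j$ once the partial support is extended to all of $U^{\sigma_1}_\varphi$: the compressed rule assigns each event a signature determined by its edges everywhere, so every such event must come out uniformly $\nop$, $\swap$, or $\set$. In the $\nop$-regions this is immediate, since I set every gadget outside $F_0,F_1,F_2$ to support $0$, making all external $k$-, $m$-, $z$- and $v_j$-edges non-flipping and hence compatible with $\nop$, while inside the $F$'s the uniformity is read off edge by edge. The delicate case is the $k=\swap$ region, because $sig(k)=\swap$ forces every $k$-edge of the whole union to flip; here I would verify that a consistent completion still exists by noting that the $k$-edges form a matching inside each acyclic gadget, so the forced two-colouring propagates along the paths without frustration, and that the induced signatures of $m,z,v_j$ and of the variable and accessory events are each uniform and never $\set$ for the chosen alternation. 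Writing down these three or four concrete supports and checking the uniformity conditions event by event is the routine but bookkeeping-heavy part that completes the proof; the $\tau$-SSP, as announced, is then obtained from the regions already presented.
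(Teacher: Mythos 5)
Your approach is the paper's approach: reduce to the gadgets $F_0,F_1,F_2$ via the default all\hbox{-}outside\hbox{-}$1$ region, observe that in $\tau=\{\nop,\set,\swap,\free\}$ inhibition of $q_i$ at $s$ forces $sig(q_i)=\free$ with $sup(s)=1$ and hence pins every $q_i$-edge to support $0$, and then exhibit compressed regions that lift each target to $1$ by putting \swap\ on the connecting events; your specific observations (e.g.\ that $f_{0,4}$ forces $sig(k)=\swap$, and that the $k$-edges form a matching on path-shaped gadgets so a global \swap-assignment exists) are correct and are exactly what the paper's explicit supports $R^{q_0}_0,\dots,R^{q_3}_1$ realize. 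Two caveats. First, your enumeration of the region family is too narrow as stated: you only vary $sig(k),sig(m),sig(z)$ and keep ``all four $q_i$'' at \free, but several targets are separated from a pinned endpoint only by another $q_j$-edge --- e.g.\ for $q_0$ the target $f_{2,1}$ sits between the pinned $f_{2,2}$ (across $q_2$) and $f_{2,0}$ (across $k$), so it needs $sig(q_2)=\swap$ (the paper's $R^{q_0}_2$), and likewise $f_{1,3}$ for $q_2$ needs $sig(q_3)=\swap$ ($R^{q_2}_2$); with all $q_j$ at \free\ and $sig(k)\in\{\nop,\swap\}$ such states are forced to $0$. This is easily repaired but means your family, as described, provably fails to cover some targets. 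Second, the substance of the paper's proof is precisely the concrete supports together with the event-by-event uniformity check across \emph{all} gadgets sharing $k,m,z,v_j$; you correctly identify this obligation but defer it, so what you have is a sound construction plan rather than a completed proof.
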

\begin{proof}
($q_0$):
Region $R^{q_0}_0$ defined by \textbf{sup}: $\{h_{j,0}, h_{j,1} \mid j\in \{0,\dots, 4m-1\}\}$, $\{f_{0,0}, f_{0,1}, f_{0,5},\dots,f_{0,8}\}$, 
\textbf{targets}: $S(F_0)\setminus \{f_{0,4}\}$ 

Region $R^{q_0}_1$ defined by \textbf{sup}: $\{h_{j,0}, h_{j,4}, h_{j,5}  \mid j\in \{0,\dots, 4m-1\}\}$, $\{t_{j,0,0}, t_{j,0,16}, t_{j,0,17} \mid j\in \{0,\dots, m-1\}\}$, $\{d_{j,0}, d_{j,7}, d_{j,8} \mid j\in \{0,\dots, 4m-1\}\}$, $\{f_{1,4}, f_{1,5}\}$, $\{f_{2,0}, f_{2,8}, f_{2,9}\}$, $\{g_{j,0}, g_{j,7}, g_{j,8} \mid j\in \{0,\dots, m-1\}\}$, $\{f_{0,0}, f_{0,4}, f_{0,5},f_{0,6}\}$, 
\textbf{target}: $ \{f_{0,4}\}$ 

Region $R^{q_0}_2$ defined by \textbf{sup}: $\{f_{0,6}, f_{0,7}, f_{0,8}\}$, $\{f_{1,0}, f_{1,1}\}$, $\{f_{2,0}, f_{2,1}, f_{2,5},\dots, f_{2,9}\}$, 
\textbf{targets}: $ \{f_{2,0}, f_{2,1}, f_{2,5},\dots, f_{2,9}\}$, 

Region $R^{q_0}_3$ defined by \textbf{sup}: $\{f_{2,4}, f_{2,5}\}$, $\{d_{j,3} \mid j\in \{0,\dots, 3m-1\} \}$, $\{ g_{\ell,3} \mid  \ell\in \{0,\dots, m-1\}\}$, 
\textbf{target}: $ \{f_{2,4}\}$ 

\noindent
$(q_1)$:
Region $R^{q_1}_0$ defined by \textbf{sup}: $\{h_{j,0}, h_{j,1} \mid j\in \{0,\dots, 4m-1\}\}$, $\{f_{0,2}, f_{0,3}, f_{0,4}\}$, 
\textbf{targets}: $ \{f_{0,2}, f_{0,3}\}$ 

Region $R^{q_1}_1$ defined by \textbf{sup}: $\{f_{0,0}, f_{0,1}, f_{0,2}, f_{2,0}, f_{2,1}, f_{2,2}\}$, 
\textbf{targets}: $ \{f_{0,0}\}$ 

Region $R^{q_1}_3$ defined by \textbf{sup}: $\{t_{j,0,0}, t_{j,0,16} \mid j\in \{0,\dots, m-1\}\}$, $\{ h_{j,1}, h_{j,4} \mid j\in \{0,\dots, 4m-1\}\}$, $\{ f_{0,1}, f_{0,4},f_{0,7}, f_{0,8}\}$,  $\{f_{1,1}, f_{1,3}, f_{2,2}\}$, $\{d_{j,0}, d_{j,7} \mid j\in \{0,\dots, 3m-1\} \}$,  $\{g_{\ell,0},g_{\ell,7} \mid \ell\in \{0,\dots, m-1\}\}$, 
\textbf{targets}: $ \{f_{0,1}, f_{0,7}, f_{0,8}\}$ 

Region $R^{q_1}_4$ defined by \textbf{sup}: $ S(F_2)\setminus \{f_{2,3}, f_{2,5}\}$, $\{d_{j,3} \mid j\in \{0,\dots, 3m-1\} \}$, $\{ g_{\ell,3} \mid \ell\in \{0,\dots, m-1\}\}$, 
\textbf{targets}: $ S(F_2)$

$(q_2)$:
Region $R^{q_2}_0$ defined by \textbf{sup}: $\{f_{0,0}, f_{0,1}, f_{0,2}, f_{2,3}, \dots, f_{2,9}\}$, 
\textbf{targets}: $ \{f_{2,3}, \dots, f_{2,9} \}$

Region $R^{q_2}_1$ defined by \textbf{sup}: $\{t_{j,0,0}, t_{j,0,16} \mid j\in \{0,\dots, m-1\}\}$, $\{ h_{j,0}, h_{j,4} \mid j\in \{0,\dots, 4m-1\}\}$, $ \{f_{0,0}, f_{0,6}\}$, $\{f_{1,0}, f_{1,4}, f_{1,5}  \}$, $\{f_{2,0}, f_{2,3}, f_{2,5}, f_{2,8}, f_{2,9}\}$, $\{d_{j,3} \mid j\in \{0,\dots, 3m-1\} \}$, $\{ g_{\ell,3} \mid \ell\in \{0,\dots, m-1\}\}$,
\textbf{targets}: $ \{f_{1,0}, f_{1,4}, f_{1,5}, f_{2,0} \}$ 

Region $R^{q_2}_2$ defined by \textbf{sup}: $ \{f_{1,3}, f_{1,4}, f_{1,5}, f_{2,7}, f_{2,8}, f_{2,9} \}$, 
\textbf{targets}: $\{f_{1,3}\}$

$(q_3)$:
We reuse Region $R^{q_2}_1$ for the \textbf{targets}: $ \{f_{1,0}, f_{1,4}, f_{1,5}, f_{2,5}, f_{2,8} , f_{2,9}\}$. 

Region $R^{q_3}_0$ defined by \textbf{sup}: $ \{ f_{1,0}, f_{1,1}, f_{2,0}, f_{2,1} \}$, 
\textbf{targets}: $\{f_{1,0}, f_{1,1}\}$ 

Region $R^{q_3}_1$ defined by \textbf{sup}: $ \{ f_{0,6}, f_{0,7}, f_{0,8}\}$,  $\{f_{2,0}, \dots, f_{2,4}\}$, 
\textbf{targets}: $\{f_{2,0}, \dots, f_{2,4}\}$ 
\end{proof}

\begin{lemma}\label{lem:event_a}
The events $a_0,\dots, a_{3m-1}$ are inhibitable in $U^{\sigma_1}_\varphi$.
\end{lemma}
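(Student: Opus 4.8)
The plan is to inhibit each $a_\ell$, $\ell=3i+\gamma$ with $\gamma\in\{0,1,2\}$, only inside the two gadgets that actually apply it, namely $T_{i,0}$ and $D_\ell$; at every state of a gadget not using $a_\ell$ the event is already inhibited by the trivial region $sup=S(U^{\sigma_1}_\varphi)\setminus\{s\in S(A)\mid a_\ell\in E(A),\,A\in U^{\sigma_1}_\varphi\}$, exactly as explained at the start of the section. By the symmetry of the three $a$-occurrences in $T_{i,0}$ and of the branches $T_{i,1},T_{i,2},T_{i,3}$, and since $D_\ell$ is isomorphic to a gadget $G_j$ with $p_\ell,a_\ell$ playing the roles of $y_j,w_j$ (both sitting on the $5$--$6$ edge), it suffices to treat $a_{3i}$ in $T_{i,0}$ and to transport the supports that inhibit $w_i$ in $G_i$ along this isomorphism to cover the states $d_{\ell,0},\dots,d_{\ell,4},d_{\ell,7},d_{\ell,8}$ of $D_\ell$.

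First I would record that every $a_\ell$-transition is an isolated bidirectional edge ($d_{\ell,5}\fbedge{a_\ell}d_{\ell,6}$ and, for $a_{3i}$, $t_{i,0,2}\fbedge{a_{3i}}t_{i,0,3}$, $t_{i,0,5}\fbedge{a_{3i}}t_{i,0,6}$), so the pattern $(*)$ of Section~\ref{sec:compressed} never arises and $sig(a_\ell)\in\{\nop,\swap,\free\}$ for every support. As $\free$ is the only interaction of $\sigma_1$ that is partial (undefined on $1$), a region inhibits $a_\ell$ precisely at its support-$1$ states once $sig(a_\ell)=\free$, and $sig(a_\ell)=\free$ is forced the moment all endpoints of the $a_\ell$-edges carry support $0$. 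Hence every region I build puts $t_{i,0,2},t_{i,0,3},t_{i,0,5},t_{i,0,6}$ and $d_{\ell,5},d_{\ell,6}$ on $0$ and the currently targeted states on $1$.

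Next I would cover the internal targets $t_{i,0,0},t_{i,0,1},t_{i,0,4},t_{i,0,7},\dots,t_{i,0,17}$ of $T_{i,0}$ by a short family of \emph{window} supports, each placing a block of consecutive states on $1$ and the rest on $0$, shifting the window so that every target gets support $1$ in at least one region. For each window the induced signature is read off with Observation~\ref{obs:basics}: a $0/1$ boundary placed on a $v_{4i}$-, $X_{i,\cdot}$-, $w_i$- or on an $a_{3i+1}$-/$a_{3i+2}$-edge turns that event into $\swap$ (or, in the $\set$-pattern, into $\set$), while the two $a_{3i}$-edges are kept entirely on $0$. The support values dictated in $T_{i,0}$ are then propagated, again through Observation~\ref{obs:basics}, to the partner gadgets $H_{4i}$, $T_{i,1},T_{i,2},T_{i,3}$ and $D_{3i},D_{3i+1},D_{3i+2}$ so that $v_{4i},X_{i,0},X_{i,1},X_{i,2}$ and the neighbouring $a$-events keep one consistent interaction everywhere.

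The main obstacle is that a single window cannot realise all targets of $T_{i,0}$ at once, because consecutive $a$-events share a state ($t_{i,0,6}$ between $a_{3i},a_{3i+1}$, and $t_{i,0,10}$ between $a_{3i+1},a_{3i+2}$). Forcing $sig(a_{3i})=\free$ fixes $sup(t_{i,0,6})=0$, so lifting the target $t_{i,0,7}$ to $1$ makes the edge $t_{i,0,6}\fbedge{a_{3i+1}}t_{i,0,7}$ a $0/1$ jump, whence $sig(a_{3i+1})=\swap$ and therefore $sup(t_{i,0,9})\neq sup(t_{i,0,10})$; thus $t_{i,0,9}$ and $t_{i,0,10}$ can never be put on $1$ in the same region as $t_{i,0,7}$. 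The remedy, which I would make explicit, is to use (at least) two interleaved windows whose $0/1$ boundaries avoid the $a_{3i}$-edges: one window lifts $t_{i,0,7}$ (leaving $a_{3i+1}=\swap$), a second keeps $sup(t_{i,0,7})=0$ so that $a_{3i+1}=\nop$ and both $t_{i,0,9},t_{i,0,10}$ may be lifted; their support-$1$ sets together exhaust the targets. Verifying that each such window extends to a globally valid $\tau$-region is then the only laborious step and is discharged uniformly by Observation~\ref{obs:basics}.
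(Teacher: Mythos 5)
Your overall framing matches the paper's: inhibit $a_\ell$ only inside $T_{i,0}$ and $D_\ell$ (the complement region handles all other gadgets), observe that in $\sigma_1$ the only partial interaction is $\free$, so every inhibiting region must put all endpoints of $a_\ell$-labelled edges on $0$ and the targeted states on $1$, and then cover the targets of $T_{i,0}$ by several supports. Your first window (lift $t_{i,0,7},t_{i,0,8},t_{i,0,9}$, keep $t_{i,0,10}$ at $0$, making $a_{3i+1}$ a $\swap$-event) is essentially the paper's region $R^a_2$.

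Your second window, however, is infeasible, and this is a genuine gap. You keep $sup(t_{i,0,7})=0$ in order to make $a_{3i+1}$ a $\nop$-event and lift both $t_{i,0,9}$ and $t_{i,0,10}$. But the $X_{i,1}$-edges form the asymmetric pattern $t_{i,0,7}\fbedge{X_{i,1}}t_{i,0,8}\bedge{X_{i,1}}t_{i,0,9}$, and in $\tau=\{\nop,\set,\swap,\free\}$ the transition $t_{i,0,8}\edge{X_{i,1}}t_{i,0,7}$ with $sup(t_{i,0,7})=0$ rules out $sig(X_{i,1})=\set$; each remaining interaction then forces $sup(t_{i,0,9})=0$ whenever $sup(t_{i,0,7})=0$ (for $\nop$ and $\free$ one gets $sup(t_{i,0,8})=0$ and hence $sup(t_{i,0,9})=0$, for $\swap$ one gets $sup(t_{i,0,8})=1$ and again $sup(t_{i,0,9})=0$). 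So $t_{i,0,9}$ cannot be lifted in that window, and neither can $t_{i,0,10}$ alone: $sup(t_{i,0,9})=0\not=sup(t_{i,0,10})$ forces $sig(a_{3i+1})=\swap$ by Observation~\ref{obs:basics}, hence $sup(t_{i,0,6})\not=sup(t_{i,0,7})$, while $sup(t_{i,0,6})=0$ is already forced by $sig(a_{3i})=\free$ --- contradicting $sup(t_{i,0,7})=0$. The paper's region $R^a_3$ does the opposite of what you propose: it puts $t_{i,0,7},t_{i,0,8},t_{i,0,10}$ on $1$ and $t_{i,0,9}$ on $0$, so that the pattern $(*)$ applies to $X_{i,1}$ and assigns it the signature $\set$, which tolerates the jump $0\edge{\set}1$ on $t_{i,0,9}\edge{X_{i,1}}t_{i,0,8}$ next to the $1$--$1$ transitions on $t_{i,0,7}\fbedge{X_{i,1}}t_{i,0,8}$, while $a_{3i+1}$ is $\swap$ on both of its edges. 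Without this use of $\set$ on the variable events your window family cannot reach the targets $t_{i,0,10}$ and $t_{i,0,14}$, so the decisive idea of the paper's proof is missing.
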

\begin{proof}

Let $i\in \{0,\dots, 3m-1\}$.
The following two regions $R^a_0$ and $R^a_1$ inhibit $a_i$ in $D_i$.

Region $R^a_0$ defined by \textbf{sup}: $\{d_{j,0}, d_{j,2}, d_{j,3}, d_{j,4}, d_{j,7}, d_{j,8} \mid j\in \{0,\dots, 3m-1\}\}$, $\{t_{j,0,0}, t_{j,0,16} \mid j\in \{0,\dots, m-1\}\}$, $\{ h_{j,0}, h_{j,4} \mid j\in \{0,\dots, 4m-1\}\}$, $\{f_{0,0}, f_{0,4}, f_{0,5}, f_{0,6}, f_{1,0}, f_{1,4}\}$, $\{g_{j,0}, d_{j,7} \mid j\in \{0,\dots, m-1\}\}$,
\textbf{targets}: $\{d_{j,0}, d_{j,2}, d_{j,3}, d_{j,4}, d_{j,7}, d_{j,8}\}$

Region $R^a_1$ defined by \textbf{sup}: $S(F_2)$, $\{d_{j,0}, d_{j,1}, d_{j,3}, d_{j,4}\mid  j\in \{0,\dots, 3m-1\} \}$, $\bigcup_{j=0}^{m-1}S(G_j)$, 
\textbf{targets}: $\{d_{j,1}\}$

The following three regions $R^a_2, R^a_3$ and $R^a_4$ are dedicated to the inhibition of $a_{3i}$ at the remaining states of $T_{i,0}$.
Notice that the region $R^v_4$ inhibits $a_{3i}$ at $t_{i,0,4}$.

Region $R^a_2$ defined by \textbf{sup}: all targets plus $\{h_{4i,0}, h_{4i,1}, h_{4i,2}\}$, $\{d_{j,6}, d_{j,7}, d_{j,8}\mid j \in \{3i+1, 3i+2\} \}$, $\{g_{i,6}, g_{i,7}, g_{i,8}\}$, 
\textbf{targets}: $\{t_{i,0,0}, t_{i,0,1}, t_{i,0,7}, t_{i,0,8}, t_{i,0,9} , t_{i,0,11}, t_{i,0,12}, t_{i,0,13}, t_{i,0,15}, t_{i,0,16}, t_{i,0,17} \}$

Region $R^a_3$ defined by \textbf{sup}: $\{  t_{i,0,7}, t_{i,0,8}, t_{i,0,10},  t_{i,0,14}, \dots , t_{i,0,17} \}$, $\bigcup^{m-1}_{j=0}(S(T_{j,1})\cup S(T_{j,2})\cup S(T_{j,3}) )$, $\bigcup^{m-1}_{i\not=j=0}S(T_{j,0})$, $\{d_{j,6}, d_{j,7}, d_{j,8}\mid j\in \{3i+1, 3i+2\} \}$, 
\textbf{targets}: $\{t_{i,0,10}, t_{i,0,14}\}$

The following three regions $R^a_5, R^a_6$ and $R^a_7$ are dedicated to the inhibition of $a_{3i+1}$ at the remaining states of $T_{i,0}$.
Notice that the region $R^v_4$ inhibits $a_{3i+1}$ at $t_{i,0,8}$.

Region $R^a_5$ defined by \textbf{sup}: all targets plus  $\{d_{j,6}, d_{j,7}, d_{j,8}\mid j\in \{3i, 3i+2\} \}$, $\{g_{i,6}, g_{i,7}, g_{i,8}\}$, 
\textbf{targets}: $\{t_{i,0,3}, t_{i,0,4}, t_{i,0,5}, t_{i,0,11}, t_{i,0,12} , t_{i,0,13}, t_{i,0,15}, t_{i,0,16}, t_{i,0,17}\}$,

Region $R^a_6$ defined by \textbf{sup}: $\{  t_{i,0,0},\dots,  t_{i,0,4} \}$, $\bigcup^{m-1}_{j=0}(S(T_{j,1})\cup S(T_{j,2})\cup S(T_{j,3}) )$, $\bigcup^{m-1}_{i\not=j=0} S(T_{j,0})$,
\textbf{targets}: $\{t_{i,0,0}, t_{i,0,1}, t_{i,0,2}\}$,

Region $R^a_7$ defined by \textbf{sup}: $\{  t_{i,0,11}, t_{i,0,12}, t_{i,0,14},\dots, t_{i,0,17} \}$, $\bigcup^{m-1}_{j=0}(S(T_{j,1})\cup S(T_{j,2})\cup S(T_{j,3}) )$, $\bigcup^{m-1}_{i\not=j=0} S(T_{j,0})$, $\{d_{j,6}, d_{j,7}, d_{j,8}\mid j\in \{3i+2\} \}$,
\textbf{targets}: $\{t_{i,0,14} \}$, 
Finally, the inhibition of $a_{3i+2}$ in $T_{i,0}$ works symmetrically to $a_{3i}$ and $a_{3i+1}$.
\end{proof}

\begin{lemma}[Without proof] 
The unique events, substituted by underscores, are inhibitable. 
\end{lemma}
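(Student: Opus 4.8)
The plan is to treat each underscore event $u=u_A$ individually, where $A$ is the unique gadget of $U^{\sigma_1}_\varphi$ applying it. By item~2 of Lemma~\ref{lem:union_validity}, $u$ occurs precisely at the single bidirectional edge $s\fbedge{u}s^A_0$ incident to the initial state $s^A_0$ of $A$ and nowhere else in the union; hence $u$ must be inhibited at every state other than $s$ and $s^A_0$. First I would dispose of all states outside $A$ by the generic region from the beginning of this section: the support $sup=S(U^{\sigma_1}_\varphi)\setminus S(A)$ puts both endpoints of the $u$-edge at support~$0$, so the convention of Section~\ref{sec:compressed} forces $sig(u)=\free$, which inhibits $u$ at every state of support~$1$, that is, at all states not belonging to $A$. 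Only the states of $A$ not incident to its $u$-edge then remain.

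For these internal states I would use that $\free$ is the only partial interaction of $\sigma_1$, so inhibiting $u$ at a state $s'$ of $A$ reduces to exhibiting a region with $sig(u)=\free$ and $sup(s')=1$; keeping the two $u$-edge endpoints at support~$0$ already secures $sig(u)=\free$. The useful observation is that several of the $\free$-regions displayed in the preceding lemmata do exactly this: for instance $R^v_1$ leaves $h_{j,4},h_{j,5}$ at support~$0$ while lifting $h_{j,0},h_{j,1}$ to $1$, and therefore assigns $sig(u_j)=\free$ and inhibits the underscore event $u_j$ of $H_j$ at $h_{j,0}$ and $h_{j,1}$ in passing. Reusing such regions gadget by gadget, I would cover most internal states, and for the few that remain I would add one tailored region per gadget built on the fact that every gadget is an acyclic bidirectional path: put support~$1$ on an initial segment ending at the target and $0$ on the suffix carrying the $u$-edge, letting the lone support change fall on an edge whose label is set to $\swap$.

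The main obstacle is the consistency of the shared, non-unique events, which recur both inside $A$ and across the union: for example $k$ labels two edges of $H_j$, so a support that stays constant along the whole initial stretch would force $sig(k)=\swap$ on a same-support edge, which Observation~\ref{obs:basics} forbids. I would resolve this exactly as in the earlier lemmata, namely by distributing the internal states over a small number of regions and routing each support change through an edge whose label does not simultaneously sit on a same-support edge; since the gadgets are acyclic, no cycle can clash with a $\swap$-signature and the chosen supports extend freely over the remaining gadgets. It then only remains to verify that the leftover events receive signatures from $\{\nop,\set,\free\}$ that are valid in $\tau$ — the same routine bookkeeping already carried out for the non-unique events — and the corresponding claim for $U^{\sigma_2}_\varphi$ follows dually with $\free$ replaced by $\used$.
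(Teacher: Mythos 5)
First, note that the paper itself states this lemma explicitly \emph{without} proof, so there is no authorial argument to measure you against; what can be judged is whether your sketch would actually close the gap the paper leaves open. Your framing is correct and is the natural one: since every interaction of $\tau\in\sigma_1$ except \free\ is totally defined, inhibiting the unique event $u$ of a gadget $A$ reduces to exhibiting, for each state $s'\notin\{s,s^A_0\}$, a region with $sig(u)=\free$ and $sup(s')=1$, and $sig(u)=\free$ is secured exactly by keeping both endpoints of the single edge $s\fbedge{u}s^A_0$ at support $0$. Your first step is sound: the support $S(U^{\sigma_1}_\varphi)\setminus S(A)$ is the generic region the paper already invokes at the start of this section, it yields $sig(u)=\free$ and $\nop$ for every other event, and it disposes of all states outside $A$. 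Your concrete observation about $R^v_1$ also checks out: there the $u_j$-edge of $H_j$ lies entirely in the zero part, so $sig(u_j)=\free$ and $u_j$ is inhibited at $h_{j,0},h_{j,1}$ for free.

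The gap is in the second half. The entire content of the lemma is the existence, for \emph{every} internal state of \emph{every} gadget, of a region that places that state at support $1$ while holding the gadget's $u$-edge at $0$, with signatures that are simultaneously consistent for all shared events ($k$, $m$, $z$, the $q_i$, the $v_j$, $w_j$, $a_\ell$, and the variable events) across \emph{all} gadgets in which they occur. You correctly identify this as the obstacle --- e.g.\ separating $h_{j,3}$ from $h_{j,4}$ forces $sig(k)=\swap$, which by Observation~\ref{obs:basics} then forces every one of the roughly $2(4m)+3+2+2(4m)+2m$ many $k$-edges throughout the union to be a support change --- but you then assert that it can be resolved ``exactly as in the earlier lemmata'' without producing the regions. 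That assertion is in fact true (one can check, for instance, that the assignment $sig(k)=\swap$ and $sig(e)=\nop$ for all other non-unique events extends to a globally consistent region covering most internal states at once), but nothing in your text verifies it, and this case-by-case construction is precisely the kind of explicit bookkeeping the surrounding lemmata do carry out for the non-unique events. As written, the proposal is a correct plan with the decisive step deferred rather than a proof; to complete it you would need to write down the handful of tailored supports per gadget and confirm, as the paper does elsewhere, that every shared event receives a signature valid on all of its occurrences.
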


Finally, we argue that the $\tau$-ESSP of $U^{\sigma_1}_\varphi$ implies its $\tau$-SSP:

\begin{lemma}
If $U^{\sigma_1}_\varphi$ has the $\tau$-ESSP then it has $\tau$-SSP. 
\end{lemma}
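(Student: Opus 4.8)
The plan is to reduce the $\tau$-SSP to separating, within each component TS of $U^{\sigma_1}_\varphi$, every pair of distinct states, and then to discharge these separations by \emph{reusing} the $\tau$-regions already built for the $\tau$-ESSP. Recall that, for a union, the $\tau$-SSP only demands separation of states lying in a common $A_i$; moreover every region presented according to Section~\ref{sec:compressed} assigns a concrete support to \emph{all} of $S(U^{\sigma_1}_\varphi)$, so each ESSP region is simultaneously a candidate separator, and two states are separated by it precisely when its support contains exactly one of them. It therefore suffices to check that the supports of the regions $R^k_2, R^k_3, R^v_1, R^v_2, R^v_4, R^m_1, R^m_3, R^a_2,\dots,R^a_7$ (together with the unique-event regions of the preceding ``without proof'' lemma), restricted to each gadget, realize enough distinct $0/1$-cuts to split every pair.

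The mechanism that makes this work is the linear (chain) shape of every gadget $H_j$, $F_0$, $F_1$, $F_2$, $G_j$, $D_\ell$, $T_{i,1}$, $T_{i,2}$, $T_{i,3}$ and $T_{i,0}$: consecutive states are joined by a (mostly bidirectional) edge, so by Observation~\ref{obs:basics} any support that ``cuts'' the chain between two states $s \ne s'$ separates them, and the compressed convention of Section~\ref{sec:compressed} turns such a support into a genuine $\tau$-region, signing every crossing edge with $\swap$. Because the ESSP regions are already globally consistent on the shared events, their restrictions to a single gadget are legitimate cuts, and I only have to locate, for each pair, one listed region whose support separates it. For instance, in $H_j$ the region $R^v_1$ isolates $\{h_{j,0},h_{j,1}\}$, the model region $R^k_3$ isolates $\{h_{j,2}\}$, and $R^m_1, R^m_3$ split $\{h_{j,3},h_{j,4},h_{j,5}\}$ internally, which together separate all but the pair $h_{j,4},h_{j,5}$; the latter, joined only by the unique event substituted by an underscore, is separated by the corresponding unique-event region. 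The gadgets $F_0, F_1, F_2, G_j, D_\ell, T_{i,1}, T_{i,2}, T_{i,3}$ are handled in the same spirit, using the $q$-, $m$-, $y$-, $p$-, $z$- and $a$-regions for the interior cuts and the unique-event regions for the edge incident to the initial state.

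The only genuinely laborious case is $T_{i,0}$, whose chain of eighteen states is arranged around the subsequence $P_i$. Here the interior pairs are separated by combining $R^k_2$ (which cuts at $t_{i,0,6}, t_{i,0,10}, t_{i,0,14}$), the accept regions $R^a_2,\dots,R^a_7$ and $R^v_4$ (which cut at the $a$- and $v$-labelled edges), and the model-dependent region $R^k_3$, whose support $S_i$ is tailored, according to which variable of $\zeta_i$ lies in $M$, to isolate initial segments of $T_{i,0}$ of three different lengths. One checks that these cuts, taken together, split every pair of the chain, the remaining initial-state pair again falling to a unique-event region.

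I expect the main obstacle to be precisely this coverage bookkeeping: one must verify, gadget by gadget and pair by pair, that some already-constructed region places the two states on opposite sides, and in particular that the three shapes of $S_i$ afforded by a one-in-three model $M$ suffice for the long chain $T_{i,0}$. No new region ideas are needed beyond those of the ESSP proof and the unique-event regions, so the difficulty lies entirely in the exhaustiveness of the case analysis rather than in any single separation.
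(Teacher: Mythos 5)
Your proposal follows essentially the same route as the paper: assume the $\tau$-ESSP, note that the already-constructed inhibiting regions exist, and then verify gadget by gadget that their supports cut every pair of states of that gadget; the paper does precisely this via a table assigning a small list of reused regions to each gadget (your particular selections differ slightly, e.g.\ $R^v_1,R^m_1$ where the paper uses $R^{q_0}_0,R^{q_0}_1$ for $H_j$, but they induce the same cuts). Two small points. First, you should make explicit the step the paper states up front: the ESSP gives inhibition of $k$ at $h_{0,2}$, hence by Conditions~3 and~5 a one-in-three model $M$, and only therefore does the model-dependent region $R^k_3$ exist; your phrase ``regions already built for the ESSP'' silently uses this. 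Second, your mechanism for the one pair left uncovered in each gadget --- the initial state and its neighbour, joined by the unique underscore event $u_j$ --- does not work as stated: for $\tau=\{\nop,\set,\swap,\free\}$ a region inhibiting $u_j$ must have $sig(u_j)=\free$ and hence support $0$ at \emph{both} endpoints of the $u_j$-edge, so the unique-event regions never separate that pair. The separation is nevertheless trivial (a fresh region with $sig(u_j)=\swap$ and support differing exactly across that edge), which is what the paper's opening remark that ``the initial state is obviously separable from all other states'' covers; with that substitution your argument is sound.
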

\begin{proof}
For a start, the initial state of any TS $A$ implemented by $U^{\sigma_1}_\varphi$ is obviously separable from all the other states of $A$.
Moreover, if $U^\sigma_{\varphi}$ has the $\tau$-ESSP then $k$ is inhibitable at $h_{0,2}$ and $\varphi$ has a one-in-three model $M$.
Thus, all the formerly presented regions exist.
We argue, that these regions justify the $\tau$-SSP of $U^{\sigma_1}_\varphi$.
To show we argue for every gadget $A$ of $U^{\sigma_1}_\varphi$ that it has the $\tau$-SSP by regions of $U^{\sigma_1}_\varphi$.
The following table present the corresponding regions.

\begin{center}
\begin{tabular}{  p{6cm}    p{6cm}}
TS & Separating Regions \\ \hline
$H_j$, $j\in \{0,\dots, 4m-1\}$ 
&
$R^{q_0}_0, R^{q_0}_1, R^k_3$ \\ \hline
$F_0$
&
$R^{q_0}_0, R^{q_0}_1, R^{q_0}_2, R^k_3$\\ \hline
$F_1$
&
$R^{q_0}_0,R^{q_0}_2, R^k_3$ \\ \hline
$F_2$
&
$R^{q_0}_0, R^{q_0}_2, R^{q_0}_3, R^{q_1}_3, R^{q_1}_4$\\ \hline
$D_j$, $j\in \{0,\dots, 3m-1\}$ 
&
$R^{q_0}_1, R^{q_0}_3,  R^k_2,R^k_3, R^a_2$\\ \hline
$G_j$,  $j\in \{0,\dots, m-1\}$ 
&
$R^{q_0}_1, R^{q_1}_0, R^k_2, R^k_3, R^a_2, R^w_1$\\ \hline
$T^{\sigma}_{i,\alpha}$, $i\in \{0,\dots, m-1\}, \alpha\in \{1,2,3\}$ & 
$R^k_3,  R^v_4$\\ \hline
$T_{i,0}$, $i\in \{0,\dots, m-1\}$ & $R^k_3, R^v_4$ and the regions of Lemma~\ref{lem:event_a}
\end{tabular}
\end{center}
\end{proof}

\section{Conclusion and Future Work}%

In this paper, we continue our work of \cite{DBLP:conf/tamc/TredupR19,DBLP:conf/apn/TredupR19} and show the NP-completeness of $\tau$-feasibility for the boolean types $\tau=\{\nop,\swap\}\cup \omega$ with $\omega\subseteq \{\res,\set, \used,\free\}$, $\omega \cap \{\res,\set\}\not=\emptyset$ and $\omega \cap \{\used,\free\}\not=\emptyset$.
So far this settles the computational complexity of 120 of 256 possible boolean types of nets.
In addition, the presented reductions make sure that the resulting TSs are $2$-grade, which is a strong restriction.
This basically rules out the grade as a parameter for FPT approaches for all considered net types.
It remains future work to investigate the computational complexity of feasibility for the other 136 boolean types of nets.

\section{Acknowledgements}%
I'm grateful to the reviewers for their helpful comments.

\bibliography{myBibliography}%

\end{document}